\documentclass[a4paper,UKenglish,cleveref, autoref, thm-restate]{lipics-v2021}

\pdfoutput=1 
\hideLIPIcs  

\title{On (Directed) Width-Parameters of Geometric Spanners} 

\author{Kevin Buchin}{TU Dortmund, Germany}{kevin.buchin@tu-dortmund.de}{https://orcid.org/0000-0002-3022-7877}{}
\author{Carolin Rehs}{ TU Dortmund, Germany \and TU Eindhoven, The Netherlands\and \url{https://carolin.rehs.me/}}{carolin.rehs@tu-dortmund.de}{https://orcid.org/0000-0002-8788-1028}{Partly supported by the PRIME program{me} of the German Academic Exchange Service (DA{AD}) with funds from the German Federal Ministry of Research, Technology and Space (BM{F}TR)}
\author{Torben Scheele}{TU Dortmund, Germany}{torben.scheele@tu-dortmund.de}{https://orcid.org/0009-0006-6119-6598}{Funded by the Deutsche Forschungsgemeinschaft (DFG, German Research Foundation)  -- project number 550144388.}

\authorrunning{K.~Buchin, C.~Rehs, T.~Scheele} 

\Copyright{Kevin Buchin, Carolin Rehs, Torben Scheele} 

\ccsdesc[500]{Theory of computation~Computational geometry}

\keywords{Computational Geometry, Geometric Spanner, Width-parameters} 

\category{} 

\relatedversion{} 

\acknowledgements{
We want to thank all participants of KWCG 2025 for the helpful discussions on bounded dilation path spanners and Sariel Har-Peled for pointing out \cite{Matousek1990} to us.}

\nolinenumbers 

\EventEditors{John Q. Open and Joan R. Access}
\EventNoEds{2}
\EventLongTitle{42nd Conference on Very Important Topics (CVIT 2016)}
\EventShortTitle{CVIT 2016}
\EventAcronym{CVIT}
\EventYear{2016}
\EventDate{December 24--27, 2016}
\EventLocation{Little Whinging, United Kingdom}
\EventLogo{}
\SeriesVolume{42}
\ArticleNo{23}

\usepackage{hyperref}
\usepackage{amsthm} 
\usepackage{cleveref}
\usepackage{thmtools} 
\usepackage{thm-restate}

\usepackage{tcolorbox}

\usepackage[noend]{algpseudocode}
\usepackage{algorithm}

\algrenewcommand\algorithmicrequire{\textbf{Input:}}
\algrenewcommand\algorithmicensure{\textbf{Output:}}
\algnewcommand\algorithmicforeach{\textbf{for each}}
\algdef{S}[FOR]{ForEach}[1]{\algorithmicforeach\ #1\ \algorithmicdo}

\theoremstyle{plain}

\newcommand{\lab} {\text{lab}}
\newcommand{\CW}{\mbox{CW}}

\crefname{lemmastar}{Lemma}{Lemmas}
\Crefname{lemmastar}{Lemma}{Lemmas}

\crefname{theoremstar}{Theorem}{Theorems}
\Crefname{theoremstar}{Theorem}{Theorems}

\crefname{observationstar}{Observation}{Observations}
\Crefname{observationstar}{Observation}{Observations}

\crefname{propositionstar}{Proposition}{Propositions}
\Crefname{propositionstar}{Proposition}{Propositions}

\crefname{corollarystar}{Corollary}{Corollaries}
\Crefname{corollarystar}{Corollary}{Corollaries}

\crefname{fact}{Fact}{Lemmas}
\Crefname{fact}{Fact}{Lemmas}

\declaretheorem[numbered=no]{statement}{}

\newcommand{\dtw}{\text{dtw}}
\newcommand{\dpw}{\text{dpw}}
\newcommand{\dw}{\text{dw}}
\newcommand{\tw}{\text{tw}}
\newcommand{\pw}{\text{pw}}
\newcommand{\bw}{\text{bw}}
\newcommand{\cw}{\text{cw}}
\newcommand{\rw}{\text{rw}}
\newcommand{\clw}{\text{clw}}

\newcommand{\td}{\text{td}}

\newcommand{\bigO}{\mathcal{O}}
\renewcommand\P{{\sf P}}

\usepackage{wrapfig}
\usepackage{float}

\begin{document}

\maketitle

\begin{abstract}
To speed up algorithms on geometric graphs, it is common to approximate the complete Euclidean graph while maintaining certain geometric properties. A (directed) $t$-spanner $G$ for a point set $P$ in the Euclidean space is a (directed) graph such that for every pair of points, the shortest path in $G$ is at most a factor $t$ longer than the Euclidean distance between those points.

In this paper, we investigate $t$-spanners that are bounded by certain graph parameters. Let $\kappa$ be a graph parameter. We show that for path-width, branch-width and cut-width there is an $\mathcal{O}(n/k^{d/(d-1)})$-spanner $G$ on $P$ with $\kappa(G)=k$ and that this is asymptotically worst-case optimal.
In $\mathbb{R}^2$ we show the same bounds for planar graphs of clique-width or rank-width $k$. 
In contrast, for tree-depth, we show that there are sets of points for which the dilation cannot be bounded. Therefore, we investigate computing a spanner with tree-depth $k$ and minimum dilation. We show that already for tree-depth $3$ this problem is NP-hard to approximate within any factor strictly less than $\sqrt{2}$, and present an XP-algorithm to compute for a given tree-depth $k$ a graph with dilation at most $2t^*$, where $t^*$ is the minimum dilation. 

We further extend these results to obtain directed $\mathcal{O}(n/k^{d/(d-1)})$-spanners $G$ with $\kappa(G)=k$ for $\kappa$ being directed tree-width, directed path-width or DAG-width and show that also in the directed case, this is asymptotically worst-case optimal.

\end{abstract}

\section{Introduction}

Geometric graphs are of high interest for many applications, such as road and train networks, robot motion planning, mobile ad-hoc networks and many more. To work with and run algorithms on these graphs, it is often useful to approximate the complete Euclidean graph with a sparser subgraph, while maintaining certain geometric properties. In particular, the shortest-path distance between two points should not get much larger than their Euclidean distance: graphs with that property are called geometric spanners. 

There has been research on geometric spanners for multiple decades, see \cite{DBLP:journals/comgeo/BoseS13, DBLP:books/daglib/0017763} for surveys. In particular planar spanners, spanners with bounded edge number or spanners with bounded degree have been investigated. 
Most recently, at SoCG 2025, geometric spanners with bounded tree-width have been investigated \cite{DBLP:conf/compgeom/BuchinRS25}. From an algorithmic point of view, spanners with bounded tree-width admit polynomial time algorithms for many in general NP-hard problems~\cite{DBLP:journals/jct/RobertsonS83, DBLP:journals/jal/RobertsonS86}. However, in \cite{DBLP:conf/compgeom/BuchinRS25}, while giving a $\bigO(n/k^{d/(d-1)})$-spanner of tree-width $k$, the authors also show that there are sets of points on which there is a lower bound of $\Omega(n/k^{d/(d-1)})$ for the dilation of a tree-width $k$ spanner. 
Therefore, it seems most natural to investigate further graph parameters, hoping to obtain different bounds. 

In this paper, we study branch-width, path-width, cut-width, clique-width, rank-width, and tree-depth. Interestingly, we show that the tight trade-off between tree-width and dilation exists for several other parameters. Most surprisingly, we obtain exactly the same upper and lower bounds for more general and more restrictive parameters such as path-width, cut-width and branch-width. 
Since the clique-width for the complete graph is $2$ and the rank-width is $1$, bounding these parameters does not yield an approximation of a geometric graph. However, in $\mathbb{R}^2$, we investigate these parameters together with the common criterion of planarity and again obtain the same upper and lower bounds. 

In contrast to that, the parameter of tree-depth stands out: There are sets of points for which the dilation of a spanner with tree-depth $k$ cannot be bounded and therefore we cannot give an algorithm with dilation guarantees. We therefore investigate the problem of computing a spanner with tree-depth $k$ and minimum dilation. We show that this problem in general metric spaces is NP-hard to approximate within any factor strictly less than $\sqrt{2}$, and we present an XP-algorithm with respect to tree-depth that computes a graph with tree-depth $k$ and a dilation that 2-approximates the minimum dilation. 

For many applications, directed graphs can be even more useful to model geometric data. For instance, road networks include one-way roads, communication networks often rely on one-way communication lines. Hence, also directed and oriented spanners recently obtained some attention \cite{DBLP:conf/compgeom/BuchinKMORSW25, DBLP:journals/algorithmica/BuchinGKPRRW26}. In this paper, we therefore also investigate directed graph parameters. While it seems natural that asymptotically similar constructive results can be obtained by replacing every edge in an undirected spanner with a bidirected arc, one would expect that smaller constructions are possible. Instead, we can show that there are sets of points in $\mathbb{R}^d$, such that constructing a spanner $D$ of directed tree-width, directed path-width or DAG-width $k$ yields a lower bound of $\Omega(n/k^{d/(d-1)})$ for the dilation of $D$. 
An overview of the dilation bounds for (directed) spanners bounded in the different parameters can be found in the appendix.

\section{Technical Contribution}

From a technical point of view, it seems most surprising that the dilation lower bounds for tree-width can be extended to directed tree-width. While the result looks quite similar to the undirected cases, we need some structural insights that have not been known before. 
Specifically, the challenge is that directed tree-width as defined in \cite{JRST01} is not closed under contracting bidirected edges. We show that, using a slightly different definition for directed tree-width~\cite{DBLP:conf/soda/GiannopoulouKKK22} that is equivalent up to a constant factor to the definition in~\cite{JRST01}, directed tree-width is indeed closed under strongly connected contractions (\cref{lem:dtw_contraction}). By that, we obtain a lower bound for the dilation of directed tree-width (\cref{thm:dtw_lower}).

To obtain a spanner with bounded path-width, we show that there is a trade-off between path-width and dilation using similar techniques as in \cite{DBLP:conf/compgeom/BuchinRS25} or \cite{Aronov2008}. However, to use those techniques we need a path with a dilation that is bounded at most linearly in the number of input points. To obtain this, we use a result of \cite{Matousek1990} that embeds an $n$-point metric into the real number line, obtaining a slightly better dilation by a more careful analysis (\cref{mat3.2}). 

While the results for several other graph parameters follow from relations between those parameters, tree-width and path-width, the parameter of tree-depth stands out. We construct a point set where every geometric spanner with tree-depth $k$ has dilation at least $1+2/\varepsilon$ (for $\varepsilon > 0$). Hence, we cannot hope for an algorithm that gives dilation guarantees, but turn our attention to obtaining the best possible dilation. 
However, as we show by a reduction from the \textsc{SetCover} problem, for points in a general metric space even deciding whether there is a $2$-spanner of tree-depth 3 is NP-hard. Our construction implies that the problem is NP-hard to approximate within any factor strictly less than $\sqrt{2}$.
Thus, we look into constant-factor approximation, giving an XP-algorithm with respect to tree-depth that computes a spanner with tree-depth $k$ in $\bigO(2^{k}n^{2^{k}-1})$ time and dilation at most $2 t^*$, where $t^*$ is the minimum dilation (\cref{thm:tree-depth-xp}). To show this, we first give a polynomial-time algorithm that, for some $t \geq 1$, tests whether there is a spanner of tree-depth $k$ using Steiner points that has dilation at most $t$ (\cref{thm:tree-depth-xp_Steiner}).

\section{Preliminaries}

In the following we denote an undirected simple graph as $G=(V,E)$ and use $V(G)$ for the set of vertices of $G$ and $E(G)$ as the set of edges of $G$. By $\Delta(G)$ we denote the maximum degree of $G$. A directed graph or digraph is denoted as $D=(V,E)$. Given a set of points $\P\subset \mathbb{R}^d$ a geometric spanner $G$ with dilation $t$ is a graph on $\P$ s.t. for every pair of points $p,q\in \P$, it holds that $d_G(p,q)\leq t \cdot \|pq\|$, where $\|pq\|$ is the Euclidean distance between $p$ and $q$ and $d_G(p,q)$ is the sum of the distances along the edges of a shortest path from $p$ to $q$ in $G$. A directed geometric spanner is defined analogously for digraphs $D$. 

The definitions, related work and short discussions on the used graph parameters can be found in the appendix. For surveys on (directed) graph parameters see \cite[Chapter 6]{DE14} and \cite{Bodlaender24}. Definitions of tree-width and path-width \cite{DBLP:journals/jct/RobertsonS83, DBLP:journals/jal/RobertsonS86}, branch-width \cite{RobertsonS91}, cut-width \cite{Chung85}, clique-width \cite{CO00}, rank-width \cite{Oum17}, tree-depth \cite{Nesetril}, directed tree-width \cite{DBLP:conf/soda/GiannopoulouKKK22}, directed path-width \cite{Bar06}, and DAG-width \cite{BDHK06,Obd06} can be found in the named literature.  
Some definitions and notations are needed to understand proofs - those are explicitly given before used.

\section{Bounded Path-Width Spanners}

Path-width as a parameter is strongly related to tree-width. It is known that computing a geometric spanner of bounded tree-width with minimum dilation is NP-hard, even for tree-width~1 \cite{Klein2007, CHEONG2008188}. The same is true for bounded path-width spanners, even for path-width~1 \cite{GKM07}. By definition, every path decomposition for some graph $G$ is also a tree decomposition, since every path is a tree. 
Therefore, the lower bound for tree-width presented in \cite{DBLP:conf/compgeom/BuchinRS25} extends to path-width. However, since path-width is significantly more restricted than tree-width (as even for trees the path-width is not bounded by a constant), it seems difficult to obtain the same upper bound. Surprisingly, it is possible to find a spanner with path-width bounded by $k$ and the same dilation as for the tree-width, as we show in the following:

\begin{theorem}
\label{thm:pwupper}
    Given a set of $n$ points $\sf P\subset\mathbb R^d$ and some positive integer $k\leq n^{1-1/d}$, there is a geometric spanner of path-width $k$ and dilation $\bigO(n/k^{d/(d-1)})$.
\end{theorem}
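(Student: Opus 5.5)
Our plan follows the two ingredients named in the technical overview: a path spanner on $\sf P$ whose dilation grows at most linearly in the number of points, and a hierarchical grid construction in the style of \cite{DBLP:conf/compgeom/BuchinRS25,Aronov2008}.

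\textbf{Step 1: a linear-dilation Hamiltonian path.} We first need a Hamiltonian path on any $m$-point set $Q$ with dilation $\bigO(m)$. For this we invoke the embedding of \cite{Matousek1990} of an $m$-point metric into the real line, sharpened as in \cref{mat3.2}. Sorting the embedded points and connecting consecutive ones gives a path $\pi$ with $d_\pi(p,q)\le c\, m\,\|pq\|$. Such a path has path-width $1$.

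\textbf{Step 2: the construction.} We partition $\sf P$ into $r=\Theta(k^{d/(d-1)})$ cells of roughly $n/r$ points each. We use a kd-tree style recursive median split, so each cell has $\bigO(n/r)$ points. Inside each cell we put the path from Step 1, which gives dilation $\bigO(n/r)=\bigO(n/k^{d/(d-1)})$ for pairs in the same cell. We then pick a representative in each cell and join the $r$ representatives by a skeleton of path-width $\bigO(k)$ with constant dilation. For pairs in different cells, the route goes inside the first cell to its representative, across the skeleton, and inside the second cell. The detour inside a cell is bounded by the cell diameter times $\bigO(n/r)$.

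Making this detour comparable to $\|pq\|$ is the delicate part. The intended remedy follows \cite{DBLP:conf/compgeom/BuchinRS25}. We choose the partition so that the $r$ cells form a grid-like structure of side $r^{1/d}$. We take as skeleton a graph on the representatives that contains the grid graph on $r$ nodes, whose path-width is $\Theta(r^{(d-1)/d})=\Theta(k)$. Adjacent cells are then linked, so a pair $p,q$ at distance comparable to the cell size passes through $\bigO(1)$ cells. Far pairs pay cell-diameter detours that are small relative to $\|pq\|$ times $n/r$.

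\textbf{Step 3: bounding path-width.} The whole graph is the union of the skeleton and $r$ paths, each attached to the skeleton at its representative. Adding pendant paths to a graph increases its path-width by at most $1$. We therefore take a path decomposition of the skeleton, which sweeps the grid by hyperplane layers and has bags of size $\bigO(r^{(d-1)/d})$. When the representative of a cell appears, we insert the bags for that cell's path, each containing the current bag plus two consecutive path vertices. This gives path-width $\bigO(k)$. Rescaling $r$ by a constant makes it at most $k$, and padding brings it to exactly $k$ if needed. The hypothesis $k\le n^{1-1/d}$ guarantees $r\le n$.

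\textbf{Main obstacle.} The main obstacle is the inter-cell analysis in Step 2. We must ensure that a point's detour to its representative is bounded by $\bigO(n/r)\cdot\|pq\|$ rather than by $\bigO(n/r)\cdot\mathrm{diam}(\text{cell})$. Our first attempt is to use an adaptive quadtree or grid in which cells near $p$ are small relative to $\|pq\|$. If that fails, we would adopt exactly the separator-based argument of \cite{DBLP:conf/compgeom/BuchinRS25}. There, a point set is split by balanced separators of size $k$, recursion is done to depth $\log$, and the separator vertices are linked. The only change would be to replace the tree-like recursion by concatenating the decomposition along a sweep order, so that it becomes a path decomposition.
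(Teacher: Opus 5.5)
\textbf{There is a genuine gap.} It is exactly the step you call the main obstacle. With any geometric partition (kd-tree, grid, or adaptive quadtree), the dilation of pairs in different cells is not bounded by any function of $n$, and no skeleton on the representatives can fix this.

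In your graph, the only edges leaving a cell are skeleton edges at its representative. So every $p$--$q$ path with $p,q$ in different cells has length at least $\|pr_p\|+\|qr_q\|$. A cell has several boundary facets but only one representative. Hence the point set can be chosen so that two pairs at distance $\delta$ straddle two far-apart boundaries of the same cell. For one of these pairs the in-cell point is at distance $\Omega(\mathrm{diam})$ from the representative, which gives dilation $\Omega(\mathrm{diam}/\delta)$. This is unbounded since the spread is arbitrary.

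The plain bound $d_\pi(p,q)\le c\,m\|pq\|$ for the Matou\v{s}ek path does not help. It only bounds the detour to the representative by $\bigO(n/r)\cdot\|pr_p\|$, and $\|pr_p\|$ is unrelated to $\|pq\|$. Your fallback also misdescribes \cite{DBLP:conf/compgeom/BuchinRS25}: that construction starts from the EMST, as does the paper's.

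\textbf{What is missing.} The partition must be dictated by $\text{EMST}(\P)$, and you need the hierarchical form of \cref{mat3.2}, not just its dilation bound.
\begin{itemize}
\item The paper cuts the EMST into $m=\Theta(k^{d/(d-1)})$ subtrees of $\bigO(n/m)$ points each, and declares the endpoints of the cut edges representatives.
\item Inside each subtree it deletes edges from long to short whenever they lie on a path between two representatives. This continues until each part $T$ has a single representative $r_T$.
\item This guarantees the key property. For $p\in T$ and $q\notin T$, the EMST path from $p$ to $r_T$ uses only edges no longer than the longest edge on the EMST path from $p$ to $q$. That edge has length at most $\|pq\|$ by the cycle property.
\item \Cref{mat3.2} then gives more than dilation. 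For every power of two $s$, each component $\P_i$ of $(\text{EMST}(V(T)))[\le s]$ induces a connected subpath of $\pi(T)$ of length at most $2|\P_i|s$. This applies because $\text{EMST}(V(T))$ is a subgraph of $\text{EMST}(\P)$.
\item Hence $p$ reaches $r_T$ along its subtree-path at cost $\bigO(n/m)\cdot\|pq\|$.
\end{itemize}

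The at most $2m-2$ representatives are then joined by the greedy $3/2$-spanner. Its path-width is $\bigO(m^{1-1/d})=\bigO(k)$, by the separators of Le and Than together with Bodlaender's recursive path-decomposition. This replaces your grid skeleton, which would not have constant dilation on arbitrary representative positions anyway. Your Step~3 then goes through essentially as in the paper: insert the bags of each pendant path after a bag containing its representative.
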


The algorithm is based on a trade-off between path-width and dilation and includes similar ideas as the algorithm in~\cite{DBLP:conf/compgeom/BuchinRS25}. However, it is not possible to use that approach directly. To obtain a bounded tree-width spanner, the authors in~\cite{DBLP:conf/compgeom/BuchinRS25} start with a Euclidean minimum spanning tree, which given $n$ points has dilation $\bigO(n)$ and tree-width $1$. But since the path-width of a minimum spanning tree is not bounded through a constant (it can possibly be $\Omega(\log n)$), for a similar approach, it is necessary to compute a path spanner with a dilation that is linearly bounded in the number of points.
In \cite{Matousek1990}, Matoušek showed that it is possible to embed an $n$-point metric into the real number line with distortion $12n$, which means that for every point set in Euclidean space, a path of dilation $12n$ can be computed in $\bigO(n^2)$ time. In \cref{mat3.2}, we restate the statement in terms of our definitions and, by a more careful analysis, obtain a dilation of $4n$.

Given a set of points $\sf P$ we use $\text{EMST}(\sf P)$ to denote the Euclidean minimum spanning tree of $\sf P$ and given a geometric graph $G$ we use $G[\leq s]$ to denote the geometric graph obtained from $G$ by deleting all edges of length $>s$. Going further, we assume that the distance between two points in $\P$ is at least 1.

\begin{restatable}{lemmastar}{matousek}
\label{mat3.2}

Let ${\sf P}\subset \mathbb{R}^d$ be a set of points. There is a path $\pi$ containing all points of $\P$ such that for every $s$ that is a power of two the induced subgraphs $\pi[{\sf P}_i]$ are connected where ${\sf P}_1,\dots,{\sf P}_m$ denote the points of the connected components of $(\text{EMST}({\sf P}))[\leq s]$. The length of every path $\pi[{\sf P}_i]$ is $\leq 2|\P_i|\cdot s$ and the dilation is~$\leq 4|{\sf P}_i|$.
\end{restatable}{\let\thefootnote\relax\footnotetext{\hspace{-3.1mm}$^\ast$ \hspace{0.4mm}The proof of this result can be found in the appendix.}}
To compute a spanner of bounded path-width the algorithm starts by computing the EMST and splits it into multiple subtrees. These trees are then replaced by paths. Lastly, a spanner connecting these paths is computed.

\begin{algorithm}[H] \caption{}
\label{alg:higherdim}
\begin{algorithmic}[1]

\If {$k=1$}
\State {\Return path of Lemma \ref{mat3.2} on $\P$ with dilation $\leq 4n$}
\EndIf
\State $T_\text{MST} \leftarrow $ EMST(${\P}$)
\State $m \leftarrow \left\lceil \left(k/C\right)^{d/(d-1)} +1\right\rceil$ where $C$ is some constant that is chosen later.
\State Compute a set $\mathcal{T}$ of $m$ disjoint subtrees of $T_\text{MST}$, each containing ${O}(n/m)$ points, as for example explained in the proof of Lemma 8 in \cite{DBLP:conf/compgeom/BuchinRS25}.
\State For $T \in \mathcal{T}$ let $R(T)$ be the vertices in $T$ incident to edges removed by the previous step, i.e. the endpoints of the edges of $T_\text{MST}$ that connect two distinct disjoint subtrees of $\mathcal{T}$.
\State For each $T \in \mathcal{T}$ iterate through its edges $e$ from long to short. If $e$ lies on a path between two vertices in $R(T)$, remove the edge and thereby split $T$ into two new subtrees $T'$ and $T''$. Remove $T$ from $\mathcal{T}$ and insert $T'$ and $T''$. Set $R(T')=R(T)\cap V(T')$ and $R(T'')=R(T)\cap V(T'')$. This is done until for every $T\in \mathcal{T}$ we have $\vert R(T)\vert=1$.
\State For each $T\in \mathcal{T}$ let $\pi(T)$ be the path of Lemma \ref{mat3.2} on $V(T)$
\State $E'\leftarrow \bigcup_{T\in\mathcal{T}}\pi(T)$
\State $(R, E'') \leftarrow$ the greedy $3/2$-spanner for $R = \bigcup_{T\in\mathcal{T}}R(T)$
\State \Return $G=({\P},E'\cup E'')$
\end{algorithmic}
\end{algorithm}

To analyse the path-width of the spanner constructed by Algorithm \ref{alg:higherdim}, we use the fact that the greedy spanner has sublinear separators \cite{DBLP:journals/talg/LeT24} and a recursive construction for a path-decomposition (\cite{DBLP:journals/tcs/Bodlaender98}, Theorem 20 (iii)).

\begin{lemma}
\label{lem:greedypw}
Let ${\sf P}\subset \mathbb{R}^d$ be a set of $n$ points and $G$ the greedy $t$-spanner of $\sf P$ for some $t\in(1,3/2]$. $G$ has path-width $\bigO(n^{1-1/d})$.
\end{lemma}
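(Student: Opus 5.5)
The proof combines two ingredients the paper already points to: the separator theorem for greedy spanners of Le and Than~\cite{DBLP:journals/talg/LeT24}, and Bodlaender's recursive construction of path decompositions from separators (\cite{DBLP:journals/tcs/Bodlaender98}, Theorem 20 (iii)).

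\textbf{Step 1: separators.} By~\cite{DBLP:journals/talg/LeT24}, for constant $d$ and $t\in(1,3/2]$, every greedy $t$-spanner on $m$ points in $\mathbb{R}^d$ has a balanced separator of size $\bigO(m^{1-1/d})$. We need this for every subgraph, not just for $G$. Every subgraph of $G$ that we recurse on will be an induced subgraph $G[\P']$ for some $\P'\subseteq\P$. I would argue that $G[\P']$ inherits the separator bound, in one of two ways. First, the separator result of~\cite{DBLP:journals/talg/LeT24} may already be stated for $\tau$-lanky graphs or for all subgraphs of greedy spanners; if so, nothing more is needed. Otherwise, take a balanced separator of $G$ with respect to vertex weights concentrated on $\P'$ and restrict it to $\P'$. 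Either way, every $m$-vertex subgraph arising in the recursion has a $2/3$-balanced separator of size at most $c\,m^{1-1/d}$.

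\textbf{Step 2: recursion.} Now apply Bodlaender's Theorem 20(iii): if every subgraph on $m$ vertices has a balanced separator of size $\bigO(m^{\alpha})$ with $\alpha<1$, then the graph has path-width $\bigO(n^{\alpha})$. The self-contained argument runs as follows.

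\begin{itemize}
\item Pick a separator $S$ of $G$.
\item Recursively compute path decompositions of the components of $G-S$. Group them into at most two sides, each with at most $2n/3$ vertices.
\item Concatenate the decompositions of the pieces one after another, and add $S$ to every bag.
\end{itemize}

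Concatenation is valid because distinct components share no edges. Adding $S$ to every bag covers all edges incident to $S$. So the width satisfies
\[
W(n)\le W(2n/3)+c\,n^{1-1/d}.
\]
This is a geometric series with ratio $(2/3)^{1-1/d}<1$, so it solves to $W(n)=\bigO(n^{1-1/d})$, with the constant depending only on $d$.

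\textbf{Main obstacle.} The recursion is routine. The delicate point is Step 1: we must make sure the cited separator theorem applies to every subgraph met during the recursion, not only to the full greedy spanner. I would first check whether~\cite{DBLP:journals/talg/LeT24} states hereditary separators, for example via lankiness, which is preserved under taking subgraphs. If it does not, I would fall back on the weighted-separator restriction described in Step 1.
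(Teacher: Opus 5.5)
Your approach is essentially the paper's. You cite Le and Than's separator theorem in exactly the hereditary form you hoped for: every $k$-vertex subgraph of the greedy spanner has a $(1-\frac{1}{\eta_d2^{d+1}})$-balanced separator of size $c_d(t-1)^{1-2d}k^{1-1/d}$. You then feed it into Bodlaender's recursive construction, giving $w(n)\le c_d(t-1)^{1-2d}n^{1-1/d}+w\bigl((1-\frac{1}{\eta_d2^{d+1}})n\bigr)$. That the balance constant is this rather than $2/3$ is harmless. Note, though, that the hidden constant depends on $t$ through $(t-1)^{1-2d}$, not only on $d$.

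\textbf{The case $d=1$ is not covered.} The statement includes $d=1$, and your argument does not handle it. There the separator size is $c\,m^{0}=c$ and your ratio $(2/3)^{1-1/d}$ equals $1$. So $W(n)\le W(\beta n)+c$ only gives $\bigO(\log n)$, not the required $\bigO(1)$. The paper treats this case separately, observing that the greedy spanner is a path. To see this, take points $x_1<\dots<x_n$ on a line. Any edge other than $x_\ell x_{\ell+1}$ that crosses the gap between $x_\ell$ and $x_{\ell+1}$ is strictly longer. So the greedy algorithm must insert every consecutive edge, after which every other pair already has dilation $1$. Hence $G$ is a path and has path-width $1$.

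\textbf{Your fallback in Step 1 would not work.} Restricting a (weighted) separator of the whole spanner $G$ to $\P'$ keeps the balance. But its size is $\bigO(|\P|^{1-1/d})$, not $\bigO(|\P'|^{1-1/d})$. So separator sizes would not shrink down the recursion, and summing over the $\bigO(\log n)$ levels you would only get $\bigO(n^{1-1/d}\log n)$. The hereditary version of the separator theorem (lankiness is closed under subgraphs) is genuinely needed. It is exactly what Le and Than prove and what the paper uses, so your primary route goes through.
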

\begin{proof}
For $d=1$ the spanner $G$ must be a path and therefore has path-width $\bigO(1)$. Let $d\geq 2$. By the results of Le and Than \cite{DBLP:journals/talg/LeT24} we know that every subgraph of $G$ on $k$ vertices has a $(1-\frac{1}{\eta_d2^{d+1}})$-balanced separator of size $c_{d} (t-1)^{1-2d} k^{1-1/d}$, where $c_{d}\leq 2^{\bigO(d)}$ is a constant and $\eta_d$ is the packing constant of $d$-dimensional Euclidean space. Using the well known recursive construction for path-decompositions using separators by Bodlaender (see  \cite{DBLP:journals/tcs/Bodlaender98}, Theorem 20 (iii)), we obtain a path-decomposition for $G$ of width bounded by the following recursive function:
\[\ w(n)=c_{d} (t-1)^{1-2d} n^{1-1/d}+w\left(\left(1-\frac{1}{\eta_d2^{d+1}}\right)\cdot n\right)=\bigO(n^{1-1/d})\]
The recursive function is bounded by $\bigO(n^{1-1/d})$ as for constant $d$ there exist $c>0$ and $\epsilon>0$ s.t. $c_d(t-1)^{1-2d}n^{1-1/d}>c\cdot n^\epsilon$.
\end{proof}

 We call the paths computed in line 8 of the algorithm the \textit{subtree-paths} and the points of $R$ the \emph{representatives}. The set of representatives has size at most $2m - 2$, since each representative is incident to one of the $m - 1$ edges removed. Thus, the greedy spanner $(R, E'')$ in line 10 is a greedy spanner on at most $2m - 2$ points.
By Lemma \ref{lem:greedypw} and an appropriate choice of the constant $C$, the path-width of $(R, E'')$ is at most $k-2$. After line~7 of the algorithm each of the subtrees $T \in \mathcal{T}$ contains only one representative and therefore each subtree-path contains one representative. Thus, the graph $G$ consists of $(R,E'')$ where each vertex is contained in an additional path. Let $(X_1,...,X_t)$ be a path-decomposition of $(R, E'')$ of width $k-2$ and $\pi_1,...,\pi_{2m-2}$ be the subtree-paths. For each $\pi_i$ let $(V_{i,1},...,V_{i,t_i})$ be a path-decomposition of width $1$. A path-decomposition of $G$ can be constructed as follows: For each $\pi_i$ we locate a bag $X_{j_i}$ of $(X_1,...,X_t)$ that contains the representative of $\pi_i$ and construct the path-decomposition $(V_{i,1}\cup X_{j_i},\dots, V_{i,t_i}\cup X_{j_i})$. All the bags of this decomposition have size at most $k+1$. To construct a path-decomposition for $G$ we simply insert the bags $V_{i,1}\cup X_{j_i},\dots, V_{i,t_i}\cup X_{j_i}$ after $X_{j_i}$ in $(X_1,...,X_t)$ for all $\pi_i$. It is not hard to see that this is a valid path-decomposition of width $\leq k$ for $G$.

\begin{figure}[htbp]
  \centering
  \includegraphics[page=2]{figures_spanner.pdf}
  \caption{Construction of a path-decomposition for the spanner in Figure \ref{fig:1} based on the path-decomposition of the greedy spanner.}
  \label{fig:2}
\end{figure}

\begin{lemma}
    Given a set $\P\subset \mathbb{R}^d$ of $n$ points and some positive integer $k\leq n^{1-1/d}$, Algorithm~\ref{alg:higherdim} computes a spanner of dilation $\bigO(n/k^{d/(d-1)})$.
\end{lemma}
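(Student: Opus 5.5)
For $k=1$ the claim is immediate: \cref{mat3.2} applied with $s$ large enough that the EMST is a single component yields a path of dilation $\le 4n = \bigO(n)$. So assume $k\ge 2$, fix $p,q\in\P$, and bound $d_G(p,q)$ by $\bigO(n/m)\cdot\|pq\|$. Since $m=\Theta(k^{d/(d-1)})$ by the choice in line~5, this is the claimed bound.

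\emph{Step 1: component sizes.} Line~6 gives $\bigO(n/m)$ points per subtree. The refinement in line~7 only splits subtrees, so every final $T\in\mathcal T$ still has $\bigO(n/m)$ points and exactly one representative $r(T)$. I also want a locality property of the split. Let $s$ be the power of two with $s/2<\|pq\|\le s$. Let $U$ be the vertex set of the component of $T_\text{MST}[\le s]$ that contains $p$. The EMST path from $p$ to $q$ uses only edges of length $\le\|pq\|\le s$, so $q\in U$ as well.

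\emph{Step 2: case analysis.}

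Case (a): $p$ and $q$ lie in the same final subtree $T$. Then $T[\le s]$ restricted to the component of $p$ contains both points, because edges of $T$ are EMST edges. Applying \cref{mat3.2} to $V(T)$ at scale $s$, the subpath of $\pi(T)$ through that component has length $\le 2|V(T)|\,s=\bigO(n/m)\|pq\|$.

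Case (b): $p$ and $q$ lie in different subtrees $T_p$ and $T_q$. I route $p\to r(T_p)\to r(T_q)\to q$. The middle leg costs at most $\tfrac32\|r(T_p)r(T_q)\|$ in the greedy spanner. By the triangle inequality it suffices to show $\|p\,r(T_p)\|=\bigO(n/m)\,s$, and the same for $q$. Then $\|r(T_p)r(T_q)\|\le \|pq\|+\bigO(n/m)s$, and each outer leg costs $\bigO(n/m)s$ along the subtree-path, again by \cref{mat3.2}.

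\emph{Step 3: the main obstacle.} The hard part is showing that in case (b) the representative of $T_p$ is reachable from $p$ using only short edges. Concretely, I want $r(T_p)$ to be in the same component of $T_p[\le s]$ as $p$. Then \cref{mat3.2} bounds the subpath of $\pi(T_p)$ from $p$ to $r(T_p)$ by $2|V(T_p)|s$.

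The EMST path from $p$ to $q$ has all edges $\le s$, and it leaves $T_p$. If it leaves through an edge removed in line~6, the endpoint of that edge inside $T_p$ is a representative of the original subtree. If it leaves through an edge removed in line~7, that edge had length $\le s$. In either case, the long-to-short processing in line~7 keeps one representative in each piece. I will argue that the unique representative of $T_p$ lies on the $T_p$-side of the exit, in the part of $T_p$ joined to $p$ by edges of length $\le s$.

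The argument goes by induction over line~7. Every edge of length $>s$ is processed before every edge of length $\le s$. At the moment the edges of length $\le s$ start being processed, the current piece containing $p$ already contains a representative reachable from $p$ via edges $\le s$: namely the exit point above, or a point obtained by following the path further. Later splits along edges $\le s$ preserve this reachability for whichever representative remains in $p$'s piece. If this invariant turns out to be delicate, I would fall back to a weaker goal. It is enough to show that $p$ and $r(T_p)$ are connected in $T_\text{MST}[\le s']$ for some $s'=\bigO(s)$; the constants are absorbed into the $\bigO$.

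Combining the cases gives $d_G(p,q)=\bigO(n/m)\|pq\|=\bigO(n/k^{d/(d-1)})\|pq\|$.
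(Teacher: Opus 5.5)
Your overall structure matches the paper's proof. You use the same two cases, the same route $p\to r(T_p)\to r(T_q)\to q$ through the greedy $3/2$-spanner, and the same application of \cref{mat3.2} at a power-of-two scale $s\le 2\|pq\|$. You round $\|pq\|$ up while the paper rounds up the longest EMST edge on the $p$--$q$ path, which is immaterial.

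The gap is Step~3, which you correctly identify as the crux but do not prove. \emph{Following the path further} does not produce a representative: the EMST path from $p$ to $q$ need not contain one. You must leave that path, and then you need a reason why the detour uses only short edges. Likewise, \emph{later splits along edges $\le s$ preserve this reachability} is exactly the nontrivial point. A priori, the representative that ends up in $p$'s final piece could be attached to $p$ only through an edge longer than $s$ that line~7 processed early and kept. The fallback to some $s'=\bigO(s)$ comes with no argument.

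The missing idea is an invariant of the long-to-short pass. If line~7 processes an edge $g$ and keeps it, then in every later piece containing $g$, all representatives lie on one side of $g$. This holds because when $g$ was processed it separated no two representatives of its then-current piece, and pieces only shrink. Consequently, in the piece $T'$ current when an edge $f$ is processed, every edge on a path between two representatives of $T'$ has length $\le\|f\|$.

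With this invariant the step closes as follows.
\begin{enumerate}
\item Let $f=xy$ be the first edge of the EMST path from $p$ to $q$ that leaves $T_p$, with $x\in V(T_p)$. The subpath from $p$ to $x$ and $f$ itself use only edges of length $\le\|pq\|\le s$.
\item If $f$ was removed in line~6, then $x\in R$, so $x=r(T_p)$.
\item If $f$ was removed in line~7, it lay on a path between two representatives $r_1,r_2$ of the current piece $T'$, so $x$ lies on that path. Also $r(T_p)\in R(T')$. Hence the tree path from $x$ to $r(T_p)$ lies in the union of the paths $x$--$r_1$ and $r_1$--$r(T_p)$, and uses only edges of length $\le\|f\|\le s$.
\end{enumerate}
So $p$ and $r(T_p)$ are connected in $T_p[\le s]$.

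The same invariant also repairs your own version of the argument. A kept edge longer than $s$ cannot separate your $r_0$ from the final representative, and $p$ lies on $r_0$'s side.

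Finally, \cref{mat3.2} applied to $V(T_p)$ speaks about $\text{EMST}(V(T_p))[\le s]$, not $T_p[\le s]$. You should note, as the paper does, that $T_p$ is a connected subtree of $\text{EMST}(\P)$ and therefore an EMST of its own vertex set.
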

\begin{proof}
Let $p$ and $q$ be points in the same subtree-path $\pi$. Let $s$ be the length of the longest edge on the path between $p$ and $q$ in EMST($\P$) rounded up to the next power of two. It is clear that $p$ and $q$ are in the same connected component $S_{p,q,s}$ of $(\text{EMST}(\P))[\leq s]$. By Lemma~\ref{mat3.2} the path $\pi[V(S_{p,q,s})]$ must be connected and have length at most $2\cdot \bigO(n/k^{d/(d-1)})\cdot s$, as the number of points in the subtree-path is bounded by $\bigO(n/k^{d/(d-1)})$. It is also clear that $s\leq 2\|pq\|$, since otherwise $pq$ would have been shorter than an edge on the path between $p$ and $q$ in EMST($\P$) and this edge could have been replaced with $pq$ to obtain a tree of weight less than the weight of EMST($\P$).
So in total the distance between $p$ and $q$ in $G$ is 
\[2\cdot \bigO(n/k^{d/(d-1)})\cdot 2\|pq\|=\bigO(n/k^{d/(d-1)})\cdot \|pq\|\]
and the dilation therefore is $\bigO(n/k^{d/(d-1)})$.

Now let $p$ and $q$ be in different subtree-paths and let $\pi_p$ be the path containing $p$ and $\pi_q$ the path containing $q$. Let $s$ again be the length of the longest edge on the path between $p$ and $q$ in EMST($P$) rounded up to the next power of two. We claim that $p$ and the representative of $\pi_p$ are both contained in the same connected component $S_{p,s}$ of $(\text{EMST}(\P))[\leq s]$ and that $q$ and the representative of $\pi_q$ are both contained in the same connected component $S_{q,s}$ of $(\text{EMST}(\P))[\leq s]$. In other words: there is a path between $p$, respectively $q$, and its representative using only edges of length at most $s$. This is true since either the representatives of the subtree-paths are contained in the path between $p$ and $q$ in EMST($\P$) or if not then on this path there is an edge longer then all the edges on the path in EMST($\P$) between $p$ and the representative of $\pi_p$ or $q$ and the representative of $\pi_q$ respectively. This means that the edges from $p$ to its representative in EMST($\P$) consists of edges all shorter than $s$ and therefore all these edges must be in $(\text{EMST}(\P))[S_{p,s}]$. So by Lemma \ref{mat3.2} the subgraph $\pi_p[S_{p,s}]$ must be a connected path and contain both $p$ and the representative of $\pi_p$ and $\pi_q[S_{q,s}]$ must be a connected path containing both $q$ and the representative of $\pi_q$. Note that this argument uses the fact that $\text{EMST}(\P')$ is a subgraph of $\text{EMST}(\P)$ if $\text{EMST}(\P)[\P']$ is connected for $\P'\subset \P$. Furthermore, by Lemma~\ref{mat3.2} the length of these paths is at most $2\cdot \bigO(n/k^{d/(d-1)})\cdot s$. The distance of the path between the representatives of $\pi_p$ and $\pi_q$ is bounded by $\frac{3}{2}\cdot(\|pq\|+4\cdot \bigO(n/k^{d/(d-1)})\cdot s)$. Once again it is clear that $s\leq 2\|pq\|$. So in total the distance between $p$ and $q$ in $G$ is 
\[4\cdot \bigO(n/k^{d/(d-1)})\cdot 2\|pq\|+\frac{3}{2}\cdot(\|pq\|+4\cdot \bigO(n/k^{d/(d-1)})\cdot 2\|pq\|)=\bigO(n/k^{d/(d-1)})\cdot \|pq\|\]
and the dilation therefore is $\bigO(n/k^{d/(d-1)})$.
\end{proof}

\begin{figure}[htbp]
  \centering
  \includegraphics[page=1]{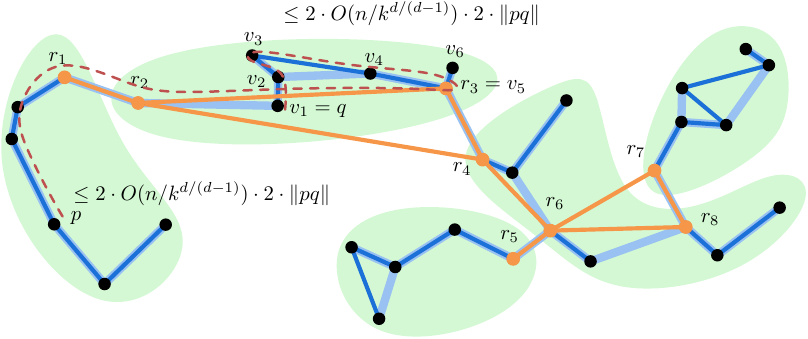}
  \caption{The subtree-paths (dark blue) on the original subtrees (green) and the greedy spanner (orange) as well as a shortest path (red) between points $p$ and $q$. The EMST is shown in light blue.}
  \label{fig:1}
\end{figure}
By relations between tree-width, path-width, cut-width and branch-width we then obtain the following result: 

\begin{corollary} \label{cutBranchwidth}
Given a set of $n$ points $\P\subset \mathbb{R}^d$ and some positive integer $k \leq n^{1-1/d}$, there is a geometric spanner of cut-width (resp. branch-width) $k$, and dilation $\bigO(n/k^{d/(d-1)})$. The dilation bound is asymptotically worst-case optimal. 
\end{corollary}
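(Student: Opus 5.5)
Both parts of \cref{cutBranchwidth} will come from \cref{thm:pwupper} and the tree-width lower bound of \cite{DBLP:conf/compgeom/BuchinRS25}, using the standard relations between the parameters. For branch-width we use $\bw(G)\le \tw(G)+1\le \pw(G)+1$ and $\tw(G)\le \frac{3}{2}\bw(G)-1$ (Robertson--Seymour). For cut-width we use $\pw(G)\le \cw(G)$ and $\cw(G)\le \pw(G)\cdot\Delta(G)$.

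\emph{Upper bound, branch-width.} Apply \cref{thm:pwupper} with parameter $k-1$ (or $k$, adjusting constants). This gives a spanner of path-width at most $k-1$, hence branch-width at most $k$, with dilation $\bigO(n/(k-1)^{d/(d-1)})=\bigO(n/k^{d/(d-1)})$ for $k\ge 2$. The case $k=1$ is trivial, since a path has branch-width $\le 1$ by the convention for graphs of maximum degree $\le 2$.

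\emph{Upper bound, cut-width.} This is the main obstacle, because cut-width also depends on the degrees. The path-width bound alone does not suffice, since the greedy spanner has bounded degree but a representative $r$ may additionally be attached to its subtree-path. I would argue directly on the output $G$ of Algorithm~\ref{alg:higherdim}. The greedy $3/2$-spanner on $R$ has maximum degree bounded by a constant $c_d$, and every vertex of $G$ gets at most $2$ further edges from its subtree-path. Hence $\Delta(G)=\bigO(1)$, and $\cw(G)\le \pw(G)\cdot\Delta(G)=\bigO(k')$ when the algorithm is run with parameter $k'$. To avoid the cruder product bound one can also build a linear layout explicitly: lay out $R$ in an order of cut-width $\bigO(k')$, which exists since the greedy spanner has bounded degree and bounded path-width; then, right after each representative $r_i$, insert its subtree-path $\pi_i$ contiguously, split at $r_i$. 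Each cut then meets at most the $\bigO(k')$ greedy-spanner edges plus $\bigO(1)$ path edges. Choosing $k'=\Theta(k)$ so that the resulting cut-width is at most $k$, and noting that the dilation changes only by a constant factor, gives $\bigO(n/k^{d/(d-1)})$.

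\emph{Lower bound.} By \cite{DBLP:conf/compgeom/BuchinRS25} there are point sets on which every spanner of tree-width $k$ has dilation $\Omega(n/k^{d/(d-1)})$. A spanner of cut-width $k$ has tree-width $\le \pw \le k$, and a spanner of branch-width $k$ has tree-width $\le \frac32 k$. In both cases the tree-width lower bound applies with $\Theta(k)$ in place of $k$, and constants are absorbed into the $\Omega$, giving asymptotic worst-case optimality.
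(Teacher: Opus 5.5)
\textbf{Verdict: one genuine error.} For $k\ge 2$ your argument is essentially the paper's. The cut-width bound comes from $\cw(G)\le \Delta(G)\cdot\pw(G)$ together with $\Delta(G)\le\Delta(\text{greedy})+2$. The branch-width bounds come from the Robertson--Seymour relations, and the lower bounds from $\tw\le\pw\le\cw$ and $\tw\le\lfloor\frac32\bw\rfloor-1$. Your explicit linear layout and the rescaling $k'=\Theta(k)$ are fine additions. One small point: for cut-width with $k$ below the constant threshold where $k'=\lfloor k/\Delta\rfloor$ would be $0$, simply take the path of \cref{mat3.2}. It has cut-width $1$ and dilation at most $4n$, which is $\bigO(n/k^{d/(d-1)})$ for constant $k$.

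\textbf{The error is your branch-width case $k=1$.} A path does not have branch-width $\le 1$ once it has three edges. For $P_4$ with edges $ab,bc,cd$, the only branch-decomposition is a claw. The tree edge separating the leaf $bc$ from $\{ab,cd\}$ has order $2$, since both $b$ and $c$ are incident to edges on both sides.

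In fact the graphs of branch-width $\le 1$ are exactly those whose components are stars. A spanner must be connected, so a spanner of branch-width $1$ is a star, i.e.\ it has tree-depth $2$. Now take the four-point set $\P_2$ of \cref{lem:td_lemma}, with points at $0$, $1/\varepsilon$, $1/\varepsilon+1/\varepsilon^2$ and $2/\varepsilon+1/\varepsilon^2$. Whatever the centre of a star on these points, the two points of the opposite copy are at distance $1/\varepsilon$ but are joined only through the centre. That route has length at least $2/\varepsilon^2+1/\varepsilon$, so the dilation is at least $1+2/\varepsilon$. This is unbounded as $\varepsilon\to 0$ with $n=4$ fixed.

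So the branch-width part of the statement is false for $k=1$ and cannot be proved. The paper's proof makes exactly this exception: it notes that branch-width-$1$ spanners have unbounded dilation, and its table lists ``branch-width $\ge 2$''. You should restrict the branch-width claim to $k\ge 2$ rather than argue the case $k=1$.
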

\begin{proof}
The dilation bound for spanners of bounded cut-width follows from the path-width bound: The cut-width of a graph $G$ with path-width $k$ is bounded by $\Delta(G)\cdot k$ \cite{DBLP:journals/dm/ChungS89} and the degree of the constructed spanner is bounded by the degree of the greedy spanner, which is constant \cite{DBLP:books/daglib/0017763} plus 2 for the decompositions of width 1. As the tree-width is a lower bound for the cut-width \cite{DBLP:journals/eatcs/Bodlaender88} the lower bound follows from the lower bound for tree-width bounded spanners \cite{DBLP:conf/compgeom/BuchinRS25}.

The dilation bound for spanners of bounded branch-width follows from the fact that there is a linear dependence between the tree-width $\tw(G)$ and branch-width $\bw(G)$, namely $\bw(G)-1\leq \tw(G)\leq \left\lfloor \frac{3}{2}\bw(G)\right\rfloor-1$ for $\bw(G)\geq 2$ \cite{RobertsonS91}.
In Section \ref{sec:td} we show that for spanners of branch-width 1 the dilation is unbounded, as the connected graphs of branch-width 1 are stars \cite{RobertsonS91}.
\end{proof}
Constructing a geometric $1$-spanner of low clique-width or rank-width is trivial, as every complete graph has clique-width 2 and rank-width $1$. With planarity additionally we have:
\begin{corollary}
\label{cliqueRankwidth}
    Given a set $\P\subset \mathbb{R}^2$ of $n$ points and some integer $k\in [4, 72 \sqrt{n-3}]$, a plane spanner with clique-width (resp. rank-width) $k$, maximum vertex degree $4$ and dilation $\bigO(n/k^2)$ can be constructed in $\bigO(n \log n)$ time. The dilation bound is asymptotically worst-case optimal for plane spanners. 
\end{corollary}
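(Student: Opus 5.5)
Since clique-width and rank-width are bounded linearly in tree-width, the natural route is to start from a tree-width bounded plane spanner. For planar graphs of bounded degree the relation also runs in the other direction, which is what the lower bound needs.

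\emph{Upper bound.} I would take the SoCG 2025 construction of plane spanners of tree-width $k'$ with dilation $\bigO(n/k'^2)$ in $\mathbb{R}^2$ from \cite{DBLP:conf/compgeom/BuchinRS25}. I expect it can be arranged to have maximum degree $4$ and be computed in $\bigO(n\log n)$ time. If it does not come with these properties directly, I would rebuild it from known parts:
\begin{itemize}
\item a degree-bounded plane spanner, for instance the degree-$4$ plane spanner of Kanj et al.\ or Bose et al., computable in $\bigO(n\log n)$;
\item a partition of the EMST (degree at most $5$, but replaceable by a degree-bounded plane variant) into subtrees, glued together by the plane spanner on the representatives, as in Algorithm~\ref{alg:higherdim}.
\end{itemize}
Then I would apply the Corneil--Rotics bound $\cw(G)\le 3\cdot 2^{\tw(G)-1}$. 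This is exponential, so for planar graphs I would instead use linear bounds:
\begin{itemize}
\item $\rw(G)\le \bw(G)\le \tw(G)+1$ (Oum);
\item for clique-width, the bound for bounded-degree graphs, $\cw(G)=\bigO(\Delta\cdot\tw(G))$ (e.g.\ $\cw\le 20\Delta\,\tw$, Kaminski--Lozin--Milanič type results).
\end{itemize}
With $\Delta=4$ we get $\cw=\bigO(\tw)$. Choosing $k'=\Theta(k)$ with suitable constants then makes the clique-width (resp.\ rank-width) at most $k$, and the dilation is $\bigO(n/k^2)$. The constants $4$ and $72\sqrt{n-3}$ in the stated range come from these conversion factors and from the admissible range $k'\le \sqrt n$ in the tree-width result. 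I would fix them last by tracking the constants.

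\emph{Lower bound.} For planar graphs of maximum degree $\Delta$, tree-width is bounded by a linear function of clique-width (Gurski--Wanke: $\tw \le 3(\Delta-1)\cw-1$ for graphs with no $K_{\Delta,\Delta}$ subgraph). Rank-width and clique-width are equivalent up to exponential factors in general, but within these sparse graph classes they are linearly related to tree-width. The catch is that the lower bound must hold against \emph{all} plane spanners of clique-width $k$, not only bounded-degree ones. The worst-case point set of \cite{DBLP:conf/compgeom/BuchinRS25} forces large tree-width, so I would use that planar graphs contain no $K_{3,3}$. By Gurski--Wanke, $K_{t,t}$-free graphs satisfy $\tw \le 3(t-1)\cw -1$, hence planar graphs satisfy $\tw\le 6\cw-1$. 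Similarly, Fomin et al.\ show $\tw=\bigO(\rw)$ for $K_{t,t}$-free graphs. Combining these with the tree-width lower bound $\Omega(n/k^2)$ gives the same order for clique-width and rank-width.

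The main obstacle is the upper-bound construction: guaranteeing simultaneously planarity, maximum degree $4$, $\bigO(n\log n)$ time and a linear tree-width/clique-width relation. The subtree paths and the gluing edges must not create crossings or raise degrees. My first attempt would be to take the EMST partition inside the degree-$4$ plane spanner itself, i.e.\ to use only edges of that spanner, so that planarity and degree are inherited.
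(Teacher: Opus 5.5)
Your proposal is correct and follows the paper's proof. The paper takes the plane tree-width-bounded spanner from Corollary~4 of \cite{DBLP:conf/compgeom/BuchinRS25}, which already supplies planarity, maximum degree~$4$ and the $\bigO(n\log n)$ running time, so the obstacle you flag and your fallback construction are not needed. It converts with Courcelle's planar bound $\clw\le 6\tw-2$ and with $\rw\le\clw$; your bounded-degree bound $\clw=\bigO(\Delta\cdot\tw)$ and $\rw\le\bw\le\tw+1$ work equally well. It gets optimality exactly as you do, from Gurski--Wanke's $\tw\le 6\clw-1$ for $K_{3,3}$-free graphs and Fomin--Oum--Thilikos' $\tw<72\rw-1$.

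One correction: Fomin, Oum and Thilikos prove their linear bound for planar (more generally $H$-minor-free) graphs, not for $K_{t,t}$-free graphs. Your argument only uses the planar case, so this does not affect it.
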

\begin{proof}
The result for bounded clique-width follows from the algorithm of Corollary 4 in~\cite{DBLP:conf/compgeom/BuchinRS25} for plane spanners of tree-width $k$ and the fact that a planar graph of tree-width $k$ has clique-width at most $6k-2$ \cite{DBLP:journals/dam/Courcelle20a}. The lower bound follows from the lower bound on the tree-width in \cite{DBLP:conf/compgeom/BuchinRS25} and the lower bound on the clique-width of $(k+1)/6$ for planar graphs of tree-width $k$ \cite{DBLP:conf/wg/GurskiW00}.

For rank-width, the upper bound on the dilation follows from the fact that for every graph $G$ it holds that the rank-width is bounded by the clique-width, i.e. $\rw(G)\leq \clw(G)$~\cite{OumS06}. The lower bound holds thanks to a result of Fomin, Oum and Thilikos \cite{FominOT10} who showed that for planar graphs $G$ it holds that $\tw(G)<72\cdot\rw(G)-1$.
\end{proof}

\section{Tree-Depth of Spanners}
\label{sec:td}

Contrary to the previously mentioned graph parameters, it is not possible to give an upper bound on the dilation of tree-depth bounded spanners:

\begin{theorem}
\label{thm-treedepth}
    Let $\varepsilon\in(0,1)$. For every positive integer $k$ there is a set of points $\P\subset \mathbb{R}$ s.t. any geometric spanner of tree-depth $k$ has dilation at least $1+2/\varepsilon$.
\end{theorem}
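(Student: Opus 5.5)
We construct a one-dimensional point set on which any graph of tree-depth $k$ is forced to contain a long detour, and prove the bound by induction on $k$.

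\emph{Tools.} Recall that a connected graph $G$ of tree-depth $k$ has a vertex $r$ such that every connected component of $G-r$ has tree-depth at most $k-1$. Two further facts are needed.
\begin{itemize}
\item If $G$ has finite dilation it must be connected, and the spanner property applies to every pair of points.
\item A path in $G$ between two points of $\P$ either avoids $r$ entirely, and so lies inside one component of $G-r$, or it passes through $r$.
\end{itemize}

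\emph{Construction.} The point sets $\P_k\subset\mathbb{R}$ are built recursively with a geometrically growing scale $\lambda=\lambda(\varepsilon)$, chosen large.
\begin{itemize}
\item For $k=1$, take two points. Tree-depth $1$ means no edges, so the dilation is infinite.
\item For $k\ge 2$, place $N$ copies of $\P_{k-1}$ along the line, with $N$ large. Each copy is scaled to have diameter $1$, and consecutive copies are separated by gaps of length $\varepsilon$, or a gap length comparable to $\varepsilon$ chosen to make the constant come out as $1+2/\varepsilon$.
\end{itemize}

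\emph{Inductive step.} Fix a spanner $G$ of tree-depth $k$ on $\P_k$ and let $r$ be its root vertex. Since $N$ is large (say $N\ge 3$), some copy $Q$ of $\P_{k-1}$ does not contain $r$. Then $G[Q]$ is a subgraph of $G-r$ and has tree-depth at most $k-1$, but pairs in $Q$ may still be connected through the rest of $G$. There are two cases.
\begin{itemize}
\item \emph{The path leaves $Q$ but avoids $r$.} Any path that leaves $Q$ must cross a gap of length at least $\varepsilon$ (in scaled terms) and come back. For points $p,q\in Q$ with $\|pq\|$ tiny compared to $\varepsilon$, such a detour already gives dilation far beyond $1+2/\varepsilon$. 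This is where the geometric growth matters: the inner structure of $Q$ is at scale $1/\lambda$, much smaller than the gap.
\item \emph{The path stays inside $Q$.} If every pair in $Q$ at inner scale is served inside $Q$, then $G[Q]$ restricted to the relevant subcopy behaves like a spanner of tree-depth $\le k-1$ on $\P_{k-1}$, and induction applies. Paths through $r$ are handled like paths that leave $Q$: $r$ lies outside $Q$, so reaching it means crossing a gap.
\end{itemize}

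\emph{Main obstacle.} The delicate point is that $G[Q]$ need not be connected when pairs are served via other parts of the graph. The cleanest fix is to nest the recursion at strongly separated scales, so that any path leaving the current block costs at least a factor $1+2/\varepsilon$ relative to inner distances. The argument is then:
\begin{enumerate}
\item Define $P(k)$: on $\P_k$, every graph of tree-depth $\le k$ has some pair $p,q$ with $d_G(p,q)\ge (1+2/\varepsilon)\|pq\|$.
\item Induct, descending at each level into a copy avoiding the current root. After $k$ levels of descent no edges remain (tree-depth $0$), and the remaining pair is either disconnected within the block or forced to exit it.
\item Exiting a block of diameter $D$ whose neighbours lie at distance $\ge D/\varepsilon$ costs at least $2D/\varepsilon$ extra. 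Taking $p,q$ at distance about $D$ gives $d_G(p,q)\ge \|pq\|+2D/\varepsilon\ge (1+2/\varepsilon)\|pq\|$.
\end{enumerate}

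The remaining work is to fix the gap ratio to be exactly $1/\varepsilon$ relative to block diameter, which matches the constant $1+2/\varepsilon$.
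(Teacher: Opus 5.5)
Your closing outline (steps 1--3), with the separation you state in step 3, is a correct proof. The construction you actually wrote down, however, has the scale ratio inverted, and the case analysis built on it does not go through.

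With copies of diameter $1$ separated by gaps of length $\varepsilon<1$, leaving a copy costs only $2\varepsilon$ extra. Already in $\P_2$, the pair your descent ends with (the two points of a copy of $\P_1$ avoiding the root, at distance $1$) may be joined through the nearest point of a neighbouring copy at total length $1+2\varepsilon$. So the argument certifies only dilation $1+2\varepsilon$ for that pair. Likewise, your first case needs the bad pair to be ``tiny compared to the gap''. The pair supplied by the induction hypothesis for $G[Q]$ can be at distance up to $\mathrm{diam}(Q)$, and neither large $N$ nor the inner scale $\lambda$ controls this.

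The condition in your step 3 is the right one, and with it everything simplifies. Let $\P_1$ be two points and let $\P_k$ be two translates of $\P_{k-1}$ separated by a gap of length $\mathrm{diam}(\P_{k-1})/\varepsilon$. Two copies suffice, since the root lies in at most one. If $G$ has tree-depth $\le k$ and is disconnected, there is nothing to show. Otherwise, pick $r$ with $\td(G-r)\le k-1$ and the copy $Q$ with $r\notin Q$; then $\td(G[Q])\le k-1$. Your hypothesis $P(k-1)$, stated for arbitrary and possibly disconnected graphs, gives $p,q\in Q$ with $d_{G[Q]}(p,q)\ge(1+2/\varepsilon)\|pq\|$. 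Any $p$--$q$ path leaving $Q$ has length at least $\|pq\|+2\,\mathrm{diam}(Q)/\varepsilon\ge(1+2/\varepsilon)\|pq\|$. The ``main obstacle'' of a disconnected $G[Q]$ never arises, and no case split is needed.

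This corrected argument takes a genuinely different route from the paper's. The paper places two copies of $\P_{k-1}$ at distance $1/\varepsilon^{k}$. That gap dominates only the largest spacing $1/\varepsilon^{k-1}$ between consecutive points, not the diameter. The paper uses this to show that any spanner of dilation below $1+2/\varepsilon$ induces connected subgraphs on both halves and therefore contains a path on $2k$ vertices. It then concludes via the fact that the $n$-vertex path has tree-depth $\lceil\log(n+1)\rceil$, together with minor-closedness, applied to $\P_{2^{k-1}}$.

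Your route uses the recursive definition of tree-depth directly (root deletion plus monotonicity under subgraphs). It needs only $2^k$ points instead of $2^{2^{k-1}}$, and avoids both the path lemma and the minor-closure fact. In exchange, the paper's route yields the structural statement that small dilation forces long paths.
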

This follows from the following lemma and a lower bound on the tree-depth of paths: 
\begin{lemma}
    \label{lem:td_lemma}
    Let $\varepsilon\in(0,1)$, and $k\geq 1$ be some integer. There is a set of points $\P_k\subset \mathbb{R}$ s.t. any geometric spanner on $\P_k$ that does not contain a path of $2k$ vertices has dilation $\geq 1+2/\varepsilon$.
\end{lemma}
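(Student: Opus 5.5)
Our plan is to build $\P_k$ recursively as a hierarchically clustered point set on the line, and to prove by induction on $k$ that any spanner of dilation $<1+2/\varepsilon$ on $\P_k$ contains a path with $2k$ vertices.

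\emph{Base case.} For $k=1$ we take two points. Any spanner with finite dilation must contain the edge between them, which is already a path on $2$ vertices.

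\emph{Construction of $\P_{k+1}$.} We place two copies $A$ and $B$ of $\P_k$, each scaled so that its diameter is $\delta$. The copies are separated by a gap of length $L$, where the ratio $\delta/L$ is small, chosen as a function of $\varepsilon$. The key geometric fact concerns a path that starts in one cluster, enters the other, and comes back. Its length exceeds $2L$, while the two endpoints are at distance at most $\delta$. So if $\delta<\varepsilon L$, such a detour has ratio more than $2L/\delta>2/\varepsilon$, and this contradicts the assumed dilation bound.

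\emph{Inductive step.} Assume a spanner $G$ on $\P_{k+1}$ has dilation $<1+2/\varepsilon$. For $p,q\in A$, the shortest $p$–$q$ path cannot leave $A$, by the detour argument above. Hence $G[A]$ is itself a spanner on a scaled copy of $\P_k$ with the same dilation bound. By induction (dilation is scale invariant), $G[A]$ contains a path on $2k$ vertices, and likewise $G[B]$. We cannot simply concatenate these two paths, because the inductive path need not end at a cross edge. So we strengthen the hypothesis to: \emph{for every vertex $v\in\P_k$, $G$ contains a path on $\ge 2k$ vertices starting at $v$}, or at least one ending at a prescribed extreme point.

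With this strengthened hypothesis the step works as follows. Since $p\in A$ and $q\in B$ must be connected, $G$ has some cross edge $ab$ with $a\in A$ and $b\in B$. We then take a path inside $B$ starting at $b$ with $2k$ vertices, and prepend the edge $ab$. To reach $2k+2$ vertices we must also gain one more vertex on the $A$ side. We get it by using the fact that $G[A]$ is connected: take a neighbor of $a$ inside $A$, or a path in $A$ ending at $a$. This needs $|A|\ge 2$, which holds.

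The strengthened hypothesis itself needs care. For $k=1$ it says every vertex has an edge, which is true. In the step, the path must start at an arbitrary vertex $v$, say $v\in A$. We can go from $v$ inside $A$ using a path in the connected graph $G[A]$, then cross to $B$, then use the inductive path in $B$ from the cross vertex. To guarantee enough length, the cleanest route is to take a path in $A$ from $v$ to $a$ (at least one vertex), then $b$ followed by $2k$ vertices in $B$. That gives only $2k+1$ vertices. To obtain $2k+2$, we will use three clusters instead of two, or require the cluster copies to have at least two points plus a cross edge argument.

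\emph{Main obstacle.} The hardest part is this bookkeeping: getting exactly $2k$ vertices and keeping the path simple. If the off-by-one persists, we will first try adding one extra point per level of the recursion (a cluster of two nearby points). Whatever the final bookkeeping, we need the simple-path property, and it follows because the subpaths in $A$ and in $B$ are vertex-disjoint.
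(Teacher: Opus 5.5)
\textbf{There is a genuine gap: the combinatorial step is never closed.}

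Your construction and separation step are fine. With cluster diameter $\delta$ much smaller than the gap $L$, every shortest path between two points of $A$ stays inside $A$. So $G[A]$ is itself a spanner of dilation $<1+2/\varepsilon$ on a scaled copy of $\P_k$. One small fix: bound the detour by $\|pq\|+2L$ rather than $2L$, so the ratio is $>1+2/\varepsilon$ and not merely $>2/\varepsilon$.

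The strengthened hypothesis you propose is false. The ``prescribed extreme point'' variant is false too, because the cross edge can be incident to any vertex. Take $\P_2$ with $A=\{a_1,a_2\}$ and $B=\{b_1,b_2\}$ in left-to-right order, and the graph with edges $a_1a_2$, $a_1b_2$, $b_1b_2$. It has dilation $1+O(\delta/L)$. Yet every path starting at $a_1$ or at $b_2$ (the two extreme points) has at most $3<4$ vertices.

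Your fallbacks are not worked out, and the three-cluster one has the same defect. Take three clusters of two points each, with both cross edges at one vertex $b$ of the middle cluster. This graph has dilation $1+O(\delta/L)$, but no path on $4$ vertices starts at $b$. So, as written, the induction does not go through.

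\textbf{The missing idea: you only need half of a long path on each side of a cross edge $ab$.} The paper argues as follows.
By induction, $G[A]$ contains a path $Q$ on $2k-2$ vertices, and $G[A]$ is connected. Go from $a$ to the first vertex of $Q$ you hit, then follow the longer of the two pieces of $Q$ from there. This gives a path in $A$ with endpoint $a$ and at least $\lceil (2k-1)/2\rceil=k$ vertices. Do the same at $b$ in $B$ and join the two paths by $ab$; this gives $2k$ vertices.

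Alternatively, keep your prepend-and-cross step but strengthen only to ``every vertex of $\P_k$ is an endpoint of a path on at least $k+1$ vertices''. A path in $A$ from $v$ to $a$ contributes at least one vertex, and the path from $b$ in $B$ contributes $k+1$. That is exactly the gain of one vertex per level that is needed. Applying this hypothesis in the two copies of $\P_{k-1}$ at the ends of a cross edge again yields $2k$ vertices.
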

\begin{proof}
    We prove the statement by induction. For $k=1$, let $P_1$ be a set of two points of distance $1/\varepsilon$. As there cannot be an edge between the two points, the dilation must be larger than $1+2/\varepsilon$. For $k\geq 2$ we take two copies of $\P_{k-1}$ and place them next to each other $1/\varepsilon^k$ apart. Let $A$ be the first copy of $\P_{k-1}$ and $B$ the second, and let $G$ be a spanner on $\P_k$ of dilation less than $1+2\varepsilon$. We claim that the subgraphs of $G$ induced by $A$, respectively $B$, must be connected. If not, there must be two points in $A$, respectively $B$, next to each other that are connected only through a detour through the other copy. Let $d$ be the distance between these two points. The detour must have length at least $2/\varepsilon^{k}+d$ and thus the dilation is at least $1+2d/\varepsilon^k$. Since $d$ is at most $1/\varepsilon^{k-1}$, the dilation must be at least $1+2/\varepsilon$. Therefore, $G[A]$ and $G[B]$ are connected and by the induction hypothesis we know that the respective subgraphs of $G$ on $A$ and $B$ must contain paths of $2k-2$ vertices. Now let $pq$ be an edge that connects $A$ to $B$ and let $p\in A$ and $q\in B$. Inside the subgraph of $G$ induced by $A$ there must be a (possibly empty) path to some point of the path of length $2k-2$. This means that $p$ must be an endpoint of a path that contains at least $(2k-2)/2+1=k$ points of $A$ ($+1$ as $p$ itself is part of this path). The same holds for $q$ in $B$ analogously. As there is an edge between $p$ and $q$, the spanner $G$ must contain a path of at least $2k$ vertices.
\end{proof}

\begin{proof}[Proof of \Cref{thm-treedepth}]
    The tree-depth of an $n$ vertex path is $\lceil \log(n+1)\rceil$ and tree-depth is closed under taking minors \cite{Nesetril}. We choose $\P$ to be $\P_{2^{k-1}}$ from Lemma \ref{lem:td_lemma}. Let $G$ be a geometric spanner on $\P$ of dilation less than $1+2/\varepsilon$. We know that $G$ contains a path of $2^k$ vertices. Therefore, its tree-depth must be at least $\lceil \log(2^k+1)\rceil=k+1$.
\end{proof}

\begin{figure}[htbp]
  \centering
  \includegraphics[page=1]{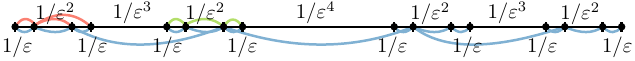}
  \caption{Inductive construction for lower bound for dilation of tree-depth bounded spanner. Red: spanner on $\P_2$ of tree-depth 1 but with dilation $2/\varepsilon+1$, green: spanner on $\P_2$ of tree-depth 2 and dilation 1, blue: spanner on $\P_4$ of tree-depth 3 and dilation $<2/\varepsilon+1$.}
  \label{fig:4}
\end{figure}
Therefore, we turn our attention to computing spanners with bounded tree-depth and optimum dilation. While already computing a minimum dilation tree is NP-hard~\cite{Klein2007,CHEONG2008188}, finding a spanner that is a star, i.e. a spanner of tree-depth $2$ can be solved naively in $\bigO(n^3)$ time by simply testing all vertices as centers and also in $\bigO(n2^{\alpha(n)}\log^2(n))$ expected time \cite{EppsteinW07}. We show that this does not extend to tree-depth $k$, as already finding a minimum dilation spanner with tree-depth $3$ is NP-hard. 

First recall the recursive definition of the tree-depth of a graph $G$ with connected components $G_1,\dots,G_\ell$, which we need in the following proofs. 
    \[\td(G)=\begin{cases}
        1& \text{if } |V(G)|=1\\
        1 + \min_{v\in V}\td(G-v) & \text{if $G$ is connected and $|V(G)|>1$}\\
        \max_{i\in\{1,\dots,\ell\}}\td(G_i) & \text{otherwise}
    \end{cases}\]

We further need the notion of \textit{level} for a vertex. Given a graph $G=(V,E)$ of tree-depth $k$, consider the following tree corresponding to the recursive definition of tree-depth. Each node contains a vertex of $G$ and can be associated with a subgraph of $G$. If the vertex contained in a node is removed from the associated subgraph, the tree-depth of the resulting connected components must be smaller by at least one. The node has one child for every connected component induced by the removal of the vertex and each child is associated with its connected component. The root of the tree is associated with $G$ itself and contains a vertex $v$ of $G$ that, if deleted, splits the graph into connected components of tree-depth at most $k-1$. The leaves of this tree are nodes associated with subgraphs consisting of only one vertex. We say that a vertex $v$ of $G$ has \textit{level} $i$ according to the recursive definition of tree-depth if the node containing $v$ is on level $i$ in the tree defined above. Notice that the tree above is not necessarily unique. Therefore, we fix one such tree for every graph.

Notice that in a graph there cannot be an edge between two vertices of the same level. The levels of the vertices also serve as the colors of a centered coloring, see \cite{Nesetril}.

\begin{theorem}
\label{thm:np-hard}
    Given a metric $(M,w)$, a tree-depth bound $k\geq 3$ and a dilation bound $t$, it is NP-hard to test whether there is a $t$-spanner of tree-depth $k$ on $(M,w)$. Furthermore, it is NP-hard to approximate the minimum dilation of a tree-depth $k$ spanner within any factor strictly less than $\sqrt{2}$.    
\end{theorem}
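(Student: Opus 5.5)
We reduce from \textsc{SetCover}: given a universe $U=\{u_1,\dots,u_n\}$, a family $\mathcal{S}=\{S_1,\dots,S_m\}$ and an integer $\ell$, decide whether $\ell$ sets cover $U$. We build a metric $(M,w)$ with a root point $r$, one point $s_j$ per set and one point $x_i$ per element. Distances are chosen so that a tree-depth-$3$ spanner with dilation $2$ is forced to have a specific shape: $r$ on level $1$, the chosen sets on level $2$, and the elements on level $3$, each attached to a chosen set containing it. The hardness in both claims will come from a single gap: small dilation if a cover of size $\ell$ exists, and at least $2\sqrt{2}$ or similar otherwise. That gap gives both the decision hardness for $t=2$ and inapproximability below $\sqrt2$.

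\textbf{Construction.} Set $w(r,s_j)=1$, $w(s_j,s_{j'})=2$, and $w(x_i,s_j)=1$ if $u_i\in S_j$, else $w(x_i,s_j)=2$. Set $w(x_i,x_{i'})=2$ and $w(r,x_i)=2$. All values lie in $\{1,2\}$, so the triangle inequality holds automatically. The tree-depth-$3$ requirement is handled through the level structure defined above. To force the number of level-$2$ vertices to be $\ell$, we replace $r$ by a gadget of several copies of points, or add $\ell$ extra "hub" points. If level-$2$ cannot be controlled directly, we instead encode $\ell$ through dilation: we add $K$ pairwise-far clusters, so that only a bounded number of vertices can be hubs before some pair gets dilation larger than $t$. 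We can also apply the construction for $k>3$ by padding: attaching $k-3$ nested layers (e.g.\ a chain of gadget copies) raises the required depth by one each time, yielding hardness for every $k\ge 3$.

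\textbf{Completeness.} Given a cover $S_{j_1},\dots,S_{j_\ell}$, take the star from $r$ to all $s_j$ (which has depth $2$), and attach each $x_i$ to one chosen $s_j$ containing it, making $x_i$ a leaf at depth $3$. We then check all pairs:
\begin{itemize}
\item $x_i$ to $s_{j}$ with $u_i\in S_j$ goes via $s_{j_a}$ and $r$, of length $1+1+1=3$ against distance $1$. This is too much.
\end{itemize}
So the naive choice fails, and the real work is to pick distances $a<b$, with $b\le 2a$ for the metric, such that the canonical paths have ratio exactly $t$. Scaling gives a threshold $t=2$ in the yes case. In the no case, some element must be attached to a vertex outside its sets, or must route through depth-$3$ detours, and the ratio becomes at least $2\sqrt2$ or $\sqrt2\, t$. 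The intended step is to set $w(x_i,s_j)=1$ for $u_i\in S_j$ and $\sqrt2$-type values otherwise, and to tune $w(r,\cdot)$, so that dilation $\le t$ is achievable iff a cover exists. Finding consistent numbers is a computation I would do once the structural lemma below is in place.

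\textbf{Soundness (main obstacle).} We must show that any spanner of tree-depth $3$ and dilation $<\sqrt2\,t$ induces a set cover of size $\ell$. The plan is to use the level argument. Since there are no edges within a level, and every pair at distance $1$ needs a path of length $<\sqrt2\,t$, the levels essentially must be (unique level-$1$ vertex, some level-$2$ hubs, leaves). We then show that each element point must be adjacent to a hub set-point containing it, since any other route is too long, and that the number of hubs is forced to be $\le \ell$ by the gadget. Ruling out degenerate configurations, such as an element point serving as a hub or multiple components, is the delicate part. I expect to handle it by case analysis on which point is level $1$, using the tree-depth recursion given above.
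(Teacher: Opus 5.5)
Your proposal does not yet contain a reduction. You observe yourself that the distances you fix break completeness. The two ingredients that carry the proof, a mechanism limiting the chosen sets to $\ell$ and the soundness argument, are deferred as ``a computation I would do'' and ``the delicate part''.

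Worse, with one point per set and one point per element there is nothing for soundness to work with. Completeness forces $w(r,s_j)+w(r,x_i)\le 2$ whenever $u_i\in S_j$: for an unchosen $S_j$, both $s_j$ and $x_i$ are level-$3$ leaves, and with your other distances the only path of length at most $2$ between them runs through $r$. But then, with $w(s_j,s_{j'})=w(x_i,x_{i'})=2$, the star centred at $r$ (tree-depth $2$) has dilation at most $2$ on \emph{every} instance, so the reduction collapses. The star at $r$ only fails on pairs that are much closer to each other than to $r$. Your construction has no such pairs except the element--set pairs at distance $1$, and those are exactly the ones that must be routable through $r$.

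The paper supplies such close pairs by blowing up every point into a gadget.
\begin{itemize}
\item Each element becomes a $4$-cycle with unit edges, at distance $1+\alpha$ from the centre $c$. Routing a unit pair through $c$ costs $2+2\alpha=2\sqrt2$, and using a cycle vertex as hub gives dilation $3$. So all four vertices are on level $3$ and share one level-$2$ neighbour at distance $1$ from all of them, i.e.\ a set vertex of a set containing the element.
\item The budget is enforced by $\ell$ separate set gadgets, each a clique on $\max\{m,4\}$ points (with a vertex for each set), pairwise distance $1-\beta$, and distance $1-\alpha$ to $c$. Routing inside a gadget through $c$ costs $(2-2\alpha)/(1-\beta)=2\sqrt2$. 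Hence each gadget has exactly one level-$2$ vertex, and its identity selects one set.
\item The asymmetric root distances $1-\alpha$ and $1+\alpha$ sum to $2$. This repairs the completeness failure you found: an element vertex and an unchosen set vertex at distance $1$ are joined via $c$ with dilation exactly $2$.
\item A separate centering gadget (the cycle $c_1,\dots,c_4$, each $c_i$ carrying its own $4$-cycle) forces $c$ to be the unique level-$1$ vertex.
\item The choice $\alpha=\sqrt2-1$, $\beta=2-\sqrt2$ makes every violated configuration cost at least $2\sqrt2$. This yields both the decision hardness for $t=2$ and the $\sqrt2$ inapproximability.
\end{itemize}
Your ``$K$ pairwise-far clusters'' idea points vaguely in this direction. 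Without gadgets whose internal pairs cannot afford the detour through the root, the level-structure case analysis you plan has no leverage.
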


The above theorem is shown by proving NP-hardness for the following problem:

\begin{center}
\begin{tcolorbox}[width=10cm]
\begin{center}
 \begin{tabular}{rl}
\textbf{Problem:} & \textsc{Metric-TreeDepth-3-Spanner}\\
 \textbf {Given:} & A metric space $(M,w)$ and a dilation bound $t$.\\
 \textbf {Question:} & Is there a $t$-spanner of tree-depth 3 on $(M,w)$?\end{tabular}
\end{center}
\end{tcolorbox}
\end{center}

The NP-hardness is shown by a reduction from \textsc{SetCover}.

\begin{center}
\begin{tcolorbox}[width=12cm]
\begin{center}
 \begin{tabular}{rl}
\textbf{Problem:} & \textsc{SetCover}\\
 \textbf {Given:} & A universe $U=\{s_1,\dots,s_n\}$, a collection $\mathcal{S}=\{S_1,\dots,S_m\}$\\
 &of subsets of $U$ and a positive integer $k$.\\
 \textbf {Question:} & Is there a subset $\mathcal{C}\subseteq\mathcal{S}$ of $k$ sets so that $\bigcup_{C\in\mathcal{C}}C=U$? \end{tabular}
\end{center}
\end{tcolorbox}
\end{center}

\paragraph*{The Construction}
The metric $(M,w)$ is defined using a graph $G$ that consists of gadgets for the sets of the input set $\mathcal{S}$ and for the elements of the universe $U$. The metric closure of $G$ is the input metric $(M,w)$ for \textsc{Metric-TreeDepth-3-Spanner}. The dilation bound for the \textsc{Metric-TreeDepth-3-Spanner} problem is $t=2$. The constants $\alpha$ and $\beta$ in the construction can be chosen as $\alpha=\sqrt{2}-1\approx0.414$ and $\beta=2-\sqrt{2}\approx0.586$. We define the gadgets of $G$ as follows:

\begin{figure}[htbp]
  \centering
  \includegraphics[page=2]{figures_tree-depth.pdf}
  \caption{Illustration of the construction used for the reduction. The sets used in the example are $S_1=\{s_1,s_2\},S_2=\{s_3,s_4\},S_3=\{s_2,s_3\},S_4=\{s1_,s_4\}$.}
  \label{fig:np-hard2}
\end{figure}

\begin{enumerate}
    \item For every $s\in U$ a gadget of 4 points is constructed. These 4 points form a cycle where each edge has weight 1.

    \item For the sets $\mathcal{S}$ there are $k$ set gadgets in total. Each set gadget consists of a total of $\max\{m,4\}$ points which form a complete graph where every edge has weight $1-\beta$.
    \item The graph contains a special vertex $c$ that has an edge of weight $1-\alpha$ to the vertices of the set gadgets and an edge of weight $1+\alpha$ to every vertex of an element gadget. 

    \item There is also a centering gadget that makes sure that $c$ is the only vertex of level 1 in a spanner on $G$. This gadget consists of subgraphs $G^c_i$ for $i\in\{1,2,3,4\}$, where each $G^c_i$ consists of a cycle of 4 vertices whose edges all have weight $1$. There is also a subgraph $G^c$ that consists of a cycle of 4 vertices $\{c_1,c_2,c_3,c_4\}$ whose edges all have weight $1$. Each vertex $c_i$ has edges of weight $1$ to the vertices of $G^c_i$ as well as an edge to $c$.
\end{enumerate}

\paragraph*{Soundness and Correctness of the Reduction}

Observe that all gadgets together form a connected graph. As $G$ is connected, every spanner on $(M,w)$ must be connected and can have only one vertex of level 1. 

Before we can prove the soundness of the reduction we need a few auxiliary lemmas. The first lemma proves that the centering gadget works as intended.

\begin{lemma}
\label{lem:centered}
    Let $G'$ be a spanner of tree-depth 3 on $(M,w)$. The vertex $c$ must be the only vertex of level 1 in $G'$. If this does not hold the dilation of $G'$ must be at least $2\sqrt{2}$.
\end{lemma}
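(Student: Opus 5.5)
The plan is to argue by contradiction. Suppose $G'$ is a spanner of tree-depth $3$ on $(M,w)$ whose unique level-1 vertex is some $r\neq c$; uniqueness holds because $(M,w)$ is the metric closure of a connected graph, so $G'$ must be connected. We will show that some pair of points then has dilation at least $2\sqrt2$.

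First we collect the structural consequences of tree-depth $3$ with root $r$. Every component of $G'-r$ has tree-depth at most $2$, so it is a star. Consequently, every path in $G'$ avoiding $r$ has at most $3$ vertices. Moreover, any path with $4$ or more vertices must pass through $r$, and it must do so through a level-2 star center.

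Next we locate a gadget far from $r$ whose internal pairs would need a long path. The centering gadget supplies five $4$-cycles $G^c$ and $G^c_1,\dots,G^c_4$ with unit edges. Within each such cycle, adjacent vertices are at distance $1$ and opposite vertices at distance $2$. The vertex $r$ lies in at most one of these cycles. Beyond that, for $r\notin\{c,c_1,\dots,c_4\}$, $r$ is at distance $1$ only from vertices in its own gadget or adjacent gadget. So we pick a cycle $Q=\{q_1,q_2,q_3,q_4\}$ not containing $r$ and as far from $r$ as possible; for instance, $G^c_j$ with $c_j$ not adjacent to $r$. We can always find $Q$ with $w(r,q_i)\ge 2$, or at least $\ge 1$, for all $i$.

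The main step is a case analysis on how $G'[Q]$ and the detours look. Inside $Q$, the graph $G'-r$ induces a star forest. If every pair in $Q$ is served without $r$, then $Q$ with its needed connections must lie in one star. That star's center $x$ must satisfy $w(x,q_i)+w(x,q_j)\le 2w(q_i,q_j)$ for adjacent $q_i,q_j$, i.e.\ at most $2$. Candidates for $x$ are vertices at distance $1$ from all four $q_i$, namely $c_j$. So we must rule out, or handle, the case where $c_j$ is a level-2 center serving $G^c_j$. This is exactly why there are four such cycles plus $G^c$: they cannot all be served by their own $c_j$ as stars, because then $c_1,\dots,c_4$ would all be level-2 vertices in different components of $G'-r$. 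The pairs $c_i,c_{i+1}$ (distance $1$) would then have to route through $r$. Since $r\neq c$ is at distance at least $1$ from one of them and at least $\sqrt2$ in combination, the cost is at least $2\sqrt2$. Otherwise, some pair $q_i,q_j$ at distance $1$ is routed through $r$ at cost at least $w(r,q_i)+w(r,q_j)$. With $Q$ chosen far from $r$, this is at least $2\cdot\sqrt2$ relative to distance $1$, and here we would verify the metric-closure distances, which involve $\alpha,\beta$ only near $c$.

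The expected obstacle is the bookkeeping of distances when $r$ is near $c$, namely $r$ in a set gadget at distance $1-\alpha=2-\sqrt2$ from $c$. There the $2\sqrt2$ bound must come from the pair $c_i,c_{i+1}$ or from $c$ to a far cycle vertex, giving a ratio like $(1-\alpha+1+\ldots)/\ldots$. The choice $\alpha=\sqrt2-1$ makes $(1-\alpha)+(1-\alpha)+\ldots$ hit exactly $2\sqrt2$. I would first attempt this tight case, computing $w(r,c_i)=2-\alpha=3-\sqrt2$, and check which pair attains ratio $\ge 2\sqrt2$.
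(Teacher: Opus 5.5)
Your strategy matches the paper's. Each $G^c_j$ forces $c_j$ to be a level-2 star centre adjacent to all of $G^c_j$. Then two adjacent $c_i,c_{i+1}$ lying in different stars of $G'-r$ must be joined through the root $r$, which is cheap only if $r=c$.

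As written, however, the case analysis breaks exactly when $r$ lies in the centering gadget, e.g.\ $r=c_1$. There your claim that ``$c_1,\dots,c_4$ would all be level-2 vertices'' is false. Your fallback branch (``otherwise some pair of $Q$ is routed through $r$, and $Q$ was chosen far from $r$'') does not rescue it. The gadget not served by a level-2 centre is then $G^c_1$, which sits next to $r$ and has all its pairs at dilation at most $2$ via $r$. You only analyse one far cycle $Q$, but the pivot to ``they cannot all be served by their own $c_j$'' needs the analysis for every $G^c_j$.

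The repair is the paper's counting argument:
\begin{itemize}
\item $r$ lies in $V(G^c_j)\cup\{c_j\}$ for at most one $j$.
\item For each of the at least three other indices, every $q\in V(G^c_j)$ has $w(r,q)=w(r,c_j)+1\ge 2$, so routing an adjacent pair through $r$ costs at least $4$.
\item A star centred inside $G^c_j$ leaves some adjacent pair at cost $1+2=3$.
\item A centre outside $V(G^c_j)\cup\{c_j\}$ costs at least $4$.
\end{itemize}
Hence, unless some pair already has dilation at least $3$, these at least three $c_j$ are centres of distinct stars of $G'-r$. Any three vertices of the $4$-cycle $G^c$ contain an adjacent pair, which gives the pair you route through $r$.

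Second, the decisive inequality is never actually proved. ``At distance at least $1$ from one of them and at least $\sqrt2$ in combination'' does not yield $w(c_i,r)+w(r,c_{i+1})\ge 2\sqrt2$. Your last paragraph leaves this computation open and expects it to be tight through $\alpha$. What is needed is a short enumeration over $r\notin\{c,c_i,c_{i+1}\}$:
\begin{itemize}
\item another $c_j$ gives $1+2=3$;
\item a vertex of some $G^c_j$ gives at least $3$;
\item a set-gadget vertex gives $2(2-\alpha)=6-2\sqrt2$;
\item an element-gadget vertex gives $2(2+\alpha)=2+2\sqrt2$.
\end{itemize}
So the bound is at least $3>2\sqrt2$, and nothing is tight here; $\alpha$ and $\beta$ are tight only in the set-gadget and element-gadget lemmas.

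The same issue affects your star-centre condition. Phrased with threshold $2$, it only gives dilation $>2$, whereas the lemma claims $\ge 2\sqrt2$; it is the concrete values $3$ and $4$ above that deliver the stated bound.

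A minor point: a path through $r$ need not enter $r$ via a level-2 centre, since $r$ may be adjacent to level-3 leaves. You do not use this claim, but it is false.
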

\begin{proof}
First we argue that the dilation for the vertices of the subgraphs $G^c_i$ for $i\in \{1,2,3,4\}$ can only be at most 2 if every vertex of $G^c_i$ has an edge to $c_i$. If one vertex of $G^c_i$ is a vertex of level 1 in $G'$ and one of level 2 in $G'$ then it is clear that the dilation in $G'$ for the vertices of $G^c_i$ is at most 2. But since not every $G^c_i$ can contain a vertex of level 1 the only way for the vertices of $G^c_i$ to have dilation at most 2 is if $c_i$ is a vertex of level 2 and has an edge to every vertex in $G_i^c$. A vertex of $G_i^c$ being level 2 in $G'$ does not suffice as in this case two vertices would have dilation 3. This means that at least three vertices of $G^c$ must be of level 2 in $G'$. We claim that in fact all 4 vertices must be of level 2. As at least three vertices are of level 2 and cannot have edges between them, the only way for a dilation of at most 2 between these vertices is via $c$. Otherwise, the dilation would be at least 3.  This means that all 4 vertices of $G^c$ must be of level 2 and $c$ must the single level 1 vertex.
\end{proof}

\begin{lemma}
\label{lem:set_gadget}
    Let $G'$ be a spanner of tree-depth 3 on $(M,w)$. Exactly one vertex of every set gadget is of level 2 in $G'$. If this does not hold the dilation of $G'$ must be at least $2\sqrt{2}$.
\end{lemma}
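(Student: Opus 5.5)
The plan is to reduce everything to the structure forced by \cref{lem:centered} and then do a short distance computation inside a single set gadget. Assume $G'$ has tree-depth $3$ and dilation strictly less than $2\sqrt{2}$; I will show that exactly one vertex of every set gadget has level~2.

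\textbf{Structure of $G'$.} By \cref{lem:centered}, $c$ is the unique vertex of level~1. Hence $G'-c$ has tree-depth at most $2$. Each connected component of $G'-c$ is therefore a star, since connected graphs of tree-depth at most $2$ are stars. In such a component, the single level-2 vertex is the center and all other vertices are level-3 leaves. Consequently every path in $G'$ that avoids $c$ stays inside one star component and has at most $2$ edges. Moreover, two level-2 vertices never lie in the same component of $G'-c$.

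\textbf{Paths through $c$ are too long.} Fix a set gadget $K$ with at least $4$ vertices, and let $u,v\in K$ be distinct. Their distance in the metric is $w(u,v)=1-\beta=\sqrt2-1$. Any $u$--$v$ path through $c$ has length at least $w(u,c)+w(c,v)=2(1-\alpha)=4-2\sqrt2$, by the triangle inequality in the metric closure. This equals $2\sqrt2(\sqrt2-1)=2\sqrt2\,w(u,v)$. So, since the dilation is below $2\sqrt2$, the shortest $u$--$v$ path in $G'$ avoids $c$. Applying this to every pair shows that all vertices of $K$ lie in a single star component $S$ of $G'-c$. Therefore at most one vertex of $K$ has level~2.

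\textbf{The center of $S$ lies in $K$.} It remains to exclude the case where the center $x$ of $S$ is outside $K$. In that case any two $u,v\in K$ are leaves of $S$, so they are joined only via $u$--$x$--$v$, a path of length at least $w(u,x)+w(x,v)$. I would bound $w(u,x)$ for each type of outside vertex $x$, using the metric closure of $G$:
\begin{itemize}
\item if $x$ is in an element gadget, $w(u,x)=(1-\alpha)+(1+\alpha)=2$;
\item if $x$ is in another set gadget, $w(u,x)=2(1-\alpha)$;
\item if $x$ is in the centering gadget, $w(u,x)\ge 1-\alpha+1$, assuming the edges $cc_i$ have weight $1$.
\end{itemize}
In all cases $w(u,x)\ge 2(1-\alpha)$. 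Hence the path has length at least $4(1-\alpha)>2\sqrt2\,(1-\beta)$, contradicting the dilation bound. Thus the center of $S$ is a vertex of $K$, and exactly one vertex of $K$ has level~2.

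\textbf{Main obstacle.} The main difficulty is the distance bookkeeping in the metric closure. In particular, one has to be sure that no shortcut in $G$ makes an outside vertex closer to $K$ than $2(1-\alpha)$. This relies on every route from $K$ to the rest of $G$ passing through $c$, and on the unspecified weight of the $cc_i$ edges being at least $1-\alpha$. I would verify this carefully from the construction.
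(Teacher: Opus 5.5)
Your proof takes the same route as the paper's. \cref{lem:centered} makes $c$ the unique level-1 vertex, a detour through $c$ costs $2(1-\alpha)=2\sqrt2\,(1-\beta)$, and a level-2 hub outside the gadget is excluded by a distance estimate. Making the star structure of $G'-c$ explicit is just a cleaner way of organising the paper's two cases (no level-2 vertex, or several level-2 vertices, in the gadget).

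One claim in the bookkeeping you flagged as the main obstacle is wrong. It is not true that every route from $K$ to the rest of $G$ passes through $c$. The enumerated construction does not list them, but the intended graph $G$ has weight-$1$ edges from the vertex of a set gadget representing $S_j$ to all four vertices of every element gadget with $s\in S_j$. The paper's later soundness proof relies on them (``there are only edges of distance 1 between a vertex of a set gadget and an element gadget if the corresponding set contains the element''), and without them the reduction would not encode \textsc{SetCover}.

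Hence, for $x$ in an element gadget, $w(u,x)$ can be $1$, or $(1-\beta)+1=\sqrt2$ via a neighbour in $K$, rather than $2$. Your uniform bound $w(u,x)\ge 2(1-\alpha)=4-2\sqrt2\approx 1.17$ therefore fails.

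The conclusion is unaffected, because you only need $w(u,x)+w(x,v)\ge 2(1-\alpha)$. Every edge of $G$ leaving $K$ has weight $1-\alpha$ (to $c$) or $1$ (to an element gadget), so $w(u,x)\ge 1-\alpha$ for every $x\notin K$. The two-edge path $u$--$x$--$v$ through an outside hub thus already has length at least $2(1-\alpha)=2\sqrt2\,w(u,v)$, exactly as in your step for $c$. For an element-gadget hub you even get length at least $2$, i.e.\ dilation at least $2/(1-\beta)$, which is the bound the paper uses.
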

\begin{proof}
    By \cref{lem:centered} it is clear that the vertices of a set gadget cannot be of level 1. Now suppose that there is no vertex of level 2 in the set gadget. This means that all vertices of the set gadget must be connected through a vertex of level 2 outside the gadget or simply via $c$. As the distance between two vertices in a set gadget is $1-\beta$ and the edges to $c$ are of length $1-\alpha$ the dilation would be at least $(2-2\alpha)/(1-\beta)\geq2\sqrt{2}>2$. If they are connected via a vertex of an element gadget the dilation would be at least $2/(1-\beta)\geq2\sqrt{2}>2$. Finally, if the vertices are connected through a vertex of a different set gadget then the detour would be at least $2-2\alpha$ again leading to a dilation of at least $(2-2\alpha)/(1-\beta)\geq2\sqrt{2}>2$.
    Should there be multiple vertices of level two in a set gadget then there must be two vertices of the gadget that are connected via $c$ leading to a dilation of at least $(2-2\alpha)/(1-\beta)\geq2\sqrt{2}>2$.
\end{proof}

\begin{lemma}
\label{lem:element_gadget}
    Let $G'$ be a spanner of tree-depth 3 on $(M,w)$. Every vertex of an element gadget must be of level 3 in $G'$. If this does not hold the dilation of $G'$ must be at least $2\sqrt{2}$.
\end{lemma}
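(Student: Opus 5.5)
By \cref{lem:centered}, $c$ is the unique level-1 vertex, so every other vertex has level 2 or 3. The plan is therefore to assume that some vertex $v$ of an element gadget has level 2 and derive a dilation of at least $2\sqrt{2}$. We then split according to how the other three vertices of the same 4-cycle gadget are handled. Throughout we use two facts: vertices of the same level are never adjacent, and a level-3 vertex can only have neighbours on the root-to-leaf path of its own tree node. So every neighbour of a level-3 vertex is $c$ or the unique level-2 ancestor of its subtree.

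\textbf{Case 1: two or more gadget vertices have level 2.} Take two such vertices $u,v$. They are not adjacent, and they lie in different level-2 subtrees, because each level-2 node is the unique level-2 vertex of its own subtree. Every vertex other than $c$ lies in the subtree of exactly one level-2 vertex, and each edge joins a vertex to an ancestor. Hence any $u$–$v$ path has to pass through $c$. Its length is at least $w(u,c)+w(c,v)=2(1+\alpha)=2\sqrt{2}$, while $w(u,v)\le 2$. This gives dilation at least $\sqrt{2}$, which is too weak. So for an adjacent pair ($w=1$) we get dilation $2\sqrt{2}$, and we should choose $u,v$ to be cycle neighbours whenever possible. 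If the only level-2 vertices of the gadget are an antipodal pair, we switch to the argument of Case 2 applied to a level-3 vertex next to them.

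\textbf{Case 2: exactly one gadget vertex $v$ has level 2, and the others have level 3.} Look at the cycle neighbours $x,y$ of $v$ and the antipodal vertex $z$. If $x$ is not in $v$'s subtree, then every $x$–$v$ path passes through $c$. Its length is at least $2(1+\alpha)=2\sqrt{2}$ against a distance of $1$, which settles it. So $x,y,z$ all lie in $v$'s subtree as level-3 vertices. Their only possible neighbours are then $v$ and $c$, and level-3 vertices are pairwise non-adjacent. Consequently the route from $x$ to $z$ has length at least $\min\{w(x,v)+w(v,z),\,2(1+\alpha)\}=2$ against $w(x,z)=2$, which does not suffice. We need a different pair. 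Take $x$ and $y$: both are at distance 2 from $v$'s antipode, and the first route gives $d_{G'}(x,y)\ge 2$ with $w(x,y)=2$. This is again insufficient, so Case 2 is the genuine obstacle.

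The fix I would try first is to exploit the other constraint on $v$. The vertex $v$ is a level-2 vertex outside the set gadgets, and it has to serve the rest of the structure. Its subtree has depth one below it, so all of its children are leaves adjacent only to $v$ and $c$. The element gadget therefore forces nothing directly, and the contradiction must come from distances between the gadget and $c$ or the set gadgets. Concretely, $d_{G'}(x,c)$ is realised either by the direct edge ($1+\alpha$) or by the route via $v$. If I fail to extract a $2\sqrt{2}$ bound from this, the realistic outcome is that the lemma's intended argument shows an element vertex of level 2 would prevent some set-gadget vertex, or a vertex on the $c$ side, from meeting dilation 2, after recomputing distances in the metric closure with weights $1\pm\alpha$ and $1-\beta$. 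I expect this metric-closure distance bookkeeping in Case 2 to be the main difficulty.
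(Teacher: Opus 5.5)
\textbf{There is a genuine gap in your Case~2, and it comes from miscomputed distances in the 4-cycle, not from any missing global structure.}

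Your Case~1 is fine. It matches the paper's second case: a pair at distance $1$ whose every connecting path passes through $c$ has dilation at least $2(1+\alpha)=2\sqrt{2}$. Your treatment of an antipodal level-2 pair also works, because a level-3 cycle neighbour of both lies in at most one of their subtrees.

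The error is in Case~2. Write the cycle as $v,x,z,y$ with $z$ antipodal to $v$. Then $x$ and $z$ are \emph{cycle neighbours}, so $w(x,z)=1$, while $w(x,v)=1$ and $w(v,z)=2$. You wrote $w(x,v)+w(v,z)=2$ and $w(x,z)=2$ instead. Similarly, $x$ and $y$ are at distance $1$ from $v$'s antipode, not $2$.

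With the correct values the case closes immediately:
\begin{itemize}
\item You already showed that $x$ and $y$ lie in $v$'s subtree.
\item If $z$ did not, the pair $x,z$ would be separated by $c$, giving dilation $2\sqrt{2}$ as in Case~1.
\item So $x,y,z$ are level-3 descendants of $v$. They are pairwise non-adjacent, and their only possible neighbours are $v$ and $c$.
\item Hence every $x$--$z$ path has length at least $\min\{w(x,v)+w(v,z),\,w(x,c)+w(c,z)\}=\min\{3,\,2+2\alpha\}=2\sqrt{2}$, against $w(x,z)=1$.
\end{itemize}
This is exactly the paper's first case. There, a single level-2 gadget vertex adjacent to the other three leaves two vertices at distance $1$ with no edge between them, and their only routes go through $v$ or through $c$.

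Your final paragraph, which looks for the bound via set gadgets or metric-closure bookkeeping around $c$, heads in the wrong direction. The lemma is purely local to the element gadget. As written, your proposal stops before this step and so does not prove the lemma.
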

\begin{proof}
    By Lemma \ref{lem:centered} it is clear that the vertices of an element gadget cannot be of level 1. Now suppose that there are vertices of level 2. If there is one vertex of level 2 that has edges to the three other vertices of the gadget then there are two vertices of distance 1 that do not share an edge. The dilation for these two vertices would then be at least $3$. If there are two or more vertices of level 2 or there is a vertex of level 3 in the gadget that does not have an edge to a vertex inside the gadget then there must be vertices in the gadget of distance 1 whose detour must visit $c$ leading to a dilation of at least $2+2\alpha\geq2\sqrt{2}>2$. 
\end{proof}
    
With the above lemmas at hand we are ready to prove the correctness and soundness of the reduction.

\begin{lemma}
        If there is a 2-spanner of tree-depth 3 on $(M,w)$ then there must be a set cover $C$ of size $k$ for $\mathcal{S}$.
\end{lemma}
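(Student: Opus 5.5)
We take a $2$-spanner $G'$ of tree-depth $3$ on $(M,w)$ and use the level structure fixed by the preceding lemmas. Since the dilation is $2<2\sqrt{2}$, \cref{lem:centered}, \cref{lem:set_gadget} and \cref{lem:element_gadget} all apply. So $c$ is the unique level-$1$ vertex, each of the $k$ set gadgets contains exactly one level-$2$ vertex, and every element-gadget vertex has level $3$.

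We first pin down the level-$2$ vertices. The component of $G'-c$ containing a set gadget has tree-depth at most $2$, so it is a star centred at its level-$2$ vertex. All level-$2$ vertices lie in distinct such components, since they are pairwise non-adjacent and are separated once $c$ is removed. Each element gadget is a $4$-cycle of level-$3$ vertices, and we claim it has a level-$2$ neighbour. Two adjacent gadget vertices $x,y$ at distance $1$ need a path of length at most $2$. The route through $c$ costs $2+2\alpha>2$, and $x,y$ cannot be adjacent since level-$3$ vertices of one star are leaves. Hence $x$ and $y$ must share a common level-$2$ neighbour $z$ with $w(x,z)+w(z,y)\le 2$. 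Applying this around the cycle, and using that a level-$3$ vertex has only one level-$2$ parent, all four vertices of the gadget hang below the same level-$2$ vertex $z_s$, with each distance $w(\cdot,z_s)\le 1$.

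The key step is identifying which vertices can serve as $z_s$. The centering-gadget vertices $c_i$ are excluded: their stars already contain $G^c_i$, and by the metric-closure distances they lie far from the element gadgets. Level-$2$ vertices of other element gadgets do not exist. So $z_s$ is the level-$2$ vertex of some set gadget. The natural way to encode membership is for set-gadget vertices to have edges of weight $1$ to the element vertices of the elements in their set. The figure and the $\max\{m,4\}$ gadget size suggest that vertex $j$ of each set gadget stands for $S_j$ and has such edges to the elements of $S_j$. Non-members are then at distance at least $(1-\beta)+1$ or $(1-\alpha)+(1+\alpha)=2>1$, so the bound $w(\cdot,z_s)\le 1$ forces $z_s$ to be a vertex representing some $S_j\ni s$. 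The main obstacle is exactly this verification, namely that the metric-closure distances force $z_s$ to correspond to a set containing $s$. I would carry it out by computing the shortest distance from an element vertex to each type of vertex.

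With that established, choose for each of the $k$ set gadgets the set $S_j$ represented by its level-$2$ vertex, giving a family $\mathcal{C}$ of at most $k$ sets. Every element $s$ is attached to some such level-$2$ vertex whose set contains $s$, so $\mathcal{C}$ covers $U$. If fewer than $k$ distinct sets arise, pad $\mathcal{C}$ with arbitrary sets to obtain exactly $k$.
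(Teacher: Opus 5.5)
Your proposal is correct and follows essentially the same route as the paper: invoke the three gadget lemmas, show that all four vertices of each element gadget are leaves of one common level-$2$ set-gadget vertex at distance exactly $1$, and read the cover off the $k$ level-$2$ set-gadget vertices. Arguing with adjacent (distance-$1$) gadget vertices is what forces the common level-$2$ neighbour, since any path between them through $c$ has length at least $2+2\alpha=2\sqrt{2}>2$; for opposite (distance-$2$) vertices the detour through $c$ would stay within dilation $2$. One step needs to be stated explicitly: every vertex is at distance at least $1$ from an element vertex, so $w(x,z)+w(z,y)\le 2$ forces $w(x,z)=w(z,y)=1$.
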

\begin{proof}
    We claim that the vertices of an element gadget must all have an edge to a vertex $v$ of a set gadget that has distance 1 to the vertices of the element gadget. By Lemma \ref{lem:element_gadget} and \ref{lem:set_gadget} it is clear that all vertices of an element gadget must be connected to a vertex of a set gadget which must be the same set gadget as otherwise the detour for two vertices of an element gadget of distance 2 must be via $c$ which would lead to a dilation of at least $1+1-\alpha+1+1-\alpha\geq2\sqrt{2}>2$. Now suppose the vertex $v$ does not have distance 1 to the vertices of the element gadget. Let $u$ and $u'$ be two vertices of distance 1 in the element gadget. The length of an edge from $v$ to $u$ or $u'$ is at least $1+1-\beta>1$. The dilation for $u$ and $u'$ therefore would be at least $2+2-2\beta\geq2\sqrt{2}>2$.

    There is a total of $k$ set gadgets each containing exactly one vertex of level 2 in $G'$. Every element gadget has edges to one of these level 2 vertices that has distance 1 to its vertices. By construction there are only edges of distance 1 between a vertex of a set gadget and an element gadget if the corresponding set contains the element of the element gadget. This means there are $k$ sets from $\mathcal{S}$ whose union contains every element of $U$. 
\end{proof}
From the above proof it is clear that if no set cover $C$ of size $k$ for $\mathcal{S}$ exists, then there either must be an element gadget whose vertices only have edges from a level 2 vertex that has distance 2 to the vertices of the gadget, leading to a dilation of $1+1-\alpha+1+1-\alpha\geq2\sqrt{2}$ or one of the properties from \cref{lem:centered}, \cref{lem:set_gadget} or \cref{lem:element_gadget} is violated, also leading to a dilation of $2\sqrt{2}$.

Lastly we prove correctness of the reduction and therefore Theorem \ref{thm:np-hard}.

\begin{lemma}
    If there is a set cover $C$ of size $k$ for $\mathcal{S}$ then there is a 2-spanner of tree-depth 3 on $(M,w)$.
\end{lemma}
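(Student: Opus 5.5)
We build the spanner $G'$ explicitly from the set cover $C=\{C_1,\dots,C_k\}$ and then check tree-depth and dilation.

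\textbf{Construction.} Assign $C_j$ to the $j$-th set gadget and choose one vertex $v_j$ of that gadget to represent $C_j$. All edges of $G'$ will have weight equal to their metric distance $w$, and we take the following edges:
\begin{itemize}
\item $c$ to every set-gadget vertex, to every element-gadget vertex, and to $c_1,\dots,c_4$;
\item $c_i$ to all four vertices of $G^c_i$, for each $i$;
\item $v_j$ to every other vertex of the $j$-th set gadget;
\item for each element $s$, the edges between $v_j$ and all four vertices of the element gadget of $s$, where $j$ is some fixed index with $s\in C_j$.
\end{itemize}
No other edges are added; in particular, there are no edges inside element gadgets, $G^c$ or the $G^c_i$. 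For this to work, the metric must give distance $1$ between $v_j$ and the vertices of the gadget of $s$ whenever $s\in C_j$. So we read the construction as including a weight-$1$ edge in $G$ between the vertices of a set gadget and those of an element gadget whenever the corresponding set contains the element. This is what the preceding proof relies on.

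\textbf{Tree-depth.} We exhibit levels directly. Put $c$ on level 1. Removing $c$ leaves one component per $c_i$ (a star centred at $c_i$) and one component per $v_j$ (a star centred at $v_j$ with leaves its gadget and its assigned element vertices). Each component has depth 2, so $\td(G')\le 3$. Equality is not needed, since tree-depth at most $3$ is what is asked. If "tree-depth $3$" is read as exactly $3$, note that the graph contains a path on $4$ vertices, so its tree-depth is also at least $3$.

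\textbf{Dilation.} The main work is checking that the dilation is at most $2$ for every pair $x,y$. Since $w$ is the metric closure of $G$, it suffices to check pairs joined by an edge of $G$: every shortest path in $G$ is a sequence of such pairs, and if each is stretched by at most $2$, so is the whole path. We go through the edge types of $G$.
\begin{itemize}
\item Element cycle edges (weight $1$): both endpoints are adjacent to $v_j$ at distance $1$, giving a path of length $2$.
\item Set-gadget edges (weight $1-\beta$): the route via $v_j$ has length $1-\beta$ if one endpoint is $v_j$ and $2(1-\beta)$ otherwise.
\item Cycle edges of $G^c$: the route via $c$ has length $2w(c,c_i)$. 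Here we take the weight of the $c$--$c_i$ edges to be $1$ (the construction leaves it unspecified), so the ratio is $2$.
\item Cycle edges of $G^c_i$: the route via $c_i$ has length $2$.
\item $c$--$c_i$ edges, $c_i$--$G^c_i$ edges, and all edges at $c$: these are present in $G'$ directly.
\item Set-to-element edges of weight $1$: if the element is assigned to that set gadget, the route is either a direct edge or goes through $v_j$, with length at most $(1-\beta)+1\le 2$. If the element is assigned to a different gadget $j'$, we route through $c$ and get length $(1-\alpha)+(1+\alpha)=2$, ratio $2$.
\end{itemize}
Set--set edges between different gadgets are not edges of $G$, so they are handled by the closure argument. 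Every case thus has ratio at most $2$.

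I expect the main obstacle to be pinning down the unspecified weights of the construction, namely the $c$--$c_i$ edges and the set-to-element edges, so that both this lemma and the hardness lemmas hold. Once those weights are fixed, the case check is routine.
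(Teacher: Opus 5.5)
Your proof is correct. It builds the same spanner as the paper: a star at $c$, stars at the $c_i$, and in each set gadget one representative vertex for a cover set, adjacent to the rest of its gadget and to the element gadgets assigned to it. The tree-depth argument is also the same, since deleting $c$ leaves only stars.

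Where you differ is the dilation check. The paper runs a case distinction over all pairs of vertices, sorted by gadget type. For each pair type it has to lower-bound the metric distance: between different element gadgets, between set and element vertices, between centering-gadget vertices and the rest, and so on.

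You instead note that $w$ is the shortest-path metric of $G$ and that $d_{G'}$ satisfies the triangle inequality. So it suffices to verify $d_{G'}(a,b)\le 2\,w_G(ab)$ for each edge $ab$ of $G$, using only the explicitly given edge weights. This reduction buys three things:
\begin{itemize}
\item It automatically covers every pair, including ones the paper's list does not spell out, such as two vertices in different $G^c_i$.
\item It avoids computing metric distances altogether.
\item It makes explicit the two weights the construction leaves implicit, namely the $c$--$c_i$ edges and the weight-$1$ set-to-element edges. The paper's proof uses both silently.
\end{itemize}
The paper's pairwise analysis, in exchange, reports explicit stretch values for particular pair types, for example $1+\alpha$ between different element gadgets.

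Your remark that $G'$ contains a $4$-vertex path, and hence has tree-depth at least $3$, also supplies the lower bound that the paper only asserts after showing the upper bound.
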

\begin{proof}
    Let $S_{i_1},\dots, S_{i_k}$ be the sets of the cover $C$. We construct the $2$-spanner $G'$ as follows: We place an edge from $c$ to all vertices of the element and set gadgets as well as to the vertices of $G^c$. For the vertices of the centering gadget we place edges from $c_i$ to all vertices of $G_i^c$ for $i\in\{1,2,3,4\}$. Then for each $i\in\{i_1,\dots,i_k\}$ we pick a set gadget and place edges from the vertex corresponding to $S_{i}$ to all vertices in the same gadget as well as to all vertices of element gadgets for elements $s\in S_i$ that do not have edges yet.

    We now argue that the dilation of this spanner is at most 2 by a case distinction.
    \begin{itemize}
        \item Vertices in element gadgets: Let $v$ and $u$ be vertices in the same element gadget. $v$ and $u$ have either distance 1 or 2. Since there is a set that covers the element that corresponds to the gadget of $u$ and $v$ there are edges of length 1 connecting $u$ and $v$ to a vertex of a set gadget. The dilation for $u$ and $v$ is therefore at most 2. If $u$ and $v$ are in different element gadgets than their distance is at least 2. Both have an edge to the vertex $c$ of length $1+\alpha$. The dilation therefore is at most $1+\alpha\leq 2$.
        \item Vertices in set gadgets: Let $v$ and $u$ be vertices in the same set gadget. The distance between $v$ and $u$ is $1-\beta$. By the construction of $G'$ there either is a direct edge between $u$ and $v$ or there is a third vertex in the same set gadget that has an edge to both $u$ and $v$. As every edge in a set gadget has length $1-\beta$ the dilation between $u$ and $v$ is at most 2. If $u$ and $v$ are in different set gadgets then their distance is at least $2-2\alpha$. $u$ and $v$ are connected via edges of length $1-\alpha$ to $c$. The dilation therefore is at most $2$. 
        \item A vertex in a set gadget and a vertex in an element gadget: Let $u$ be a vertex of an element gadget and $v$ a vertex of a set gadget. The distance between $u$ and $v$ is at least 1. There is an edge from $u$ to $c$ of length $1+\alpha$ and an edge from $v$ to $c$ of length $1-\alpha$. The dilation therefore is at most $1+\alpha+1-\alpha=2$.
        \item Vertices in the centering gadget: Let $u$ and $v$ be vertices of $G^c$. The distance between $u$ and $v$ is at least $1$. Both vertices have an edge to $c$ of length one. The dilation therefore is at most $2$. Similarly if $u$ and $v$ are in some $G^c_i$ for $i\in\{1,2,3,4\}$ then their distance is at least 1 and both have edges of length 1 to $c_i$. Now let $v$ be a vertex of $G^c$ and $u$ a vertex of some $G^c_i$ for some $i\in\{1,2,3,4\}$. For symmetry reasons it suffices to only consider $v=c_1$. If $u$ is a vertex of $G^c_1$ then the distance between $u$ and $v$ is 1 and there is an edge of length 1 between $u$ and $v$ in $G'$. If $u$ is in $G^c_2$ or $G^c_4$ then the distance between $u$ and $v$ is 2 and there is a path via $c$ in $G'$ of length 3 which means the dilation is at most 2. If $u$ is in $G^c_3$ then the distance between $u$ and $v$ is 3 and there is a path via $c$ of length 3. 
        \item A vertex in the centering and a vertex outside the centering gadget: Let $u$ be a vertex of $G^c$. If $v$ is a vertex of an element gadget than the distance between $u$ and $v$ is at least $1+1+\alpha$ and there is a path via $c$ of length $1+1+\alpha$. If $v$ is a vertex of a set gadget than the distance is $1+1-\alpha$ and there is a path via $c$ of length $1+1-\alpha$. If $u$ is in some $G^c_i$ then the same bounds apply with an additional distance of 1 between the points and an additional edge of length 1 in the paths.
        \item The vertex $c$: As there is an edge from $c$ to all vertices of the element and set gadgets the dilation between $c$ and these vertices is 1. The same holds for vertices of $G^c$. For vertices in $G_i^c$ the distance is 2 and there is a path of length 2 via a vertex of $G^c$.
    \end{itemize}
Lastly we argue that the constructed spanner indeed is of tree-depth 3. It is not hard to see that if $c$ is deleted from $G'$ that each remaining connected component is a star. A star of course has tree-depth 2, which means that $G'$ must be of tree-depth $3$.\end{proof}

Since already for tree-depth $3$, it is NP-hard to compute a minimum dilation spanner, we cannot hope to give an XP- or even FPT-algorithm with respect to tree-depth for computing a minimum dilation spanner with bounded tree-depth. However, we present an XP-algorithm that computes a spanner with tree-depth $k$ and dilation at most $2 t^*$, where $t^*$ is the minimum dilation for a spanner of tree-depth $k$.

The idea is to first use an algorithm that decides whether there is a spanner of tree-depth $k$ and a certain dilation using Steiner points which recursively guesses points as star points for subsets of points and places an edge between every point in the subset and the star point. The star point is either a point of the given subset or a Steiner point in a face of a certain arrangement specified later. Given a star point, the algorithm computes subsets of points based on the dilation in the star. If the dilation for two points via the star point is not yet small enough, the two points get assigned to the same set and are connected through another star point in the next recursive call.

\begin{theorem} \label{thm:tree-depth-xp_Steiner}
    Given a set of $n$ points $\P\subset \mathbb{R}^2$, some positive integer $k$, and $t\geq 1$ there is an algorithm that tests whether there is a spanner of tree-depth $k$ using Steiner points that has dilation at most $t$ in time $\bigO(n^{4k+4}\log n)$.
\end{theorem}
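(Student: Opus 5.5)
We would prove \cref{thm:tree-depth-xp_Steiner} by a recursive algorithm that follows the recursive definition of tree-depth. The key structural claim is that we may assume without loss of generality that a tree-depth-$k$ spanner with Steiner points has a special shape. Each connected component has a root vertex $r$ (its level-1 vertex), and $r$ is joined by a straight edge to every other vertex of that component. Removing $r$ leaves components of tree-depth $k-1$, which are recursively of the same form. The reason this is harmless is that adding edges from a vertex to all its descendants in the elimination tree keeps the closure inside the elimination forest, so the tree-depth does not increase. Adding edges also never increases shortest-path distances, and by the triangle inequality the resulting graph is ``star of stars''.

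So a candidate solution is described by three things: a root point $r$ (an input point or a Steiner point), a partition of the remaining input points into groups, and a recursive solution of depth $k-1$ inside each group. Steiner points of lower levels may be added freely, as long as they belong to one group. Two points $p,q$ in different groups communicate only through $r$, or through a common ancestor higher up, whose route is no shorter than the route through $r$. Hence $p,q$ must satisfy $\|pr\|+\|rq\|\le t\|pq\|$ or be routed via higher ancestors.

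The algorithm therefore proceeds top-down and carries along the set $A$ of ancestor star points. For a current subset $Q$ and remaining depth $j$, we guess the star point $r$. We then form the graph $H$ on $Q$ in which $p\sim q$ iff no ancestor $a\in A\cup\{r\}$ satisfies $\|pa\|+\|aq\|\le t\|pq\|$. The groups must be unions of connected components of $H$. Taking exactly the connected components is optimal, because splitting more only makes subproblems smaller (a monotonicity argument: a valid depth-$(j-1)$ solution for a set restricts to any subset once its own Steiner points are kept). We recurse with depth $j-1$ on each component. At depth $1$ a group must be a single point, or empty apart from a Steiner point.

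To bound the number of guesses for $r$, we use the arrangement mentioned before the statement. For each pair $p,q$, the set of Steiner positions $x$ with $\|px\|+\|xq\|\le t\|pq\|$ is an ellipse with foci $p,q$. The graph $H$, and hence the recursion, depends only on which cell of the arrangement of these $\bigO(n^2)$ ellipses contains $r$. This arrangement has $\bigO(n^4)$ faces, and one representative point per face, plus the $n$ input points, suffices. Depth $k$ then yields $\bigO(n^{4})^k$ choices along a root-to-leaf chain. Because the recursion splits $Q$ into disjoint groups, the total work is $\bigO(n^{4k})$ guess combinations times the cost of building the arrangement ($\bigO(n^4\log n)$ with a sweep) and computing components. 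This gives $\bigO(n^{4k+4}\log n)$.

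The main obstacle is the correctness direction: if some depth-$k$ Steiner spanner of dilation $t$ exists, the greedy ``components of $H$'' choice succeeds. This needs two facts. The first is the normalization to the star-of-stars form, including the claim that Steiner points in a subtree may be assumed to serve only that group. The second is that the answer depends only on the arrangement face of $r$ rather than its exact position. The subtle point in the second fact is that $r$'s own distances to lower-level vertices also matter for $r$'s dilation only if $r$ is an input point. Steiner points need no dilation guarantee, which is why faces suffice for Steiner choices, while input points are enumerated separately.
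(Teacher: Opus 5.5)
Your proposal is correct and follows essentially the same route as the paper: you recursively guess a star centre among the current input points and one representative per face of the arrangement of $t$-ellipses, split the remaining points into the connected components of the ``not yet satisfied'' graph, recurse, and count $\bigO(n^{4k})$ calls of cost $\bigO(n^4\log n)$ each, with your restriction/monotonicity and face-equivalence arguments playing the role of the paper's inductive invariants $B_i^*\subseteq B_i$ and ``$\mathcal{P}_i$ refines $\mathcal{P}_i^*$''. Testing satisfaction against the whole ancestor set $A\cup\{r\}$ is the right condition (the paper's pseudocode tests only the current centre, although its proof reasons about dilation in the partial graph $G_{i-1}$), but drop your aside that a route through a higher ancestor is never shorter than the one through $r$: it is false in general, and you do not need it since you test all ancestors anyway.
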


\begin{theorem} \label{thm:tree-depth-xp}
    Given a set of $n$ points $P\subset \mathbb{R}^2$ and some positive integer $k$ there is an algorithm that computes a spanner of tree-depth $k$ and dilation at most two times the dilation of the minimum dilation spanner of tree-depth $k$ in time $\bigO(2^{k}n^{2^{k}-1})$.
\end{theorem}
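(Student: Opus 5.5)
I will derive Theorem~\ref{thm:tree-depth-xp} from a direct recursive enumeration that mirrors the recursive definition of tree-depth. The comparison object will be a spanner with Steiner points whose star centers may be placed anywhere. The algorithm guesses, level by level, a center vertex for every connected component and builds only the graph that consists of stars from each center to all vertices of its subtree. Concretely, a rooted forest $F$ of depth at most $k$ on $\P$, together with the graph $G_F$ whose edges join every vertex to all of its ancestors in $F$, always has tree-depth at most $k$, since $G_F$ is a subgraph of the closure of $F$. Conversely, every graph of tree-depth $k$ is a subgraph of such a closure. Because adding edges never increases dilation, an optimal spanner $G^*$ of tree-depth $k$ can be assumed to equal $G_{F^*}$ for some elimination forest $F^*$ of depth $k$. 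Since $\P$ is given and the optimum is connected without loss of generality, $F^*$ is a tree.

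The approximation argument will not require guessing $F^*$ exactly. Instead, I will guess a small skeleton and attach all remaining points to it greedily. The root $r$ is one of $n$ choices. A node at depth $i$ has children that must be guessed, but rather than guessing whole subtrees I would only guess the \emph{centers}, meaning the internal nodes of $F^*$. The branching count $2^k$ and the exponent $2^k-1$ suggest guessing a binary skeleton: at each level every current center spawns at most two relevant child centers. This gives at most $1+2+\dots+2^{k-1}=2^k-1$ guessed points and hence $n^{2^k-1}$ candidates, each evaluated in about $2^k$ steps of bookkeeping, or with an amortized check. The final graph assigns every non-center point $p$ to the deepest guessed center along a chain of ancestors, and $p$ is connected by star edges to that chain.

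For the factor $2$, I use the triangle inequality through a common ancestor. In $G^*$, the shortest path between $p$ and $q$ passes through their lowest common ancestor region, so some ancestor $a$ of both exists with $\|pa\|+\|aq\|\le t^*\|pq\|$. In our graph $p$ and $q$ are joined to a guessed center $a'$ that replaces $a$. I would show that $\|pa'\|\le \|pa\|+\|aa'\|$, and that $\|aa'\|$ is charged to a path of $G^*$ that is already bounded by $t^*\|pq\|$. This yields a total of at most $2t^*\|pq\|$. The decision procedure of Theorem~\ref{thm:tree-depth-xp_Steiner} motivates this charging: a Steiner center can be snapped to an input point in its star at a cost of a factor $2$, since $\|pv\|\le\|ps\|+\|sv\|$ when $v$ is the star point closest to the Steiner point $s$.

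The main obstacle is the structural claim that only two child centers per node need to be guessed, together with the proof that the remaining points can be attached without exceeding the bound $2t^*$. I would try to prove it by an exchange argument. Given the optimal tree, replace each subtree by a star centered at a suitable point, and show that this point can always be taken so that at most two branches must be kept separate. If this fails, I would fall back to guessing all centers explicitly, accepting a weaker exponent.
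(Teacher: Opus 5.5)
There is a genuine gap, and it is the step you yourself flag as the main obstacle. Restricting to closures of elimination trees, and routing shortest paths through a single common ancestor, are both fine. The problem is that the claim that a binary skeleton of $2^k-1$ guessed points suffices is false, and you never specify how the unguessed points are attached.

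Take $k=3$, a point $o$ at the origin, and six clusters. Each cluster is an equilateral triangle of side $\delta\ll R$ whose center lies at distance $R$ from $o$ in direction $2\pi j/6$. Make $o$ the root, make one vertex of each triangle a level-2 center, and take the closure. Then every pair has dilation at most $2+\bigO(\delta/R)$. Every graph your procedure can output has at most three internal nodes: the root and two level-2 centers, since guessed points at depth $3$ are leaves. So some cluster contains no center. All points of that cluster are leaves whose neighbours are centers at distance at least $R-\bigO(\delta)$, so two of them have dilation $\Omega(R/\delta)$. Hence no choice of skeleton gives any constant-factor approximation.

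Your fallback of guessing all centers does not save the XP bound either. With $m$ clusters and $R/\delta$ large enough, every spanner of dilation at most $2t^*$ needs a center inside each cluster. So the number of internal nodes of $F^*$ grows with $n$ for fixed $k$, and guessing them jointly costs $n^{\Omega(n)}$. For the same reason, your factor-$2$ charging of $\|aa'\|$ has nothing to rest on. If the true centers are guessed there is no loss at all, and if they are not, $\|aa'\|$ is not bounded by anything related to $t^*\|pq\|$.

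The missing idea is that the split of a part into child subtrees need not be guessed. In a closure graph, two points in different child subtrees of the current center can only be joined through that center or its ancestors. So every pair not yet within dilation $t$ via the stars already placed must stay together. The connected components of this conflict graph give a forced partition, and the parts are solved independently: an AND over parts of ORs over center choices, to depth $k-1$.

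Centers become enumerable once Steiner points are allowed. A center matters only through which ellipses $\{r:\|pr\|+\|rq\|\le t\|pq\|\}$ contain it, so the input points plus one point per face of $\mathcal{A}_t(\P_i)$ suffice. Steiner points also let each component receive its own copy of the optimal center, which an input point cannot provide. This yields the decision procedure of Theorem~\ref{thm:tree-depth-xp_Steiner}.

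The factor $2$ then comes only from contracting each Steiner center $c$ into its nearest input neighbour $r$. This uses $\|cr\|\le\min\{\|cp\|,\|cq\|\}$, and contraction preserves tree-depth.

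Finally, the exponent $2^k-1$ does not count guessed centers. A graph of tree-depth $k$ has no path on more than $2^k-1$ vertices, so the optimal dilation is the length ratio of such a path. The paper enumerates these $\bigO(n^{2^k-1})$ candidate values and binary-searches over them with the Steiner decision procedure.
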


Note that the dilation for spanners with Steiner points only considers pairs of points from the original set of points.

To compute the location of Steiner points, our algorithm makes use of a special arrangement of ellipses, which we define here. For two points $p,q\in \mathbb{R}^2$, we define the $t$-ellipse as the set $\{r\in \mathbb{R}^2\mid \|pr\|+\|rq\|= t\}$. For a set of points $\P\subset \mathbb{R}^2$, we define the arrangement $\mathcal{A}_t(\P)$ as the arrangement of all $t$-ellipses for pairs of points in $\P$.

\begin{algorithm}[h] \caption{}
\label{alg:td_steiner}
\begin{algorithmic}[1]
\Require a set of $n$ points ${\sf P}\subset \mathbb{R}^2$ and tree-depth bound $k\geq 2$ and a dilation bound $t$ 
\Ensure \textbf{true} if there is a spanner on $\P$ with Steiner points and dilation at most $t$ that is of tree-depth at most $k$, otherwise \textbf{false}
\State Calculate the Arrangement $\mathcal{A}_t(\P)$ and let $F$ be a set of points that contains a point from every face of $\mathcal{A}_t(\P)$
\State For each $c\in\P\cup F$, test if $\textsc{TDSpannerSteiner}(\P,c,t,{k-1})$. If \textbf{false} is returned for every call, \Return \textbf{false} else \Return \textbf{true}
\Procedure{$\textsc{TDSpannerSteiner}$}{$\P,c,t,m$}
  \State Create a graph $G$ with $\P-\{c\}$ as vertices and no edges.
  \State For every pair of points $p,q\neq c$ calculate the dilation in the star-graph on $\P$ centred at $c$. If the dilation is larger than $t$, place an edge between $p$ and $q$ in $G$.
  \State Let $\P_1,\dots,\P_\ell$ be the points of the connected components of $G$.
  \State \textbf{if} $m = 1$ \textbf{then} 

  \State \qquad \Return \textbf{ false} if there is a $\P_i$ of size $\geq 2$ else \textbf{true} 

  \State \textbf{for each} $\P_i$ \textbf{do} 
  \State \qquad Calculate the Arrangement $\mathcal{A}_t(\P_i)$ and let $F_i$ be a set of points that contains a point from every face of $\mathcal{A}_t(\P_i)$ 

  \State \Return $\bigwedge_{i = 1}^\ell\bigvee_{c'\in\P_i\cup F_i} \textsc{TDSpannerSteiner}(\P_i,c',t,m-1)$
  
\nolinenumbers \EndProcedure
\linenumbers
\end{algorithmic}
\end{algorithm}

The combinatorial complexity of the arrangement is clearly $\bigO(n^4)$, as there are $\bigO(n^2)$ ellipses and each ellipse intersects in at most 4 points with another ellipse. It can be computed in time $\bigO(n^4\log n)$, using the algorithm of Bentley and Ottmann \cite{BentleyO79}. The sets of points based on the dilation calculation can be computed in time $\bigO(n^2)$. Each call of \textsc{TDSpannerSteiner} triggers at most $n+\bigO(n^4)$ recursive calls. The recursion terminates at a depth of $k-1$. The total number of calls therefore is $\bigO(n^{4k})$ and the total running time therefore is $\bigO(n^{4k+4}\log n)$. \cref{thm:tree-depth-xp_Steiner} then follows from the following lemma:

\begin{lemma}
Given a set of points $\P\subset\mathbb{R}^2$, a tree-depth bound $k$ and a dilation bound $t$, the above algorithm tests whether there exists a spanner with Steiner points on $\P$ with dilation at most $t$ and of tree-depth at most $k$.
\end{lemma}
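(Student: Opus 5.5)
We prove both directions by induction on the recursion depth $m$, using the recursive definition of tree-depth. The key structural fact is the following. In a spanner of tree-depth $k$ (with Steiner points), fix an elimination tree and its root $c$. Removing $c$ splits the graph into components of tree-depth at most $k-1$. We may then replace the spanner by the "canonical" one in which $c$ is joined to every point of $\P$ in its subtree, and each component is handled recursively. Adding all edges from a vertex to its descendants in the elimination tree never increases the tree-depth, since the closure of an elimination forest has the same depth. It also never increases the dilation. Hence, without loss of generality, an optimal spanner is the closure of its elimination forest.

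\textbf{Soundness.} Suppose the algorithm returns true. We build the spanner from the successful branches of the recursion. Each call with centre $c$ on point set $\P$ contributes the star from $c$ to $\P$. Each successful subcall on $\P_i$ contributes its own spanner recursively, with $\P_i$ being a subset of $\P$.

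The tree-depth is at most $m+1$, by induction. The centre $c$ is a root; after deleting it, the pieces are the recursively built graphs on the $\P_i$ (together with singleton components). These pieces intersect only through $c$, since their vertex sets are disjoint.

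For the dilation, take points $p,q$. If they lie in different components $\P_i$, then by construction their dilation through the star at $c$ is already at most $t$. If they lie in the same $\P_i$, the induction hypothesis gives a path of stretch at most $t$ inside the sub-spanner. At $m=1$, every component is a singleton, so every pair is served by the star.

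\textbf{Completeness.} Suppose a spanner $H$ of tree-depth at most $k$ and dilation at most $t$ exists, possibly with Steiner points; take it in canonical form. Let $r$ be its root. The key step, and the main obstacle, is to show that $r$ may be replaced by a representative $c$ from $\P\cup F$ without harming feasibility.

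If $r\in\P$, we take $c=r$. Otherwise $r$ lies in some face of $\mathcal{A}_t(\P)$. For a fixed pair $p,q$, the predicate "$\|pr\|+\|rq\|\le t\|pq\|$" is constant on each face, provided the ellipses are taken with the scaled parameter $t\|pq\|$. I would check that this is the intended meaning of the $t$-ellipse, and adjust the definition to $\{r : \|pr\|+\|rq\|=t\|pq\|\}$ if needed. Consequently the graph $G$ computed in line 5 is the same for $r$ and for the chosen representative $c\in F$.

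We then argue that every pair $p,q$ in different components of $G$ is fine via $c$. Pairs in the same component $\P_i$ must be connected in $H-r$ within a single component of $H-r$. Indeed, an edge of $G$ between $p$ and $q$ means the path through $r$ is too long, and since $H$ is canonical every other $p$–$q$ path avoids $r$. So each $\P_i$ lies inside one component $K$ of $H-r$, of tree-depth at most $k-1$. The restriction of $K$ to $\P_i$ together with its Steiner points is a spanner for $\P_i$: any pair in $\P_i$ whose witness path passes through $r$ would contradict the edge relation transitively. This step is delicate. I would handle it by replacing the canonical root star with the star at $c$, and then noting that only pairs non-adjacent in $G$ use the root.

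Induction on $m$ then shows that the recursive call on $\P_i$ returns true. At $m=1$, tree-depth $2$ with canonical form means $H$ is a star, so every component of $G$ is a singleton. The faces of $\mathcal{A}_t(\P_i)$ suffice for the sub-centre by the same face-invariance argument applied to $\P_i$.
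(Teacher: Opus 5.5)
Your soundness direction and the face-invariance step are fine. You are also right that the ellipses must be $\{r:\|pr\|+\|rq\|=t\|pq\|\}$; and since ``served'' is a closed condition, $F$ should also contain representatives of the edges and vertices of the arrangement. The genuine gap is in completeness, at the claim that $K$ restricted to $\P_i$ is a $t$-spanner for $\P_i$. Only pairs that are \emph{adjacent} in $G$ are forced to have their short paths inside $K$. This follows from the triangle inequality, not from canonicity. A pair in the same component $\P_i$ that is not adjacent in $G$ may be served only through $r$, and transitivity of the edge relation says nothing about it.

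For instance, let $\P=\{p,x,q\}$ be an equilateral triangle of side $1$, let $t=1.1$, and let $H$ be a Steiner point $r$ at the midpoint of $pq$ joined to $p,x,q$, plus the edges $xp,xq$. Then $H$ is the closure of its elimination tree and has tree-depth $3$ and dilation $1$. The point $r$ serves $pq$ but not $px$ or $xq$ (as $1/2+\sqrt3/2>1.1$), so $\P_1=\{p,x,q\}$. But $K=H-r$ is the path $p\,x\,q$, which gives $pq$ dilation $2$. Your remedy (only pairs non-adjacent in $G$ use the root) does not help. Line~5 of the recursive call recomputes served pairs with respect to the new centre only, so $pq$ is demanded again. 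Indeed \textsc{TDSpannerSteiner}$(\P_1,c',t,1)$ fails for every $c'$. A vertex would have to serve the opposite side (ratio $2$), and no point lies in all three ellipses: by symmetry and convexity the centroid would, but its sum is $2/\sqrt3>1.1$.

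This cannot be repaired for the procedure as literally written. Add a copy of the triangle scaled by $100$ whose side $p_2q_2$ also has midpoint $r$. Joining $r$ to all six points and adding $x_ip_i,x_iq_i$ gives a Steiner spanner of tree-depth $3$ and dilation at most $50.87/49.13<1.1$. Now consider any top-level centre $c$. Either $c$ is not a vertex of the small triangle and serves none of its pairs, so that triangle lies in one component and the dead end above recurs. Or $c$ lies within distance $2$ of $r$; then it serves neither $p_2x_2$ nor $x_2q_2$, and the big triangle hits the same dead end. So the algorithm answers false.

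The missing idea is that the recursion must remember which pairs are already served. The paper's proof in effect does this. It forms the parts via dilation in the accumulated graph $G_{i-1}$, and it compares level by level with an edge-maximal optimum, maintaining $B_i^*\subseteq B_i$ and $\mathcal{P}_i$ refining $\mathcal{P}_i^*$. You can adopt the same fix: pass the set $S$ of already-served pairs down, and in line~5 put an edge $pq$ only when $(p,q)\notin S$ and the new centre does not serve it. Your top-down induction then goes through with the strengthened hypothesis ``if some Steiner spanner on $Q$ of tree-depth at most $m+1$ serves every pair of $Q$ outside $S$, then some call on $(Q,S,m)$ succeeds''. Apply it to $K$, with the points of $K$ outside $\P_i$ reinterpreted as Steiner points.
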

\begin{proof}
Let $G^*$ be a $t$-spanner of tree-depth $k$ with Steiner points and let $C_i^*\subset \P$ be the vertices in $\P$ of level $i$ in $G^*$. We can assume that $G^*$ is edge maximal. We define $G^*_i=(\P\cup \bigcup_{j=1}^iC^*_j,\{\{u,v\}\in E(G^*)\mid u\in \bigcup_{j=1}^iC^*_j\})$. This sequence of graphs in some sense is a bottom up construction of the graph $G^*$, using the recursive definition of tree-depth. Let $B_i^*$ be the set of pairs of points in $\P$ whose dilation in $G^*_i$ is at most $t$. We define $\mathcal{P}^*_i=\{\{p\in \P-\bigcup_{j=1}^i C^*_j\mid \{p,c\}\in E(G^*_i)\}\mid c\in C^*_i\}\cup \{\{c\in \P\}\mid c\in \bigcup_{j=1}^i C^*_j\}$ 
as a partition of $\P$ where two points of $\P$ are in the same part if they share a common vertex of level $i$ and are not in any $C^*_j$ for $j\leq i$. 

We claim that the algorithm presented above implicitly constructs a $t$-spanner $G$ of tree-depth $k$ with Steiner points. This can be shown by induction over the bottom-up construction of $G^*$ according to the recursive definition of tree-depth.
For each level $i$ of the induction we define $G_i$ as the current spanner constructed by the algorithm and $C_i$ as the vertices of $G_i$ chosen to be of level $i$ according to the definition of tree-depth. Based on these definitions we analogously define $B_i$ as the set of pairs of points in $\P$ whose dilation in $G_i$ is at most $t$, and further $\mathcal{P}_i=\{\{p\in \P-\bigcup_{j=1}^i C_j\mid \{p,c\}\in E(G_i)\}\mid c\in C_i\}\cup \{\{c\in \P\}\mid c\in \bigcup_{j=1}^i C_j\}$ as a partition of $\P$. We say that $\mathcal{P}_i$ is a refinement of $\mathcal{P}^*_i$, if for every part $\P'\in\mathcal{P}_i$ there is a part $\P^*\in\mathcal{P}^*_i$ so that $\P'\subseteq\P^*$. We now prove the following statement inductively:
\begin{statement}[H]
\begin{enumerate}
    \item For every $i$ it holds that $B_i^*\subseteq B_i$.
    \item For every $i$ it holds that $\mathcal{P}_i$ is a refinement of $\mathcal{P}^*_i$.
\end{enumerate}
\end{statement}
Observe that the statement above for $i=k-1$ implies that $G$ is a $t$-spanner.

For $i=1$ the graph $G^*_1$ is a star with centre $v^*$. This $v^*$ must be contained in some face of the arrangement $\mathcal{A}_t(\P)$. The algorithm will either choose a Steiner point in this face or if $v^*\in \P$ it will choose $v^*$ itself. Either way this means that $B^*_1=B_1$ which are exactly the pairs of points whose ellipses $v^*$ is contained in. It also holds that $\mathcal{P}_i=\mathcal{P}^*_i$.

Now suppose that $i>1$. First we are going to describe how $G_i$ is constructed. The algorithm above computes for every $\P_{i-1}\in\mathcal{P}_{i-1}$ maximal sets of points so that for every point $p$ in such a set there is a point $q$ in the same set for which the dilation of $p$ and $q$ is larger than $t$ in $G_{i-1}$. Let $\P_{i-1}^*\in \mathcal{P}^*_{i-1}$ and $v_1,\dots,v_\ell\in C^*_i$ be the vertices that split $\P_{i-1}^*$ into multiple smaller parts in $\mathcal{P}_{i}^*$. Let $\P_{i-1,1},\dots, \P_{i-1,h}\in\mathcal{P}_{i-1}$ be the sets that are subsets of $\P_{i-1}^*$, guaranteed by the fact that $\mathcal{P}_{i-1}$ is a refinement of $\mathcal{P}_{i-1}^*$. For every $\P_{i-1,j}$ let $Q_1,\dots,Q_r\subseteq\P_{i-1,j}$ be the maximal subsets found be the algorithm. For each of these sets we need to find a centre which we add to $C_i$. If a $v^*_{j}$ from $v_1,\dots,v_\ell$ is a original point from $\P$ and contained in some $Q_j$ the algorithm will choose it as a centre for $Q_j$. For all other $Q_j$ the algorithm will choose a Steiner point, which we specify later. The partition $\mathcal{P}_i$ can now be defined as the union of all $Q_j-v_j^*$ respectively $Q_j$, if the centre point is going to be a Steiner point, and that for all $\P_{i-1}\in\mathcal{P}_{i-1}$.

We are now going to argue that $\mathcal{P}_{i}$ is a refinement of $\mathcal{P}^*_{i}$. Suppose that there is a part $\P^*\in \mathcal{P}^*_i$ that is split into at least two parts that are not disjoint with some $\P^*\in\mathcal{P}_{i}$. As $\P^*$ is chosen by the algorithm to be a maximal set of points so that for every point $p\in\P^*$ there is a point $q\in \P^*$ for which the dilation of $p$ and $q$ is larger than $t$ in $G_{i-1}$ there must be parts $\P^*_1\in\mathcal{P}^*_{i}$ and $\P^*_2\in\mathcal{P}^*_{i}$ so that $p\in \P^*_1$ and $q\in \P^*_2$ where the dilation of $p$ and $q$ in $G_{i-1}$ must be larger than $t$. As $p$ and $q$ are in different parts of $\mathcal{P}^*_{i}$ their dilation in $G^*_{i-1}$ must have been at most $t$. As the dilation of points in different parts cannot decrease in $G^*_j$ for $j\geq i$. This is a contradiction to $B^*_{i-1}$ being a subset of $B_{i-1}$ and therefore there cannot be two parts in $\mathcal{P}^*_{i}$ that are not disjoint to a part in $\mathcal{P}_{i}$, proving that $\mathcal{P}_{i}$ is a refinement of $\mathcal{P}^*_{i}$.

Finally, we are going to complete the description of $G_i$ by adding Steiner points serving as new centres and argue that $B_i^*\subseteq B_i$. Let $v^*\in C^*_{i}$ be the vertex that is the centre of some $\P^*\in\mathcal{P}^*_{i}$. Suppose that $v^*$ is contained in $\P$. As mentioned earlier the algorithm will choose $v^*$ as well to serve as a centre for one of the sets $\P_{j_1},\dots, \P_{j_h}\in\mathcal{P}_i$ that are subsets of $\P^*$. To choose centres for the remaining sets the algorithm chooses Steiner-points in faces that are contained in a superset of the ellipses $v^*$ is contained in. This is possible as $\mathcal{A}_t(P^*)$ is a superarrangement of the arrangements of $\P_{j_1},\dots, \P_{j_h}$. The dilation for points in different sets of $\P_{j_1},\dots, \P_{j_h}$ must already be at most $t$. We claim that all the remaining pairs of points of $B^*_{i-1}\cap B^*_i$, whose dilation in $G_i^*$ was reduced to at most $t$ through the split by $v^*$, have dilation at most $t$ in $G_i$. This is true since every Steiner point chosen by the algorithm for $\P_{j_1},\dots, \P_{j_h}$ is contained in a superset of the ellipses $v^*$ is contained in. If $v^*$ is a Steiner point the same arguments hold, but the algorithm will instead of $v^*$ choose a Steiner point contained in a superset of ellipses that $v^*$ is contained in. Therefore, we have that $B^*_{i+1}\subseteq B_{i+1}$.
\end{proof}

To approximate the optimization problem of finding spanners of bounded tree-depth, we now use the algorithm for the decision problem of the variant with Steiner points as a subroutine and the fact that we can replace Steiner points with points of the original set while not increasing the dilation bound too much. To move from the decision problem to the optimization problem, the algorithm first computes a set of possible dilation bounds. As a graph of tree-depth $k$ cannot contain a path consisting of more than $2^k-1$ vertices \cite{Nesetril} this set can be computed by enumerating all paths of at most this length and the resulting dilation for the endpoints.

\begin{algorithm}[H] \caption{}
\label{alg:td}
\begin{algorithmic}[1]
\Require a set of $n$ points ${\sf P}\subset \mathbb{R}^2$ and a natural number $k\geq 2$
\Ensure at most two times the minimum dilation of a spanner $G$ of tree-depth $k$ on $\P$
\State For every pair of points calculate all paths of length at most $2^k-1$ and calculate the dilation bounds induced by every path. Let $B$ be the set of all these dilation bounds.
\State Sort $B$.
\State Perform binary search on $B$ for smallest $t$ for which Algorithm \ref{alg:td_steiner} returns \textbf{true}.
\State \Return $2t$

\end{algorithmic}
\end{algorithm}

\begin{lemma}
\label{lem:td_steiner_to_general}
    Let $G$ be a spanner on a set of points $\P\subset\mathbb{R}^2$ of tree-depth $k$ with Steiner points and dilation $t$ that was implicitly constructed by Algorithm \ref{alg:td_steiner}, then there is a spanner $G'$ on $\P$ of tree-depth $k$ and dilation at most $2t$ without Steiner points.
\end{lemma}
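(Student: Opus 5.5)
The plan is to replace each Steiner centre by the original point of its star that is nearest to it, and to keep the tree-depth structure unchanged. The graph $G$ built implicitly by Algorithm~\ref{alg:td_steiner} is a nested union of stars. On each recursion level, a part $\P_i$ gets a centre $c'$ and an edge from $c'$ to every point of $\P_i$. If $c'$ is a Steiner point, let $p_{c'}\in\P_i$ be the point of $\P_i$ closest to $c'$. In $G'$, $c'$ is replaced by $p_{c'}$: $p_{c'}$ receives an edge to every other point of $\P_i$, and $p_{c'}$ is removed from the subproblem passed to the next level.

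\emph{Tree-depth.} The elimination tree of $G'$ is obtained from that of $G$ by identifying each Steiner node with $p_{c'}$. Since $p_{c'}$ now sits at a higher level than before, and its descendants are only subsets of what they were in $G$, the depth does not grow. So $\td(G')\le k$. One should also check that every vertex still appears exactly once. This holds because each original point is placed at the first level where it is chosen as a centre, and is deleted from all deeper parts.

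\emph{Dilation.} Take $p,q\in\P$, and let $\pi$ be a shortest $p$--$q$ path in $G$. Every Steiner vertex $c'$ on $\pi$ is entered and left through two edges $ac'$ and $c'b$ with $a,b\in\P_i$ (or through recursively replaced centres). Replace the subpath $a\,c'\,b$ by $a\,p_{c'}\,b$. The triangle inequality gives
\[\|a p_{c'}\|+\|p_{c'} b\|\le \|ac'\|+2\|c'p_{c'}\|+\|c'b\|\le 2\bigl(\|ac'\|+\|c'b\|\bigr),\]
where the last step uses $\|c'p_{c'}\|\le\min(\|ac'\|,\|c'b\|)$, because $p_{c'}$ is the point of $\P_i$ closest to $c'$. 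If $\pi$ ends at a point of $\P_i$ (so it passes through $c'$ only once), the same bound holds with one side. Summing over the Steiner vertices of $\pi$ and leaving the original-point edges unchanged gives $d_{G'}(p,q)\le 2\,d_G(p,q)\le 2t\|pq\|$.

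\emph{Main obstacle: nesting.} An edge $ac'$ of $\pi$ may itself have $a$ equal to a Steiner centre from a higher level. That is, $a$ is a centre at level $j<i$, and $c'$ is adjacent to it only because in $G$ each level-$i$ centre is linked to the higher-level centres containing its part. The cleanest fix is to apply the replacement level by level from top to bottom. Each edge of $\pi$ is then charged the detour $2\|c'p_{c'}\|$ only at its Steiner endpoint, and the inequality above is used only with the nearest-point property inside $\P_i$, which bounds this detour by the adjacent edge length. If a single edge has Steiner points at both endpoints, each endpoint should be charged against a different neighbouring edge of $\pi$ so that the factor stays $2$ and not $3$. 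Here I would use that consecutive Steiner vertices on a shortest path in this layered structure alternate with original points, since the edges of $G$ in Algorithm~\ref{alg:td_steiner} always go from a centre to a point of its part. If that alternation fails, I would fall back on choosing $p_{c'}$ as the closest original point among the points adjacent to $c'$, and redo the charging.
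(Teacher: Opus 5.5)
Your proposal is correct and follows essentially the paper's proof. You contract each Steiner centre into its nearest neighbour (which is an original point of its part), obtain tree-depth at most $k$ because tree-depth is closed under contraction (your elimination-tree argument is the proof of that fact), and lose at most a factor $2$ via $\|ac'\|+2\|c'p_{c'}\|+\|c'b\|\le 2(\|ac'\|+\|c'b\|)$.

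The paper applies this bound only to the single two-hop star path that certifies each pair, rather than to an arbitrary shortest path. Your ``nesting'' obstacle never arises: every part passed down in Algorithm~\ref{alg:td_steiner} is a subset of $\P$, so Steiner centres are adjacent only to original points, and your fallback choice of nearest neighbour coincides with your original choice.
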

\begin{proof}
    Let $G$ be a spanner of tree-depth $k$ with Steiner points that was implicitly computed by Algorithm \ref{alg:td_steiner} and has dilation $t$. To construct the spanner $G'$ we simply contract every Steiner point to its closest neighbor in $G$ that is not a Steiner point. As $G$ is edge maximal, every Steiner point must have a neighbor that is not a Steiner point. We can assume that the shortest path between the points $p,q\in\P$ is via the point $c$ that is of the highest level among all points that have both $p$ and $q$ as neighbors. Suppose that $c$ is a Steiner point and $r$ is the closest neighbor of $c$. This means that $\|cr\|\leq \|cp\|$ and $\|cr\|\leq \|cq\|$. If $c$ is contracted to $r$, then the shortest path in $G'$ between $p$ and $q$ is of length at most $\|pc\|+2\|cr\|+\|cq\|\leq 2(\|pc\|+\|cq\|)$. The dilation of $G'$ is therefore at most $2t$. The tree-depth is closed under contractions \cite{Nesetril}. Therefore, the tree-depth of $G'$ must be at most~$k$ as well. 
\end{proof}

The dilation of the minimum dilation spanner of tree-depth $k$ with Steiner points is a lower bound for the minimum dilation of spanners of tree-depth $k$. Therefore, we know that the spanner of the lemma above has dilation at most two times the minimum dilation among all spanners of tree-depth $k$.

The total number of paths considered is bounded by $\bigO(n^{2^k-1})$ as the paths of length $2^{k}-1$ contain all paths of smaller length. Calculating the dilation bounds induced by a path takes $\bigO(2^k-1)$ time.
The running time used for the calculation of the set of possible dilation bounds therefore is $\bigO(2^{k}\cdot n^{2^{k}-1})$. In the binary search Algorithm \ref{alg:td_steiner} is called $\bigO(2^k\log n)$ times. The total running time therefore is $\bigO(2^{k}n^{2^{k}-1}+2^kn^{4k+4}\log^2 n)=\bigO(2^{k}n^{2^{k}-1})$. Together with \cref{lem:td_steiner_to_general}, this concludes the proof of \cref{thm:tree-depth-xp}.

\section{Spanners of Bounded Directed Width Parameters}

Historically, there have been many different definitions of tree-width for directed graphs, most of which are equivalent up to some constant factor. The notion of directed tree-width we are using is the one introduced by Giannopoulou et al. in \cite{DBLP:conf/soda/GiannopoulouKKK22}. Kim, Hatzel and Kreutzer~\cite{kim2025directedtreewidthclosedtaking} showed that this definition is equivalent up to a constant factor to all other common definitions of directed tree-width.

A digraph $D$ has directed tree-width at most $k$ if there is a triple $\mathcal{T}=(T, \mathcal{X}, \mathcal{W})$, where $T$ is a rooted directed tree, $\mathcal{X}: V(T)\rightarrow 2^{V(D)}$ and $\mathcal{W}: E(T)\rightarrow 2^{V(D)}$ such that: 
\begin{itemize}
     \item $\{\mathcal{X}(t) \mid t\in V(T)\}$ is a partition of $V(D)$ into possibly empty sets such that $\mathcal{X}(r)\neq \emptyset$, where $r$ is the root of $T$
     \item for all $e=(s,t)\in E(T)$, there is no closed walk in $D-\mathcal{W}(e)$ containing a vertex of $\mathcal{X}(T_t)$ and a vertex of $V(D)-\mathcal{X}(T_t)$, and
     \item $\max\{|\mathcal{X}(t)\cup \bigcup_{e\sim t} \mathcal{W}(e)|-1 \mid t\in V(T)\} \leq k$ where $e\sim t$ means that $t$ is an endpoint of the edge $e$.
\end{itemize}
Here, $T_t:=T[\{t'\in V(T)\mid \text{$t'$ is reachable from $t$ by a directed path in $T$}\}]$ for every $t\in V(T)$ and $\mathcal{X}(S):=\bigcup_{t\in V(S)}\mathcal{X}(t)$.
The triple $\mathcal{T}=(T, \mathcal{X}, \mathcal{W})$ is called a \emph{directed tree-decomposition} of $D$.

In this section we show that the lower bound on the dilation of tree-width bounded spanners in \cite{DBLP:conf/compgeom/BuchinRS25} can be extended to the directed setting.
\begin{theorem}
\label{thm:dtw_lower}
Let $d \geq 2$ be a fixed integer. For positive integers $n$ and $k \leq n^{(d-1)/d} \cdot(12d)^{1/d-2}$
there is a set of $n$ points in $\mathbb{R}^d$, such that every directed geometric spanner of directed tree-width $k$ on this set has dilation $\Omega(n/k^{d/(d-1)})$.
\end{theorem}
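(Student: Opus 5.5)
We plan to reduce the directed statement to the undirected lower bound of \cite{DBLP:conf/compgeom/BuchinRS25}, following the outline in the technical contribution section. We use the same point set as there: roughly a $d$-dimensional grid arrangement in which any spanner of small dilation must contain, as a minor obtained by contracting connected pieces, a large grid-like structure of tree-width $\Omega(k)$. In the undirected setting this grid minor has size about $(n/t)^{(d-1)/d}$ when the dilation is $t$, so tree-width $\le k$ forces $t=\Omega(n/k^{d/(d-1)})$.

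The first step is to observe that a directed spanner $D$ of dilation $t$ has, for every ordered pair $(p,q)$, a directed path from $p$ to $q$ of length at most $t\|pq\|$, and likewise from $q$ to $p$. We will rerun the argument of \cite{DBLP:conf/compgeom/BuchinRS25} with this property. Wherever that argument uses a short connection between two points, the concatenation of the two directed paths gives a closed walk through both points. Hence the pieces the undirected proof contracts (connected clusters of points near grid cells) can be taken to be strongly connected in $D$. Each edge of the grid minor is realized by short paths in both directions, that is, by a bidirected edge after contraction. The intended outcome is the following: if $t$ is below the claimed bound, then $D$ contains, via contractions of strongly connected subgraphs and deletions, the bidirected version $\overleftrightarrow{H}$ of a $d$-dimensional grid graph $H$ of tree-width $\Omega((n/t)^{(d-1)/d})$.

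The second step is the structural lemma announced earlier, \cref{lem:dtw_contraction}: directed tree-width in the sense of \cite{DBLP:conf/soda/GiannopoulouKKK22} does not increase under contracting strongly connected subgraphs. It is clearly monotone under deletions. To prove it, we take a decomposition $(T,\mathcal X,\mathcal W)$ of $D$ and a strongly connected set $S$ that is contracted to a vertex $s$. We place $s$ in the bag $\mathcal X(t)$ of a node $t$ that is closest to the root among those meeting $S$. For each $e$, we set $\mathcal W'(e)$ to be $\mathcal W(e)$ with $S$-vertices replaced by $s$. Then we check that no closed walk crosses an edge $e$ while avoiding $\mathcal W'(e)$: such a walk lifts to a closed walk in $D$ by expanding $s$ along strongly connected paths in $S$. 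If $S$ does not meet $\mathcal W(e)$, the lifted walk avoids $\mathcal W(e)$. If $S$ meets $\mathcal W(e)$, then $s\in\mathcal W'(e)$. Bag sizes do not grow. We expect this step to be the main obstacle. The delicate case is when $S$ is split across $\mathcal X(T_t)$ and its complement without meeting $\mathcal W(e)$. We will show that this is impossible: $S$ is strongly connected, so it contains a closed walk through both sides, which must hit $\mathcal W(e)$.

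The third step is to finish. The directed tree-width of $\overleftrightarrow{H}$ equals, up to a constant, the undirected tree-width of $H$. This holds because in a bidirected graph every edge lies on a closed walk, so the guard condition becomes exactly separation, and directed decompositions translate into tree decompositions of width $O(k)$. Hence $\dtw(D)\ge \dtw(\overleftrightarrow{H})=\Omega(\tw(H))=\Omega((n/t)^{(d-1)/d})$. With $\dtw(D)\le k$ this rearranges to $t=\Omega(n/k^{d/(d-1)})$, within the stated range of $k$ inherited from \cite{DBLP:conf/compgeom/BuchinRS25}.
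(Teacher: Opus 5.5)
Your proposal is correct and is essentially the paper's proof. Short two-way directed paths between consecutive points on each grid edge give closed walks $W_i=P_i\cdot Q_i$, whose unions are strongly connected and are contracted onto the grid points. \cref{lem:dtw_contraction} is proved just as you sketch: your observation that a strongly connected $S$ cannot straddle a cut whose $\mathcal{W}(e)$ it avoids does the job of the paper's expansion of $x_C$ into a walk through all of $C$, and your root-closest placement of the contracted vertex also keeps the root bag nonempty. The contracted spanner then contains the bidirected $(h+1)^d$-grid.

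The only real difference is the lower bound for that bidirected grid. The paper gets it from strict brambles (Kozawa et al.\ together with Cor.~2.9 of Kim, Hatzel and Kreutzer). Your translation to undirected tree-width is also valid, but it must go through balanced separators. Descend from the root into a child subtree of weight more than half for as long as possible, and use $\mathcal{X}(t)\cup\bigcup_{e\sim t}\mathcal{W}(e)$ at the final node $t$ as a separator of size at most $k+1$. Taking these sets directly as bags can fail the running-intersection property. The constant lost this way has to be absorbed when fixing the grid side, as the paper does with $h=\lceil (9d(k+2))^{1/(d-1)}-1\rceil$.
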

To prove this bound, we use a similar set of points as in \cite{DBLP:conf/compgeom/BuchinRS25} that will resemble the $(h+1)^d$-grid ($(h+1)\times ...\times(h+1)$-grid), where $h = \left\lceil (9d\cdot (k+2))^{1/(d-1)} -1\right\rceil$. We make use of the fact, that the bidirected version of this $d$-dimensional grid has large directed tree-width.
\begin{lemma}
\label{lem:twbidigrid}
    The bidirected $n^d$-grid has directed tree-width $\geq \frac{1}{9d}\cdot n^{d-1}-1$.
\end{lemma}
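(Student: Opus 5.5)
We plan to reduce to the undirected case through the standard correspondence between bidirected digraphs and undirected graphs: a directed tree-decomposition of the bidirected grid $\overleftrightarrow{G}$ of width $w$ will yield a balanced separator of the undirected $n^d$-grid $G$ of size $\bigO(w)$. The $n^d$-grid is known to have no balanced separator of size smaller than roughly $n^{d-1}/d$; the constant $\frac{1}{9d}$ should come out of combining a separation lemma with this isoperimetric bound.

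\textbf{Step 1 (separation from a decomposition).} Fix a directed tree-decomposition $(T,\mathcal{X},\mathcal{W})$ of $\overleftrightarrow{G}$ of width $w$. In a bidirected graph, two vertices lie on a common closed walk in $D-\mathcal{W}(e)$ if and only if they lie in the same connected component of $G-\mathcal{W}(e)$. Hence, for each tree edge $e=(s,t)$, the set $\mathcal{W}(e)$ separates $\mathcal{X}(T_t)\setminus\mathcal{W}(e)$ from $V\setminus(\mathcal{X}(T_t)\cup\mathcal{W}(e))$ in the undirected grid, and $|\mathcal{W}(e)|\le w+1$.

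\textbf{Step 2 (locating a balanced cut).} We walk down from the root. At each node $t$ we test whether some child subtree $T_{t'}$ has $|\mathcal{X}(T_{t'})|>\frac{1}{2}n^d$; if so we move to that child, and this process terminates at a node $t$. At $t$, the set $Z=\mathcal{X}(t)\cup\bigcup_{e\sim t}\mathcal{W}(e)$ has size at most $w+1$. By Step 1, removing $Z$ leaves each child part $\mathcal{X}(T_{t'})$ separated from everything else, and each such part has size at most $n^d/2$. The part above $t$ has size less than $n^d/2$, since we descended into $t$ precisely because $\mathcal{X}(T_t)$ was large. Grouping these pieces greedily into two sides gives a separation $(A,B)$ with $A\cap B\subseteq Z$ and both $|A\setminus B|$ and $|B\setminus A|$ at least about $n^d/3-|Z|$. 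This is a $2/3$-balanced separator of size $w+1$.

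\textbf{Step 3 (grid isoperimetry).} We then need the fact that any vertex set $S$ whose removal leaves no component of more than $\frac{2}{3}n^d$ vertices has $|S|\ge c\, n^{d-1}/d$. The intended route is the edge-isoperimetric inequality for grids (Bollobás–Leader): a set of between $n^d/3$ and $2n^d/3$ vertices has edge boundary at least about $n^{d-1}$. Since each vertex has degree at most $2d$, the vertex boundary is at least $n^{d-1}/(2d)$, up to constants. Tuning these constants, or alternatively citing the known lower bound $\tw(\text{grid})\ge$ balanced-separator size used in the paper's tree-width result, should give $w+1\ge \frac{1}{9d}n^{d-1}$, which is the claimed bound $w\ge\frac{1}{9d}n^{d-1}-1$.

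The main obstacle is making Step 2 rigorous with the exact constant. Two issues need care: the decomposition's bags may be empty, and the guards $\mathcal{W}(e)$ need not lie inside $\mathcal{X}(t)$. Both require that the "above" part be handled by the guard of the parent edge. Matching $\frac{1}{9d}$ is a further concern. If the precise constant is inconvenient, the fallback is to prove the bound with whatever constant the balanced-separator argument yields. An easier alternative is to invoke the known result that the directed tree-width of a bidirected graph is at least a constant fraction of the undirected tree-width, combined with $\tw(n^d\text{-grid})\ge n^{d-1}$ up to a $1/d$ factor.
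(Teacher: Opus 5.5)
Your approach is correct, and it takes a genuinely different route from the paper. The paper never works with a directed tree-decomposition. It takes the undirected bound $\text{tw}\ge\frac{2}{9d}n^{d-1}-1$ from Lemma~12 of Buchin, Rehs and Scheele. It converts this into a strict bramble of order at least $\frac{1}{9d}n^{d-1}$ via Kozawa et al., which loses a factor $2$. It notes that this is also a strict bramble of the bidirected grid, and then applies Corollary~2.9 of Kim, Hatzel and Kreutzer: a strict bramble of order $k$ forces directed tree-width at least $k-1$. This is essentially your ``easier alternative'', with the constant made explicit.

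You instead use the fact that in a bidirected digraph the guard condition for $\mathcal{W}(e)$ is exactly undirected vertex separation. Descending to the heavy child then gives a set $Z$ with $|Z|\le w+1$ such that every component of the grid minus $Z$ has at most $n^d/2$ vertices. The issues you raise at the end are already handled by your own Step~2:
\begin{itemize}
\item $\mathcal{X}(t)\subseteq Z$;
\item each child part is cut off by $\mathcal{W}((t,t'))\subseteq Z$;
\item the part above $t$ is cut off by the parent-edge guard, which also lies in $Z$.
\end{itemize}

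The constant needs no tuning and no fallback. Take a union $A$ of components, or its complement, with $\frac13(n^d-|Z|)\le|A|\le\frac12 n^d$; the upper bound $\frac12 n^d$ is what lets you apply Bollob\'as--Leader. If $|Z|\ge n^d/4$, then $|Z|\ge\frac{1}{2d}n^{d-1}$ trivially. Otherwise $|A|\ge n^d/4$, and every term $r|A|^{1-1/r}n^{d/r-1}$ in the Bollob\'as--Leader bound is at least $n^{d-1}$. Each boundary edge of $A$ has an endpoint in $Z$, and degrees are at most $2d$. Hence $w+1\ge\frac{1}{2d}n^{d-1}$, which is stronger than the required $\frac{1}{9d}n^{d-1}$.

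Your argument is self-contained up to a classical isoperimetric inequality and avoids the factor-$2$ loss of the bramble conversion. The paper's proof is a short chain of citations that reuses the known undirected grid bound and needs no isoperimetric computation.
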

\begin{proof}
    Let $G$ be a $n^d$-grid. By Lemma 12 in \cite{DBLP:conf/compgeom/BuchinRS25} we know that $G$ must be of tree-width at least $\frac{2}{9d}\cdot n^{d-1}-1$. In \cite{DBLP:journals/dam/KozawaOY14} it was observed that every graph of tree-width $k$ must have a strict bramble of order at least $\frac{1}{2}(k+1)$. Therefore $G$ must have a strict bramble of order at least $\frac{1}{9d}\cdot n^{d-1}$. By Corollary 2.9 in \cite{kim2025directedtreewidthclosedtaking} every digraph with a strict bramble of order $k$ must have directed tree-width at least $k-1$. The strict bramble of $G$ directly corresponds to a strict bramble of the bidirected version of $G$. Therefore the bidirected $n^d$-grid must have a strict bramble of order $\frac{1}{9d}\cdot n^{d-1}$ and therefore must be of directed tree-width $\geq \frac{1}{9d}\cdot n^{d-1}-1$.
\end{proof}

The number of points that represent an edge is $m=\lfloor n/ (d\cdot (h+1)^d- d\cdot (h+1)^{d-1})\rfloor$. Formally, the set $\P_{d,n,k}$ is defined as $\P_{d,n,k}=\bigcup_{i=1}^d \P_i$, where
\[\P_{i}=\bigcup^{h}_{j_1=0}\dots \bigcup^{h}_{j_d=0}\{ (j_1\cdot m,\dots,j_{i-1}\cdot m,x_i,j_{i+1}\cdot m,\dots,j_d\cdot m) \mid x_i\in \{0, \dots, h\cdot m\}\}.\]

The proof in \cite{DBLP:conf/compgeom/BuchinRS25} for the lower bound on the dilation heavily relies on contractions of paths. Unfortunately, directed tree-width is not closed under arbitrary edge contraction. Instead we are using contractions of strongly connected subdigraphs. Using ideas from \cite{kim2025directedtreewidthclosedtaking} we can show that directed tree-width is closed under these contractions.
\begin{lemma}
\label{lem:dtw_contraction}
    Directed tree-width is closed under strongly connected contractions.
\end{lemma}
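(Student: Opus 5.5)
Let $D$ be a digraph and let $H\subseteq D$ be a strongly connected subdigraph. Let $D/H$ be the digraph obtained by contracting $V(H)$ into a single vertex $h$, keeping arcs between $h$ and the rest and deleting loops and parallel arcs. The plan is to take a directed tree-decomposition $\mathcal{T}=(T,\mathcal{X},\mathcal{W})$ of $D$ of width $k$, in the sense of Giannopoulou et al.\ used above, and build from it a decomposition of $D/H$ of width at most $k$. Since every strongly connected contraction is a sequence of such single contractions, iterating this gives the lemma.

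For the construction, let $\phi\colon V(D)\to V(D/H)$ be the contraction map. It is the identity outside $V(H)$ and sends $V(H)$ to $h$. Set $\mathcal{W}'(e)=\phi(\mathcal{W}(e))$ for every edge $e$ of $T$. The sets $\phi(\mathcal{X}(t))$ no longer form a partition, because $h$ may lie in several of them. To fix this, choose the node $t_H\in V(T)$ that is closest to the root among all nodes $t$ with $\mathcal{X}(t)\cap V(H)\neq\emptyset$. Then let $\mathcal{X}'(t_H)=\phi(\mathcal{X}(t_H))$ and $\mathcal{X}'(t)=\mathcal{X}(t)\setminus V(H)$ for every $t\neq t_H$. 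First we need this node to be unique, which we would show using the guard condition together with strong connectivity of $H$: if two incomparable subtrees $T_{t_1}$ and $T_{t_2}$ both met $V(H)$, then some closed walk in $H$ would run through both and would have to meet the guards on the path between them. If uniqueness fails, we instead put $h$ in the bag of the least common ancestor of all nodes meeting $V(H)$ and add $h$ to the guards $\mathcal{W}'$ along the connecting paths. The width stays at most $k$ because $\phi$ never increases cardinalities, and adding $h$ only happens where some vertex of $H$ was already counted.

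Next we verify the guard condition. Fix an edge $e=(s,t)$ of $T$ and suppose $C'$ is a closed walk in $D/H-\mathcal{W}'(e)$ that meets both $\mathcal{X}'(T_t)$ and its complement. If $C'$ avoids $h$, it is a closed walk in $D$ and contradicts the original decomposition. If $C'$ passes through $h$, then $h\notin\mathcal{W}'(e)$, so $V(H)\cap\mathcal{W}(e)=\emptyset$. We lift $C'$ to a closed walk $C$ in $D$: each passage through $h$ enters $H$ at some vertex $a$ and leaves at some vertex $b$, and we replace it by an $a\to b$ path inside $H$, which exists because $H$ is strongly connected. This $C$ avoids $\mathcal{W}(e)$ and still meets $\mathcal{X}(T_t)$ and its complement. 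The only subtle case is when $C'$ witnesses the crossing through $h$ itself. In that case, $h$ lies on one side in $\mathcal{X}'$ while a vertex of $V(H)$ lies on the other side in $\mathcal{X}$, and the lifted walk $C$ meets that vertex, so it still crosses the cut.

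The main obstacle is making $\mathcal{X}'$ a partition, namely where to place $h$, in a way that keeps the guard condition valid for the edges between the copies of $V(H)$. Here we would follow the argument in \cite{kim2025directedtreewidthclosedtaking}, which handles butterfly contractions. Its key observation is that the vertices of a strongly connected set that are not separated by guards lie in a connected region of $T$. Moving all of them to the top node of that region changes $\mathcal{X}(T_t)$ only for edges $e$ whose guard set $\mathcal{W}(e)$ already contains a vertex of $H$, and for those edges $h\in\mathcal{W}'(e)$, so the guard condition holds trivially.
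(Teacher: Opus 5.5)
Your proposal is essentially the paper's proof. It uses the same $\mathcal{W}'(e)=\phi(\mathcal{W}(e))$, the same placement of $h$ into a single bag containing a vertex of $H$, and the same lifting of closed walks through the strongly connected $H$. Your topmost choice of that bag even guarantees $\mathcal{X}'(r)\neq\emptyset$, which the paper's arbitrary choice of $c$ does not.

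The uniqueness detour is a misstep. Uniqueness is false in general: your walk argument shows only that the guards on the connecting tree path meet $V(H)$, which gives no contradiction. It is also unnecessary, because the paper lifts each occurrence of the contracted vertex to a walk through all of $V(H)$, hence through the chosen $c\in\mathcal{X}(t_c)$. That settles your ``subtle case'' for any such bag, without the side-consistency observation your final paragraph relies on.
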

\begin{proof}
Let $D$ be a digraph of directed tree-width $k$ and $C$ a strongly connected subdigraph of $D$. We claim that the digraph $D'$ which is obtained by contracting $C$ to $x_C\notin V$ in $D$ has directed tree-width at most $k$. Let $c$ be some vertex in $V(C)$ and let $\mathcal{T}=(T,\mathcal{X},\mathcal{W})$ be a directed tree-decomposition of width $k$ for $D$. As $C$ is a subdigraph of $D$ there must be some $t_{c}\in V(T)$ for which $c\in \mathcal{X}(t_{c})$. We construct a directed tree-decomposition $\mathcal{T'} = (T, \mathcal{X}', \mathcal{W}')$ of width at most $k$ for $D'$ as follows:

    \begin{itemize}
        \item For all $e\in E(T)$, if for some $v\in V(C)$ we have $v\in \mathcal{W}(e)$, we let $\mathcal{W}'(e)=(\mathcal{W}(e)-V(C))\cup \{x_C\}$ and otherwise $\mathcal{W}'(e)=\mathcal{W}(e)$.
        \item For all $t\neq t_{c}$ in $V(T)$ let $\mathcal{X}'(t)=\mathcal{X}(t)-V(C)$ and for $t_{c}$ we set $\mathcal{X}'(t_{c})=(\mathcal{X}(t_{c})-V(C))\cup \{x_C\}$.
    \end{itemize}

It is not hard to see that $\{\mathcal{X}'(t) \mid t \in V(T)\}$ is a partition with possibly empty sets of $V(D)$. And since in the construction of $\mathcal{X}'$ and $\mathcal{W}'$ there were only vertices replaced and removed, the width of $\mathcal{T}'$ must be at most $k$.

It remains to show that the second condition for directed tree-width holds for $\mathcal{T}'$. For the sake of contradiction suppose that there is an edge $e=(s,t)\in E(T)$ for which there is a closed walk $W'$ in $D'-\mathcal{W}'(e)$ visiting a vertex of $\mathcal{X}'(T_t)$ and a vertex of $D-\mathcal{X}'(T_t)$. We distinguish two cases. Either $\mathcal{W}'(e)=\mathcal{W}(e)$ or $\mathcal{W}'(e)=(\mathcal{W}(e)-C)\cup \{x_C\}$. First consider the case where $\mathcal{W}'(e)=(\mathcal{W}(e)-C)\cup \{x_C\}$. Since $x_C\notin V(D')-\mathcal{W}'(e)$ and $C\cap V(D') = \emptyset$, the vertex $x_C$ as well as no vertex in $C$ is visited by $W'$. This means that $W'$ is also a closed walk in $D-\mathcal{W}(e)$ that visits a vertex in $\mathcal{X}(T_t)$ and a vertex in $D-\mathcal{X}(T_t)$, which is a contradiction. Now suppose that $\mathcal{W}'(e)=\mathcal{W}(e)$. The vertex $x_C$ may now be visited by the closed walk $W'$. First let us assume it is not visited by $W'$. As the closed walk $W'$ does not visit $x_C$ and $\mathcal{W}'(e)=\mathcal{W}(e)$, it must also be a closed walk in $D-\mathcal{W}(e)$ that visits a vertex in $\mathcal{X}(T_t)$ and a vertex in $D-\mathcal{X}(T_t)$, which is again a contradiction. Therefore, now let us suppose that $x_C$ is contained in the walk $W'$. There may be multiple occurrences of $x_C$ in this walk. We claim that by replacing each occurrence of $x_C$ in $W'$ by vertices of $C$ we obtain a walk $W$ in $D-\mathcal{W}(e)$ containing vertices from $\mathcal{X}(T_t)$ and $D-\mathcal{X}(T_t)$. Let us have a look at some occurrence of $x_C$ in $W'$. There must be some vertex $w$ before $x_C$ in $W'$ and some vertex $y$ after $x_C$ in $W'$. Since both edges $(w,x_C)$ and $(x_C,y)$ are in $D'-\mathcal{W}'(e)$, there must be edges $(w,c')$ and $(c'',y)$ in $D-\mathcal{W}(e)$, where $c',c''\in V(C)$. Since $C$ is strongly connected, there must be a directed path from $c'$ to $c''$ in $C$. We replace this occurrence of $x_C$ in $W'$ by a walk containing every vertex of $C$ starting at $c'$ and ending at $c''$. Once again such a walk exists since $C$ is strongly connected. Now this occurrence of $x_C$ is replaced by a walk that contains every vertex of $C$. If we replace every occurrence of $x_C$ in $W'$ in this way we obtain the walk $W$. By construction $W$ is a closed walk in $D-\mathcal{W}(e)$. We claim that this walk visits a vertex from $\mathcal{X}(T_t)$ and a vertex from $D-\mathcal{X}(T_t)$. Suppose that $x_C\in \mathcal{X}'(T_t)$. Then there must be some vertex $z\in D'-\mathcal{X}'(T_t)$ on $W'$. Since $x_C$ is in $\mathcal{X}'(T_t)$ we know that $c\in \mathcal{X}(T_t)$. And for $z$ we know that by construction $z\in D-\mathcal{X}(T_t)$. As $W$ visits both $c\in \mathcal{X}(T_t)$ and $z\in D-\mathcal{X}(T_t)$, we once again reached a contradiction.
\end{proof}
Using these contractions on a spanner on $\P_{d,n,k}$ we can show the following lemma:
\begin{lemma}
\label{lem:dtw_lower}
    If $D$ is a directed geometric $\bigO(n/k^{d/(d-1)})$-spanner on $\P_{d,n,k}$ then it must be of directed tree-width $>k$.
\end{lemma}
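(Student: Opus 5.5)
Our plan is to mirror the undirected argument of \cite{DBLP:conf/compgeom/BuchinRS25}, replacing path contractions by strongly connected contractions (\cref{lem:dtw_contraction}) and the tree-width bound of the grid by \cref{lem:twbidigrid}. Fix a directed spanner $D$ on $\P_{d,n,k}$ with dilation $t \le c\cdot n/k^{d/(d-1)}$ for a suitably small constant $c$. We have $m=\Theta(n/h^d)=\Theta(n/k^{d/(d-1)})$, because $h=\Theta(k^{1/(d-1)})$. So by choosing $c$ we may assume $t < m/2$, or whatever small multiple of $m$ the geometric arguments below require. The goal is to show that $D$ contains the bidirected $(h+1)^d$-grid as a \emph{strongly connected contraction minor}. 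This means: strongly connected, pairwise disjoint branch sets $C_v$, one for each grid vertex $v$, together with arcs in both directions between $C_u$ and $C_v$ for every grid edge $uv$. Contracting each $C_v$ via \cref{lem:dtw_contraction} and then deleting the remaining vertices and arcs gives the bidirected grid. (Directed tree-width is monotone under subdigraphs, since one can simply restrict the decomposition.) By \cref{lem:twbidigrid} and the choice of $h$, the bidirected grid has directed tree-width at least $\frac{1}{9d}(h+1)^{d-1}-1\ge k+1>k$.

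To construct the branch sets we choose, for each grid vertex $v$ with coordinates $(j_1m,\dots,j_dm)$, the set of points of $\P_{d,n,k}$ inside the ball (or $\ell_\infty$-box) of radius $m/4$ around $v$. We let $C_v$ be the strongly connected closure generated by the directed paths of $D$ between consecutive points of this neighbourhood. Consecutive points on a segment are at distance $1$, so any directed path between them has length at most $t<m/4$ and stays within distance $t/2$ of them. Hence for both directions $p\to q$ and $q\to p$ the paths remain near $v$, and the union forms a strongly connected subdigraph localized around $v$. The parts of the segments between neighbourhoods are handled the same way. For each grid edge $uv$, the consecutive points along the segment $uv$ are joined by short directed paths in both directions. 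Following these from $C_u$ to $C_v$ yields a directed walk $C_u\to C_v$ and also one $C_v\to C_u$. We can then either contract the middle portion into one of the branch sets, or keep it as a subdivided bidirected edge and contract its strongly connected pieces.

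The main obstacle is \emph{disjointness}. Short spanner paths between nearby points on one edge segment might wander close to another segment or another grid vertex, and the contracted sets must not overlap. The geometric separation handles most of this. Distinct segments that do not share a grid vertex are at distance at least $m$, while every path used has length less than $t\ll m$ and therefore stays in a tube of radius $t/2$ around its segment. Two tubes can thus meet only near a common grid vertex, and there they all belong to $C_v$. We would make this precise by assigning each point of $D$ that appears on a used path to the nearest grid vertex or segment. We would also check that the strong connectivity of each tube portion is preserved: the union of short bidirectional paths between consecutive collinear points is strongly connected, since the consecutive points form a chain. This is the step where we expect to spend the most care, in particular in ensuring that a vertex used by the path for segment $uv$ near its middle is not also grabbed by the tube of segment $uw$. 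Choosing the ball radius for each $C_v$ as $m/3$ and requiring $t<m/10$ should give enough slack.
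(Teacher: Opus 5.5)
Your plan follows the paper's route: localized strongly connected sets around the grid vertices, built from the short two-way paths between consecutive points, then contraction via \cref{lem:dtw_contraction}, deletion, and \cref{lem:twbidigrid}. The step that fails, however, is not disjointness, which your tube argument handles. It is the sentence ``we can then either contract the middle portion into one of the branch sets, or keep it as a subdivided bidirected edge and contract its strongly connected pieces''. Opposite directed \emph{walks} between $C_u$ and $C_v$ do not yield opposite \emph{arcs} between disjoint strongly connected branch sets, and in general no strongly connected contractions produce them.

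Here is a concrete spanner where this fails. On every grid segment with points $p_0,\dots,p_m$ (ordered by increasing coordinate, $m$ even), take the arcs $p_j\to p_{j+1}$ and $p_{2i+2}\to p_{2i}$. Consecutive points are then at directed distance at most $3$ in both directions, and the resulting $D$ is a constant-dilation spanner on $\P_{d,n,k}$. Each segment is a chain of directed triangles $p_{2i}\to p_{2i+1}\to p_{2i+2}\to p_{2i}$ glued at the even points. Within the segment:
\begin{itemize}
\item its strongly connected sets with at least two vertices are exactly the intervals $\{p_{2i},\dots,p_{2l}\}$;
\item the only arc leaving such an interval forward ends at $p_{2l+1}$, which has a unique in-neighbour and a unique out-neighbour;
\item a short case check then shows that between any two disjoint strongly connected subsets of a segment there are arcs in at most one direction.
\end{itemize}
A branch set containing $u$ but not $v$ meets the segment $uv$ in a set that is strongly connected inside the segment, because a path can leave the segment only through $u$ or $v$. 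The same holds for $C_v$ and for any intermediate piece. So whatever you contract, you get at best a directed triangle $C_u\to p_{2l+1}\to C_v\to C_u$, never both arcs $C_u\to C_v$ and $C_v\to C_u$.

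The paper's write-up has the same weak point. After contracting $S_p$ and $S_q$, it asserts that the middle pieces can be contracted ``to either $p$ or $q$'' so that $(p,q)$ and $(q,p)$ appear, and the example above refutes that assertion too. The lemma itself is still true for this example, but needs a different argument.

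A repair that keeps all your geometric work is to avoid contractions and transfer the strict grid bramble from the proof of \cref{lem:twbidigrid} directly. Send each element $X$ (a connected set of grid vertices) to the union $\phi(X)$ of $X$ and the closed walks of all grid edges inside $X$. These sets are strongly connected and pairwise intersecting. By the tube separation, each vertex $z$ of $D$ lying in some $\phi(X)$ determines a grid vertex $w(z)$ with $w(z)\in X$ whenever $z\in\phi(X)$: namely $z$ itself, or the common endpoint of the grid edges whose walks contain $z$, or an endpoint of the unique such edge. Hence hitting sets transfer without loss, and Corollary 2.9 of \cite{kim2025directedtreewidthclosedtaking} gives $\dtw(D)\ge\frac{1}{9d}(h+1)^{d-1}-1\ge k+1$.
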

\begin{proof}
 Let $D$ be a directed geometric $\bigO(n/k^{d/(d-1)})$-spanner on $\P_{d,n,k}$. Let $\mathcal{R}_{x} := x+[-m/4, m/4]^d$ denote the hypercube of side-length $m/2$ centered at $x$. Let  $P_{\boxplus}=\{(j_1\cdot m, \dots , j_d\cdot m)\mid j_1,\dots, j_d \in \{0, \dots, h\}\}$ be the points of the underlying grid. For any pair $p,q\in P_{\boxplus}$ of neighboring grid points, we consider the three hypercubes $\mathcal{R}_{p}, \mathcal{R}_{q}$, and $\mathcal{R}_{s} := s+[-m/4,m/4]^d$, where $s= \frac{p+q}{2}$ is the midpoint between $p$ and $q$. 
    
Let $p_0 = p, p_1, p_2, \ldots p_m=q$ be the ordered sequence of points representing the edge of the grid between $p$ and $q$, and further let $s_i := \frac{p_{i-1} + p_{i}}{2}$.
We argue that for every $p$ and $q$ there is a sequence of closed walks $((W_i)_{i\in\{1,\dots,m\}})_{p,q}$ in $D$ where each $W_i$ is fully contained in $\mathcal{R}_{p_{i-1}}\cap \mathcal{R}_{p_{i}}$ and the union of all $W_i$ is strongly connected and contains $p$ and $q$.

 Since $D$ is a directed $\bigO(n/k^{d/(d-1)})$-spanner, we can assume that for every pair $p_{i-1}$, $p_{i}$ there is a shortest path $P_i$ from $p_{i-1}$ to $p_{i}$ in $D$ and a shortest path $Q_i$ from $p_{i}$ to $p_{i-1}$, both of length at most $m/4 = \Theta(n/k^{d/(d-1)})$. Both these paths necessarily will remain within $\mathcal{R}_{p_{i-1}}\cap \mathcal{R}_{p_{i}} \subset \mathcal{R}_{s_i}$.
For $1 \leq i \leq m/2$ we have $\mathcal{R}_{s_i} \subset  \mathcal{R}_{p} \cup \mathcal{R}_{s}$, and for $m/2 < i \leq m$ we have $\mathcal{R}_{s_i} \subset  \mathcal{R}_{q} \cup \mathcal{R}_{s}$.

For each $i\in \{1,\dots , m\}$ we define the closed walk $W_i$ as the concatenation of $P_i$ and $Q_i$. We have such a sequence $(W_i)_{i\in\{1,\dots,m\}}$ for any pair $p,q\in P_{\boxplus}$. We claim that this implies that $D$ can be transformed into a digraph isomorphic to the bidirected $(h+1)^d$-grid using only operations that do not increase its directed tree-width and must be therefore by Lemma~\ref{lem:twbidigrid} of directed tree-width $>k$. To show this, we perform a series of strongly connected contractions on $D$, obtaining a graph $D'$. 

Let $S_p:=\bigcup_{i\in \{1,\dots,\lfloor m/4 \}\rfloor} W_i$ be the union of the closed walks for points in $\mathcal{R}_p$. Notice that $S_p$ may contain points $p_i$ from $\mathcal{R}_s$ but only for $i< m/2$, as a walk $W_j$ may stretch out to all points in $\mathcal{R}_{s_j}$. We analogously define  $S_q:=\bigcup_{i\in \{\lceil{3m/4}\}\rceil, \dots, m\}} W_i$ and $S_{p,q}:=\bigcup_{i\in \{1, \dots, m\}} W_i$. Notice that $S_p$ and $S_q$ are disjoint and both strongly connected.  We perform a strongly connected contraction of $S_p$ to $p$ and $S_q$ to $q$. After performing the contractions the vertices of $V(S_{p,q})-(V(S_p)\cup V(S_q))$ form strongly connected components with $p$ or $q$, as $S_p$, $S_q$ and $S_{p,q}$ were strongly connected. We contract the strongly connected subgraphs induced by the vertices $V(S_{p,q})-(V(S_p)\cup V(S_q))$ to either $p$ or $q$. The resulting digraph contains edges $(p,q)$ and $(q,p)$. 
Further we observe that \emph{(i)} any vertex $p'$ of $S_{p,q}$ in $\mathcal{R}_{p}$ has been identified with $p$, \emph{(ii)} any vertex $q'$ of $S_{p,q}$ in $\mathcal{R}_{q}$ has been identified with $q$, and \emph{(iii)} any other vertex $r$ of $S_{p,q}$ is not contained in another subdigraph $S_{s,t}$ with $\{p, q\} \neq \{s,t\}$. The latter holds, since $r$ cannot lie in $\mathcal{R}_{s, t}
$. This means, that the subdigraphs $S_{p,q}$ and $S_{s,t}$ are contracted consistently.
Thus, regardless of the order in which we contract the subdigraphs, we obtain a digraph $D'$ which has a subdigraph isomorphic to the bidirected $(h+1)^d$-grid. In $D'$, there is a bidirected edge between every neighboring pair of grid points. Therefore the subdigraph obtained by removing  edges between non-neighboring grid points is isomorphic to the $(h+1)^d$-grid. Note that directed tree-width is closed under taking subdigraphs \cite{kim2025directedtreewidthclosedtaking}.
We conclude that the directed geometric spanner $D$ can be transformed to a digraph isomorphic to the $(h+1)^d$-grid by only using operations that do not increase the directed tree-width and by Lemma~\ref{lem:twbidigrid} therefore must be of directed tree-width at least $\frac{1}{9d}(h+1)^{d-1}-1\geq k+1$. 
\end{proof}

Based on this result and further insights for relations between directed tree-width, directed path-width and DAG-width we obtain the same bounds for the latter parameters: 

\begin{corollary}
 Given a set of $n$ points $\sf P\subset\mathbb R^d$ and some positive integer $k\leq n^{1-1/d}$, there is a directed geometric spanner of directed path-width (resp. DAG-width) $k$ and dilation $\bigO(n/k^{d/(d-1)})$. This dilation bound is asymptotically worst-case optimal. 
\end{corollary}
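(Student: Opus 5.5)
The plan has two halves: an upper bound by bidirecting the undirected construction, and a lower bound by comparison with directed tree-width.

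\textbf{Upper bound.} Take the undirected spanner $G$ from \cref{thm:pwupper}, of path-width $k$ and dilation $\bigO(n/k^{d/(d-1)})$. Replace every edge by two opposite arcs to obtain the bidirected digraph $\overleftrightarrow{G}$. Shortest directed paths in $\overleftrightarrow{G}$ are exactly the shortest undirected paths in $G$, so the dilation is unchanged. For the width, I would use the standard facts that bidirecting preserves width. A path-decomposition of $G$ is a directed path-decomposition of $\overleftrightarrow{G}$: every arc $(u,v)$ has both endpoints in a common bag, and this is stronger than the directed requirement that arcs never point ``backwards'' in the bag order. Hence $\dpw(\overleftrightarrow{G})\le \pw(G)=k$. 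Similarly, for DAG-width, the cops-and-robber game on $\overleftrightarrow{G}$ coincides with the undirected game, whose cop number is $\tw(G)+1\le \pw(G)+1$. This gives DAG-width $\le k+1$; I would get exactly $k$ either via the known identity $\dw(\overleftrightarrow{G})=\tw(G)+1$ with the appropriate indexing convention, or by running the construction with $k-1$ in place of $k$, which changes the dilation only by a constant factor. If needed, the width can then be brought up to exactly $k$ by adding arcs, which does not increase the dilation.

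\textbf{Lower bound.} I would reduce to \cref{thm:dtw_lower}, using that both parameters dominate directed tree-width up to constants: $\dtw(D)\le c\cdot \dpw(D)$ and $\dtw(D)\le c\cdot \dw(D)$. For directed path-width, a directed path-decomposition yields a directed tree-decomposition in the sense of Johnson et al.\ with a path as the tree. For DAG-width, a DAG-decomposition yields a directed tree-decomposition of width at most $3\dw$ or so, by the known results of Berwanger et al.\ and Obdr\v{z}\'alek. By the equivalence of the definition of Giannopoulou et al.\ with the older ones up to constant factors, shown by Kim, Hatzel and Kreutzer, these bounds transfer to the notion used here.

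Now suppose $D$ is a directed spanner on $\P_{d,n,k'}$ with $\dpw(D)\le k$ (resp.\ $\dw(D)\le k$), where $k'=c'k$ for a suitable constant $c'$. Then $\dtw(D)\le k'$. By \cref{lem:dtw_lower} and \cref{thm:dtw_lower}, the dilation of $D$ is $\Omega(n/k'^{d/(d-1)})=\Omega(n/k^{d/(d-1)})$. The main care point is the admissible range: the range of $k$ in \cref{thm:dtw_lower} must still hold for $k'=c'k$, which may force shrinking the range of $k$ by a constant factor. This is harmless for an asymptotic statement.

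The main obstacle is pinning down the precise comparison inequalities between directed tree-width (in the definition of Giannopoulou et al.) and directed path-width and DAG-width. I would first cite the chain $\dtw\le c\,\dpw$ and $\dtw\le c\,\dw$ for the Johnson et al.\ definition, and then compose it with the constant-factor equivalence of Kim, Hatzel and Kreutzer.
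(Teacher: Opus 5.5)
Your proposal is correct and matches the paper's proof. For the upper bounds you bidirect the path-width-$k$ spanner of \cref{thm:pwupper}; for the lower bounds you bound the directed tree-width (in the sense of Giannopoulou et al.) by a constant times the directed path-width resp.\ DAG-width and invoke \cref{thm:dtw_lower}. The differences are only in which inequalities are quoted: the paper cites $\dpw(D)=\pw(G)$ for the bidirected version $D$ of $G$ (Yang--Cao) and $\dw(D)\le\dpw(D)+1$ (Berwanger et al.) for the upper bounds, and proves $\dtw(D)\le\dpw(D)$ directly for the Giannopoulou et al.\ definition with the same path-shaped decomposition you describe. For DAG-width it uses $\dtw\le 3\dw+1$ (Johnson et al.\ definition) together with the Kim--Hatzel--Kreutzer comparison, as you propose.
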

To prove this corollary, we need several technical insights on the relations between directed graph parameters. 

\subsection{Dilation Bounds for Spanners of Bounded Directed Path-Width}
We first consider spanners with bounded directed path-width. 

For undirected parameters it is well known that the tree-width of a graph is a lower bound for its path-width. This simply arises from the fact that the definition of tree-width is a generalization of the definition of the path-width. In the directed setting this is not as obvious.

\begin{lemma}
    Let $D$ be a digraph. It holds that $\dtw(D)\leq \dpw(D)$.
\end{lemma}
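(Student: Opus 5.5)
The plan is to convert a directed path-decomposition of width $\dpw(D)=k$ directly into a directed tree-decomposition in the sense of \cite{DBLP:conf/soda/GiannopoulouKKK22}, with the tree $T$ a directed path. I use Barát's definition \cite{Bar06}. A directed path-decomposition is a sequence of bags $(W_1,\dots,W_r)$ with the following properties: every vertex lies in the bags of a consecutive interval $[a_v,b_v]$; every arc $(u,v)$ either has both ends in a common bag or points ``backwards''; and the width is $\max_i|W_i|-1$. Pointing backwards means there are $i\le j$ with $v\in W_i$ and $u\in W_j$. Equivalently, no arc goes from a vertex whose interval ends before $j$ to a vertex whose interval starts at $j$ or later. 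If the literature convention is the opposite orientation, I simply reverse the sequence. I also delete empty bags, so that $W_1\neq\emptyset$.

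\textbf{Construction.} Let $T$ be the directed path $t_1\to t_2\to\dots\to t_r$ rooted at $t_1$. Put $\mathcal{X}(t_i)=\{v: a_v=i\}$, the vertices first appearing in $W_i$. For the arc $e_i=(t_i,t_{i+1})$ put $\mathcal{W}(e_i)=W_i\cap W_{i+1}$.
\begin{itemize}
\item The sets $\mathcal{X}(t_i)$ partition $V(D)$, since every vertex has a unique first bag.
\item $\mathcal{X}(t_1)=W_1\neq\emptyset$.
\item $\mathcal{X}(t_i)\cup\mathcal{W}(e_{i-1})\cup\mathcal{W}(e_i)\subseteq W_i$, so the width is at most $k$.
\end{itemize}

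\textbf{Guard condition.} Fix $e_i$. Then $\mathcal{X}(T_{t_{i+1}})=A:=\{v: a_v\ge i+1\}$, and its complement is $B:=\{v:a_v\le i\}$. Take a vertex of $B$ outside $\mathcal{W}(e_i)=W_i\cap W_{i+1}$. It cannot lie in $W_{i+1}$, since its interval would then contain both $i$ and $i+1$. Hence it satisfies $b_v\le i$. A closed walk in $D-\mathcal{W}(e_i)$ meeting both $A$ and $B$ must therefore use an arc in each direction between $B':=\{v: b_v\le i\}$ and $A$. In particular it uses an arc from a vertex of $B'$ to a vertex of $A$. Such an arc cannot have both ends in a common bag, because the intervals are disjoint. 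Its head lies strictly later than its tail, which contradicts the backward-arc condition. So no such closed walk exists, and $\dtw(D)\le\dpw(D)$.

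\textbf{Main obstacle.} The delicate step is matching the arc-direction convention of the path-decomposition with the guard condition, together with the claim that the vertices of $B$ surviving deletion of the guard end strictly before $i+1$. Both are handled as above: choose the bag order so that the forbidden arcs go from $B'$ to $A$, and use the interval property to identify the surviving vertices of $B$.
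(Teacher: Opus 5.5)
Your proof is correct and is essentially the paper's proof. It uses the same directed path $T$, the same $\mathcal{X}(t_i)$ (the newly introduced vertices), the same $\mathcal{W}(e_i)=W_i\cap W_{i+1}$, and checks the guard condition via a crossing arc of the closed walk. The paper uses the forward convention ($u\in X_i$, $v\in X_j$ with $i\le j$) and takes the crossing arc from $\mathcal{X}(T_{t_{i+1}})$ back into its complement. This shows your ``main obstacle'' is harmless: a closed walk crossing the cut uses arcs in both directions, so the construction works under either orientation without reversing the sequence. Your deletion of empty bags to guarantee $\mathcal{X}(t_1)\neq\emptyset$ is a detail the paper leaves implicit.
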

\begin{proof}
    Let $(X_1,\dots, X_r)$ be a directed path-decomposition of width $k$ for some digraph $D$. We construct a directed tree-decomposition $(T,\mathcal{X},\mathcal{W})$ of width $k$ for $D$ as follows:
    \begin{itemize}
        \item The directed tree will simply be a directed path $T = (\{v_1,\dots,v_r\},\{(v_i,v_{i + 1})\mid 1\leq i<r\})$.
        \item Every $X(v_i)$ will contain the vertices that are newly introduced in the corresponding bag $X_i$. This means $\mathcal{X}(v_i) = X_i-\bigcup_{j = 1}^{i-1}X_j$.
        \item And lastly we set $\mathcal{W}((v_i,v_{i + 1})) = X_i\cap X_{i + 1}$.
    \end{itemize}
    For every $v_i\in V(T)$ we have that $\vert \mathcal{X}(v_i)\cup\bigcup_{e\sim v_i}\mathcal{W}(e)\vert-1\leq \vert X_i\vert-1\leq k$. It is also not hard to see that $\{\mathcal{X}(v_i)\mid v_i\in V(T)\}$ is by construction a partition of $V(D)$. Now suppose that there is some $e=(v_i,v_{i + 1})$ for which there is a closed walk in $D-\mathcal{W}(e)$ containing a vertex of $\mathcal{X}(T_{v_{i + 1}})$ and a vertex of $V(D)-\mathcal{X}(T_{v_{i + 1}})$. This means that there is some edge $(q,p)$ where $p\in V(D)-\mathcal{X}(T_{v_{i + 1}})$ and $q\in\mathcal{X}(T_{v_{i + 1}}) $. For this to be the case $q$ must have been introduced in some $X_{\ell_q}$ for some $\ell_q\geq i + 1$ and $p$ introduced in some $X_{\ell_{p}}$ for some $\ell_p\leq i$. Therefore, $p$ must be contained in every $X_\ell$ for $\ell_p\leq \ell\leq\ell_q$ and hence in $X_i$ and $X_{i + 1}$. Then by definition we have that $p\in \mathcal{W}(e)$, a contradiction.
\end{proof}

Nonetheless, the following corollary holds by Theorem \ref{thm:dtw_lower} and the lemma above.

\begin{corollary} \label{cor-dpw-lower}
Let $d \geq 2$ be a fixed integer. For positive integers $n$ and $k \leq n^{(d-1)/d} \cdot(12d)^{1/d-2}$
there is a set of $n$ points in $\mathbb{R}^d$, so that every directed geometric spanner of directed path-width $k$ on this set has dilation $\Omega(n/k^{d/(d-1)})$.
\end{corollary}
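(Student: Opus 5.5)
The corollary follows by combining \cref{thm:dtw_lower} with the preceding lemma $\dtw(D)\leq\dpw(D)$, so the plan is a short monotonicity argument.

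I take the same point set as in \cref{thm:dtw_lower}, namely $\P_{d,n,k}$ for the given $d$, $n$ and $k\leq n^{(d-1)/d}\cdot(12d)^{1/d-2}$. Let $D$ be any directed geometric spanner on this set with directed path-width $k$. By the preceding lemma, $\dtw(D)\leq\dpw(D)=k$. Hence $D$ is a directed geometric spanner of directed tree-width at most $k$.

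Next I need \cref{thm:dtw_lower}, or more precisely \cref{lem:dtw_lower}, to apply to spanners of directed tree-width \emph{at most} $k$, not only exactly $k$. This is where some care is needed, and I would settle it first. \cref{lem:dtw_lower} states the contrapositive form directly: a spanner on $\P_{d,n,k}$ with dilation below the $\Omega(n/k^{d/(d-1)})$ threshold has directed tree-width $>k$. So directed tree-width $\leq k$ forces dilation $\Omega(n/k^{d/(d-1)})$, and the "exactly $k$" versus "at most $k$" issue disappears.

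Applying this to $D$ shows that its dilation is $\Omega(n/k^{d/(d-1)})$. The range of $k$ is unchanged, so the hypotheses of \cref{thm:dtw_lower} hold verbatim, and this completes the argument.

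There is no real obstacle beyond checking that the constants and the range of $k$ carry over. Both do, because the point set and the parameter $k$ are identical to those in \cref{thm:dtw_lower}.
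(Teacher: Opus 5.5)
Your proposal is correct and follows essentially the same route as the paper, which derives the corollary directly from Theorem~\ref{thm:dtw_lower} together with the lemma $\dtw(D)\leq\dpw(D)$ on the same point set $\P_{d,n,k}$. Your extra remark that \cref{lem:dtw_lower}, read contrapositively, already covers spanners of directed tree-width at most $k$ is a valid clarification that the paper leaves implicit.
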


The corresponding upper bound of $\bigO(n/k^{d/(d-1)})$ once again follows from the upper bound for spanners of bounded undirected path-width where every edge of the spanner becomes bidirected and the following lemma.

\begin{lemma}[\cite{YangC08}]
    Let $D$ be a digraph and $G$ its undirected version, then $\dpw(D)=\pw(G)$.\footnote{The proofs shown in \cite{YangC08} use the notation of directed vertex separation number, which is known to be equal to directed path-width.}
\end{lemma}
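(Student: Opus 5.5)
We want to show: for a digraph $D$ with underlying undirected graph $G$ (where, as in the intended use, $D$ is the bidirected version of $G$, i.e.\ every edge of $G$ becomes the two arcs $(u,v)$ and $(v,u)$), $\dpw(D)=\pw(G)$. We work with the linear-ordering characterisations: $\pw(G)$ equals the vertex separation number $\mathrm{vs}(G)$, and $\dpw(D)$ equals the directed vertex separation number $\mathrm{dvs}(D)$. For an ordering $v_1,\dots,v_n$, $\mathrm{dvs}$ takes the maximum over $i$ of the number of $u\in\{v_1,\dots,v_i\}$ having an in-neighbour $v_j$ with $j>i$, and minimises over orderings. Both equalities are standard (Kinnersley for the undirected case, Yang and Cao for the directed one), so the statement reduces to comparing these two linear-ordering quantities.

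First fix an ordering $v_1,\dots,v_n$ and a position $i$. In the undirected vertex separation number the relevant set is $\{u\le i : \exists\, v_j,\ j>i,\ \{u,v_j\}\in E(G)\}$. In the directed version it is $\{u\le i : \exists\, j>i,\ (v_j,u)\in E(D)\}$. Since $D$ is bidirected, $(v_j,u)\in E(D)$ holds exactly when $\{u,v_j\}\in E(G)$. Hence the two sets coincide for every ordering and every $i$. Taking the maximum over $i$ and then the minimum over orderings gives $\mathrm{dvs}(D)=\mathrm{vs}(G)$.

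Combining this with $\pw=\mathrm{vs}$ and $\dpw=\mathrm{dvs}$ yields the claim. For a general (non-bidirected) $D$ the same comparison still gives the inequality $\dpw(D)\le\pw(G)$, because the directed set is contained in the undirected one; only this direction is needed for the upper bound in the corollary.

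The only genuine obstacle is ensuring that the paper's definition of directed path-width matches the one whose vertex-separation characterisation is known; the footnote covers this. If that match were in doubt, I would instead argue directly with bags. The inequality $\dpw(D)\le\pw(G)$ holds because every undirected path decomposition of $G$ satisfies the directed condition: every arc $(u,v)$ has both endpoints in a common bag. For the reverse inequality on bidirected $D$, the directed condition applied to both arcs $(u,v)$ and $(v,u)$ forces $u$ and $v$ to appear in a common bag, so the directed decomposition is already an undirected one.
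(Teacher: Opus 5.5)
Your proposal is correct. The paper gives no argument of its own: it cites Yang and Cao, and its footnote says their proofs go through the directed vertex separation number. Your first route is therefore essentially the cited argument. You identify $\dpw$ with $\mathrm{dvs}$ and $\pw$ with $\mathrm{vs}$, then observe that for a bidirected $D$ the separated sets coincide for every ordering and every position.

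Your second, bag-based argument is a genuinely different and more elementary route. It needs neither linear-ordering characterisation and works directly with the decomposition definitions the paper gives in its appendix. It also makes irrelevant which direction arcs must point relative to the bag order, which is exactly the definitional mismatch you were worried about.

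The one step worth writing out is why the arcs $(u,v)$ and $(v,u)$ force a common bag:
\begin{itemize}
\item by the condition $X_i\cap X_\ell\subseteq X_j$, the bags containing $u$ form an interval $[a_u,b_u]$, and those containing $v$ form an interval $[a_v,b_v]$;
\item the arc $(u,v)$ gives $a_u\le b_v$, and the arc $(v,u)$ gives $a_v\le b_u$;
\item hence $\max(a_u,a_v)\le\min(b_u,b_v)$, so the intervals intersect.
\end{itemize}

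You are also right to read $D$ as the bidirected version of $G$. For an arbitrary digraph the statement as literally written fails. Any acyclic digraph has directed path-width $0$, using singleton bags in topological order, while its underlying graph can have large path-width. Only $\dpw(D)\le\pw(G)$ survives in general, and as you note, that is the only direction the upper-bound corollary uses.
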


\begin{corollary} \label{cor-dpw-upper}
Given a set of $n$ points $\sf P\subset\mathbb R^d$ and some positive integer $k\leq n^{1-1/d}$, there is a directed geometric spanner of directed path-width $k$ and dilation $\bigO(n/k^{d/(d-1)})$.
\end{corollary}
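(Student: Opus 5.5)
Take the undirected spanner $G$ of \Cref{thm:pwupper}: for the given $\P\subset\mathbb{R}^d$ and $k\leq n^{1-1/d}$, it has path-width $k$ and dilation $\bigO(n/k^{d/(d-1)})$. Let $D$ be its bidirected version, i.e.\ replace every edge $\{u,v\}\in E(G)$ by the two arcs $(u,v)$ and $(v,u)$, each with the Euclidean length $\|uv\|$. The claim is that $D$ is the required directed spanner, and we verify its two properties in turn.

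\emph{Dilation.} Fix an ordered pair $p,q\in\P$ and a shortest undirected path $p=v_0,v_1,\dots,v_\ell=q$ in $G$. The arcs $(v_{j-1},v_j)$ all exist in $D$, so this path can be traversed as a directed path from $p$ to $q$ of the same length. Hence $d_D(p,q)=d_G(p,q)\leq \bigO(n/k^{d/(d-1)})\cdot\|pq\|$, and by symmetry the same bound holds for $(q,p)$. So the directed dilation of $D$ equals the dilation of $G$.

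\emph{Directed path-width.} The undirected version of $D$ is exactly $G$, since pairs of antiparallel arcs collapse to single edges. The preceding lemma from \cite{YangC08} gives $\dpw(D)=\pw(G)=k$. If one wants width exactly $k$ rather than at most $k$, note that \Cref{thm:pwupper} already provides path-width $k$, and this equality transfers it directly.

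The only points needing care are that the cited lemma is applied to the bidirected digraph whose underlying graph is $G$, and that $D$ keeps the same vertex set $\P$. Both hold by construction, so no step is a real obstacle; the corollary follows by combining \Cref{thm:pwupper} with that lemma.
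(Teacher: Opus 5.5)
Your proposal is correct and matches the paper's argument: you bidirect the path-width-$k$ spanner of \Cref{thm:pwupper}, observe that directed distances, and hence the dilation, are unchanged, and apply the equality $\dpw(D)=\pw(G)$ from \cite{YangC08}. You are also right to stress that this equality is applied to the bidirected digraph, which is the setting where it actually holds; for an arbitrary digraph one only has $\dpw(D)\leq\pw(G)$.
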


\subsection{Dilation Bounds for Spanners of Bounded DAG-Width}
We now show the mentioned bounds for spanners of bounded DAG-Width.

\begin{lemma}[Proposition 36 in \cite{BerwangerDHKO12}]
    Let $D$ be a digraph, then $\dw(D)\leq \dpw(D) + 1$.
\end{lemma}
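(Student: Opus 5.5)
The plan is to turn a directed path-decomposition into a DAG-decomposition whose underlying DAG is a directed path, keeping the bags unchanged. The ``$+1$'' then comes only from the different width conventions: directed path-width measures $\max_i |X_i|-1$, while DAG-width measures the maximum bag size $\max_d |X_d|$.

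I will use the following definitions. A directed path-decomposition of $D$ of width $k$ is a sequence $(X_1,\dots,X_r)$ of subsets of $V(D)$ such that:
\begin{itemize}
\item every vertex lies in some bag;
\item every vertex lies in a contiguous interval of bags;
\item for every arc $(u,w)$ of $D$ there are indices $a\le b$ with $u\in X_a$ and $w\in X_b$;
\item $|X_i|\le k+1$ for all $i$.
\end{itemize}
A DAG-decomposition $(\mathcal{D},(X_d)_{d\in V(\mathcal{D})})$ must satisfy three conditions:
\begin{itemize}
\item[(D1)] the bags cover $V(D)$;
\item[(D2)] for every arc $(d,d')$ of $\mathcal{D}$, the set $X_d\cap X_{d'}$ guards $X_{\succeq d'}\setminus X_d$, i.e.\ every arc of $D$ leaving this set ends in $X_d\cap X_{d'}$; here $X_{\succeq d'}$ is the union of all bags at nodes reachable from $d'$;
\item[(D3)] $d\preceq d'\preceq d''$ implies $X_d\cap X_{d''}\subseteq X_{d'}$.
\end{itemize}
Its width is $\max_d|X_d|$.

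Let $\mathcal{D}$ be the directed path $v_1\to v_2\to\dots\to v_r$ and set $X_{v_i}=X_i$. Three of the four checks are immediate.
\begin{itemize}
\item (D1) holds because the path-decomposition covers $V(D)$.
\item (D3) is exactly the interval property of the path-decomposition.
\item The width is $\max_i|X_i|\le k+1$, which gives $\dw(D)\le \dpw(D)+1$ once (D2) is verified.
\end{itemize}

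The only real step is (D2) for an arc $(v_i,v_{i+1})$. Let $V'=\bigcup_{j>i}X_j\setminus X_i$. By the interval property, $V'$ is exactly the set of vertices whose first bag has index at least $i+1$.

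Take an arc $(u,w)$ of $D$ with $u\in V'$ and $w\notin V'$. The arc condition gives $a\le b$ with $u\in X_a$ and $w\in X_b$. Since every bag containing $u$ has index at least $i+1$, we get $b\ge a\ge i+1$.

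Because $w\notin V'$ but $w$ occurs in the later bag $X_b$, we must have $w\in X_i$. So $w$ lies in both $X_i$ and $X_b$ with $b\ge i+1$, and the interval property places $w$ in $X_{i+1}$. Hence $w\in X_i\cap X_{i+1}$, which is precisely the guarding condition.

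I expect no genuine obstacle. The one point needing care is to use the same conventions as \cite{BerwangerDHKO12} for guarding and for the width of a DAG-decomposition, so that the off-by-one in the statement is accounted for correctly.
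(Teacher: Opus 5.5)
Your proof is correct and is the standard argument: the paper gives no proof of its own and only cites Proposition 36 of Berwanger et al., and the natural argument behind that result is yours. You read a directed path decomposition along a directed path $v_1\to\dots\to v_r$ as a DAG-decomposition with the same bags, and the $+1$ is only the difference between the bag-size convention for DAG-width and the bag-size-minus-one convention for path-width. Your list of DAG-decomposition conditions omits the source clause of the paper's definition ($X_{\succcurlyeq_u}$ guarded by $\emptyset$ for every source $u$), but it holds trivially here, since the unique source $v_1$ has $X_{\succcurlyeq_{v_1}}=V(D)$.
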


\begin{corollary} \label{cor-dagw-upper}
Given a set of $n$ points $\sf P\subset\mathbb R^d$ and some positive integer $k\leq n^{1-1/d}$, there is a directed geometric spanner of DAG-width $k$ and dilation $\bigO(n/k^{d/(d-1)})$.
\end{corollary}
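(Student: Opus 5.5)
The plan is to chain three earlier results: the undirected path-width construction of \cref{thm:pwupper}, the identity $\dpw(D)=\pw(G)$ of \cite{YangC08}, and the inequality $\dw(D)\leq \dpw(D)+1$ of \cite{BerwangerDHKO12}. The only real work is accounting for the additive $+1$ so that the DAG-width comes out at most $k$ rather than $k+1$.

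\textbf{Case $k\geq 2$.} Apply \cref{thm:pwupper}, equivalently \cref{cor-dpw-upper}, with parameter $k-1$. This yields an undirected spanner $G$ on $\P$ with $\pw(G)\leq k-1$ and dilation $\bigO(n/(k-1)^{d/(d-1)})$. Since $k-1\geq k/2$, this dilation is $\bigO(n/k^{d/(d-1)})$. Let $D$ be obtained from $G$ by replacing every edge with a bidirected pair of arcs. Every undirected path in $G$ can be traversed as a directed path in either direction, so $d_D(p,q)=d_G(p,q)$ for all $p,q$, and $D$ has the same dilation as $G$. The underlying undirected graph of $D$ is $G$, so by \cite{YangC08} we get $\dpw(D)=\pw(G)\leq k-1$. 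Proposition 36 of \cite{BerwangerDHKO12} then gives $\dw(D)\leq k$.

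\textbf{Case $k=1$.} Take the path $\pi$ of \cref{mat3.2} with dilation at most $4n$ and make it bidirected. The target bound is $\bigO(n/1)=\bigO(n)$, so this dilation suffices. It remains to bound the DAG-width of a bidirected path directly. I expect DAG-width $1$ under the convention used in the appendix, by the cops-and-robber characterisation: a bidirected tree, and in particular a bidirected path, has DAG-width equal to the tree-width of the underlying tree. Alternatively one can exhibit a DAG-decomposition with bags of size $2$ that follows the path. This is the step I consider most delicate, because it depends on the exact width convention (some sources use $-1$ offsets). If it fails, the clean fallback is to restate the bound for DAG-width $k+1$, which changes nothing asymptotically.

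\textbf{Exactly $k$.} If the statement requires width exactly $k$ rather than at most $k$, add bidirected edges to $D$. Adding edges can only shrink distances, so dilation does not increase. DAG-width grows by at most one per added edge, and adding all edges eventually reaches the complete bidirected graph of width $n-1\geq k$ in the relevant range. So some intermediate graph in this sequence has DAG-width exactly $k$.
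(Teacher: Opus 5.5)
Your route is the paper's: the corollary comes from \cref{cor-dpw-upper} (bidirect the path-width spanner and use $\dpw(D)=\pw(G)$) followed by $\dw(D)\leq\dpw(D)+1$. Your case $k\geq 2$ is correct. It is more careful than the paper, which never deals with the additive $+1$. Applying \cref{thm:pwupper} with $k-1$ and absorbing $(k-1)^{-d/(d-1)}\leq 2^{d/(d-1)}k^{-d/(d-1)}$ into the constant is exactly what is needed.

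Your $k=1$ paragraph does not work, and the problem is more than a convention mismatch.

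\begin{itemize}
\item The appendix defines DAG-width by $|X_u|\leq k$, with no $-1$. This is the Berwanger et al.\ convention under which $\dw\leq\dpw+1$ holds.
\item Under this convention, DAG-width is the number of cops in the monotone visible-robber game. A bidirected tree with at least one edge needs $\tw+1=2$ cops, not $\tw$. Your own \emph{alternative}, a decomposition with bags of size $2$, certifies width $2$, not $1$.
\item DAG-width $1$ characterises acyclic digraphs. On a DAG one cop chases the robber to a sink. On a directed cycle a single cop never wins, because lifting it frees the whole cycle.
\item Any directed spanner of finite dilation on at least two points contains a directed cycle, since $p$ and $q$ must reach each other.
\end{itemize}

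Hence for $k=1$ and $n\geq 2$ the statement is false as written under the paper's own definition, and no construction can fix it. Your fallback, stating the bound for DAG-width $k+1$ or restricting to $k\geq 2$, is therefore necessary, not optional. It is also all the paper's one-line chain gives in that case, because $\dpw=0$ again forces acyclicity.

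Your ``exactly $k$'' step is fine, but you should justify that adding an arc $(u,v)$ raises DAG-width by at most one. One way is to add $u$ to every bag and add a new source node with bag $\{u\}$ pointing to all old sources.
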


A corresponding lower bound can be obtained through the lower bound for spanners of bounded directed tree-width. It was shown in \cite{BerwangerDHKO12} that the directed tree-width of a digraph $D$ is bounded by $3\dw(D) + 1$. The definition of directed tree-width used in \cite{BerwangerDHKO12} was the definition originally given by Johnson et al. in \cite{JRST01}. The definition of directed tree-width used in this paper is a lower bound for the definition of Johnson et al. \cite{kim2025directedtreewidthclosedtaking}.

\begin{corollary} \label{cor-dagw-lower}
Let $d \geq 2$ be a fixed integer. For positive integers $n$ and $k \leq (n^{(d-1)/d} \cdot(12d)^{1/d-2}-1)/3$
there is a set of $n$ points in $\mathbb{R}^d$, so that every directed geometric spanner of DAG-width $k$ on this set has dilation $\Omega(n/k^{d/(d-1)})$.
\end{corollary}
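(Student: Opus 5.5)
The plan is to reduce to \cref{thm:dtw_lower} via the inequality $\dtw(D)\leq 3\dw(D)+1$ from \cite{BerwangerDHKO12}, which holds for the definition of directed tree-width of Johnson et al.~\cite{JRST01}. By \cite{kim2025directedtreewidthclosedtaking}, the definition used in this paper is bounded from above by that of Johnson et al. Hence for every digraph $D$ the directed tree-width in our sense satisfies $\dtw(D)\leq 3\dw(D)+1$.

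Given $n$, $d$ and $k\leq (n^{(d-1)/d}\cdot(12d)^{1/d-2}-1)/3$, set $k':=3k+1$. Then $k'\leq n^{(d-1)/d}\cdot(12d)^{1/d-2}$, so \cref{thm:dtw_lower} applies to $k'$ and yields the point set $\P_{d,n,k'}$. On this set every directed geometric spanner of directed tree-width at most $k'$ has dilation $\Omega(n/k'^{d/(d-1)})$. Note that the proof via \cref{lem:dtw_lower} shows directed tree-width $>k'$ for any spanner of small dilation, so it covers all spanners of directed tree-width at most $k'$, not only those of width exactly $k'$.

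Now let $D$ be any directed geometric spanner of DAG-width $k$ on this point set. Then $\dtw(D)\leq 3k+1=k'$, so $D$ has dilation $\Omega(n/k'^{d/(d-1)})$. Since $k'\leq 4k$ for $k\geq 1$, we have $k'^{d/(d-1)}\leq 4^{d/(d-1)}k^{d/(d-1)}$, and $4^{d/(d-1)}\leq 16$ is a constant for fixed $d\geq 2$. The bound therefore becomes $\Omega(n/k^{d/(d-1)})$.

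The only delicate step is the chain of definitions: \cite{BerwangerDHKO12} bounds the Johnson et al.\ directed tree-width, so we must make sure the comparison from \cite{kim2025directedtreewidthclosedtaking} goes in the right direction, with our notion at most theirs. This is exactly what the paper states just before the corollary. Everything else is the parameter substitution and absorbing constants.
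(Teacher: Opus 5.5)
Your proposal is correct and follows the paper's own argument. The paper likewise combines $\dtw(D)\leq 3\dw(D)+1$ from \cite{BerwangerDHKO12} (stated there for the Johnson et al.\ definition) with the fact from \cite{kim2025directedtreewidthclosedtaking} that the directed tree-width used here is at most the Johnson et al.\ one, and then applies \cref{thm:dtw_lower}. You additionally spell out two steps the paper leaves implicit: that the hypothesis on $k$ is exactly what lets you apply the theorem with $k'=3k+1$, and that the constant can be absorbed via $(3k+1)^{d/(d-1)}\leq 16\,k^{d/(d-1)}$.
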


\section{Conclusion and Open Questions}

In this paper we show that for numerous graph parameters, computing a geometric spanner that is bounded in this parameter yields a dilation of $\bigO(n/k^{d/(d-1)})$, which is asymptotically worst case optimal. One parameter that remains open is band-width: This is closely related to path-width, but even more restrictive than cut-width. Thus it is open, whether there exists a spanner of band-width $k$ and dilation $\bigO(n/k^{d/(d-1)})$ for every set of points in $\mathbb{R}^d$.

Further, we consider tree-depth as a graph parameter that behaves differently: While we cannot give a dilation upper bound for spanners with bounded tree-depth, we present an XP-algorithm that, for fixed tree-depth $k$, computes a $2$-approximation spanner with tree-depth $k$ in polynomial time.
It remains open if similar results are possible for other graph parameters. Graphs of tree-width $1$ are trees, and while it is known that computing a tree with minimum dilation is NP-hard \cite{Klein2007, CHEONG2008188}, to our knowledge no approximation algorithms exist. Thus, the open question raised by Eppstein \cite{DBLP:books/el/00/Eppstein00} of finding an approximation algorithm to compute a tree with minimum dilation remains open and consequently, this is open also for spanners of bounded tree-width.
The same is true for paths with minimum dilation \cite{GKM07} and thus for spanners of bounded path-width.

\bibliography{bib}

\newpage

\appendix

\section{Linear Dilation Path-Spanner}

In this section we restate the proof of Matoušek but with respect to geometric spanners using our notation and a slightly more careful analysis of the length and dilation, which leads to improved bounds.

We use $\sigma(\P)$ to denote the length of the longest edge in $\text{EMST}(\P)$. Before we prove the main result we shortly discuss how to compute a path of small length, which we need for the main proof. 
\begin{lemma}
\label{lem:easypath}
    Let $\P\subset \mathbb{R}^d$ be a set of $n$ points and let $p\in \P$ be some point. There is a path starting at $p$ of length $\leq 2n\sigma(\P)$ containing all points of $\P$.
\end{lemma}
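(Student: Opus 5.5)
The plan is to use the classical ``double the tree'' argument on $\text{EMST}(\P)$. Let $T=\text{EMST}(\P)$ and root it at $p$. A depth-first traversal of $T$ starting at $p$ is a closed walk of $2(n-1)$ edge steps that uses every edge of $T$ exactly twice. Each edge of $T$ has length at most $\sigma(\P)$, so the total length of this walk is at most $2(n-1)\sigma(\P)$.

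Next we shortcut the walk. We keep only the first occurrence of each point, in the order of the traversal. This gives a sequence $p=v_1,v_2,\dots,v_n$ that contains every point of $\P$ exactly once. We take the path $\pi$ with edges $v_jv_{j+1}$, which starts at $p$ as required.

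To bound the length of $\pi$, note that consecutive points $v_j,v_{j+1}$ are connected by a contiguous subwalk of the DFS walk. These subwalks are pairwise disjoint pieces of the walk. By the triangle inequality, $\|v_jv_{j+1}\|$ is at most the length of its subwalk. Summing over $j$ gives a total length of at most $2(n-1)\sigma(\P)\le 2n\sigma(\P)$.

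There is no real obstacle here. The only point needing care is that the shortcut segments correspond to disjoint portions of the walk, so their lengths sum to at most the walk length. This follows directly from the fact that the first occurrences appear in increasing order along the traversal. We may also omit the final return to $p$, which only improves the bound.
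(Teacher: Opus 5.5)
Your proposal is correct and follows the paper's argument: a DFS traversal of $\text{EMST}(\P)$ from $p$ uses each edge at most twice, giving length at most $2(n-1)\sigma(\P)$, and shortcutting repeated points via the triangle inequality gives a path starting at $p$. Your remark that the shortcut segments correspond to disjoint portions of the walk spells out the shortcutting step slightly more explicitly than the paper does.
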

\begin{proof}
    Let $T=\text{EMST}(\P)$. $T$ has $n-1$ edges and the length of every edge is at most $\sigma(\P)$, hence the total weight of the EMST is at most $(n-1)\sigma(\P)$. A simple DFS-traversal on $T$ starting at $p$ uses every edge at most two times. Therefore the length of this traversal is at most two times the length of the EMST. By skipping points that are visited multiple times (shortcutting) in this traversal, we obtain a path on $\P$ starting at $p$. By the triangle inequality the total length of this path must be bounded by the length of the DFS-traversal and therefore by $2(n-1)\sigma(\P)$.  
\end{proof}

\matousek*
\begin{proof}
    We will prove the above lemma by proving the following statement by induction on the number of points $n$. 

    \begin{statement}[H1]
            Let $\P\subset \mathbb{R}^d$ be a set of $n$ points. There is a path $P$ of length $2n\sigma(\P)-2\sigma(\P)$ and dilation at most $4n$ that contains all points of $\P$.
    \end{statement}

    To obtain the exact statement of the lemma above one would simply stop the induction at the appropriate step.

    For $n=1$ the statement holds trivially. Let $\vert\P\vert=n>1$, $\sigma=\sigma(\P)$ and let $T=\text{EMST}(\P)$. It is easy to see that $T[\leq \sigma/2]$ is disconnected. Let $\mathcal{K}$ be the set of its connected components. For points $p$, $q$ inside the same connected component we know that the path in $T$ between these points only uses edges of length at most $\sigma/2$.

    We now define an auxiliary rooted tree $S$. The set of vertices of this tree is $\mathcal{K}$. The root $R$ of $S$ is chosen arbitrarily. Components $K,L\in \mathcal{K}$ are connected through an edge if there is an edge between points of the components in $T$. For every component $K\in\mathcal{K}$ we chose a representative point $p_K$. For the root $R$ we chose the point $p_R\in R$ arbitrarily. For the remaining components $K$, $p_K$ is the point of $K$ which has an edge in $T$ to a point in its parent component in $S$. For every $K\in \mathcal{K}$ we have $\sigma(K)\leq \sigma/2$ and $\vert K\vert < n$. By the induction hypothesis there is path $P_K$ of length $\vert K\vert\sigma-\sigma$ and dilation at most $4\vert K\vert$ containing all points of $K$.

    The path for $\P$ will now be built inductively, starting from the leaves of the tree $S$ up to its root, using the paths for the components guaranteed by the induction hypothesis.

    Let $K$ be a node of $S$ and let $S_K$ be the subtree of $S$ induced by $K$. We use $U_K$ to denote the set of all points in the components of the subtree $S_K$.
    We now inductively prove the following statement:

    \begin{statement}[H2]
    There is a path $Q_K$ of length $2\sigma\vert U_K\vert-2\sigma$ and dilation $4\vert U_K\vert$ containing all points of $U_K$.
    \end{statement}
    
    For $K=R$ this proves the inductive step of the top level induction.

    If $K$ is a leaf in $S$ then by (H1) there is a path with the required properties. If $K$ is an inner node then we know that (H2) holds for all of its children in $S$. Let $q_1,\dots q_t$ be the path guaranteed by Lemma \ref{lem:easypath} for $\P=K$ and $v=p_K$. For $j=1,2,\dots,t$, let $M_j=\{K_{j,1}, K_{j,2}, \dots, K_{j,m(j)}\}$ be the set of those components of $S_K$, which are directly connected by an edge in $T$ to the point $q_j$. Let $M=M_1\cup M_2\cup \dots\cup M_q$ i.e $M$ contains all children of $K$ in $S$.

    We now define the path $Q_K$ for $U_K$ by concatenating the paths for the child components as follows:
    \[Q_K = P_K\cdot Q_{K_{1,1}}\cdot Q_{K_{1,2}}\cdots Q_{K_{1,m(1)}}\cdot Q_{K_{2,1}}\cdots Q_{K_{t-1,m(t-1)}}\cdots Q_{K_{t,m(t)}} \]
    The total length of the resulting path can bounded through:
    \[(\sigma \vert K\vert-\sigma)+\sigma \vert K\vert+(2\sigma\vert M\vert-\sigma)+2\sum_{L\in M}(\sigma \vert U_L \vert-\sigma)= 2\sigma\vert K \vert-2\sigma+\sum_{L\in M}2\sigma\vert U_L\vert=2\sigma\vert U_K\vert-2\sigma\]
    Here $\sigma \vert K\vert-\sigma$ is the length of the path $P_K$ and $\sigma \vert K\vert$ is the length of the path $q_1,\dots,q_t$ which in parts bounds the total length of the edges which connect paths for different $M_j$. To be more exact it bounds the portion of the length of these edges that come from the distance between the points $q_1,\dots,q_t$. With $2\sigma \vert M\vert-\sigma$ we account for the distance between points $q_j$ and the representatives $r_{K_{j,1}}, \dots , r_{K_{j,m(j)}}$. Lastly with $2\sum_{L\in M}(\sigma\vert U_L\vert - \sigma)$ we account on one hand for the length of the paths $Q_{1,1},\dots,Q_{t,m(t)}$ and on the other hand the length of the edges inserted between the paths $Q_{1,1},\dots,Q_{t,m(t)}$ minus the length that is attributed to the distance between points $q_1,\dots,q_t$.

    Now let us show the dilation bound. Let $p$ and $q$ be two distinct points of $U_K$. If both $p$ and $q$ lie in the same set $U_L$ for some $L\in M$ or both lie in $K$ then the dilation bound follows from (H2) or (H1) respectively. If that is not the case then $\|pq\|>\sigma/2$. The distance between $p$ and $q$ along the path is at most $2\sigma\vert U_K\vert-2\sigma$, as shown above. Therefore the dilation is at most $4\vert U_K\vert-4$.
\end{proof}

\begin{lemma}
    For a set of $n$ points in $\mathbb{R}^d$ the path in the above construction can be computed in time $\bigO(n^2)$ and in $\bigO(n\log n)$ for $d=2$.
\end{lemma}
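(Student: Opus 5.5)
The plan is to follow the recursive structure of the proof of \cref{mat3.2} and bound the work at each stage. I would begin by computing $T=\text{EMST}(\P)$. In general dimension $d$, Prim's algorithm on the complete Euclidean graph does this in $\bigO(n^2)$ time. For $d=2$, I would compute the Delaunay triangulation in $\bigO(n\log n)$ time and take its minimum spanning tree, which has only $\bigO(n)$ edges. Everything after this step should be linear or $\bigO(n\log n)$, so the EMST computation sets the overall bound.

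Rather than recomputing EMSTs of subsets in each recursive call, I would rely on the observation already used in the paper: every connected component of $T[\leq s]$ induces a connected subtree of $T$, and this subtree is the EMST of its own point set. The whole recursion of \cref{mat3.2} can therefore be read off a single hierarchy on $T$. I would sort the edges of $T$ by length in $\bigO(n\log n)$ and process them in increasing order with union--find, which is Kruskal's algorithm applied to $T$ itself. This produces a merge tree whose internal nodes are the components of $T[\leq s]$ for the relevant powers of two $s$. Its total size is $\bigO(n)$ after contracting levels at which nothing merges. Edges of equal length may need extra care so that no level is lost.

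Next, I would build the path bottom-up over this hierarchy. For a component, I first need the DFS order of \cref{lem:easypath} on the EMST of the representative component $K$ starting at $p_K$. This order is obtained by a traversal of the subtree $T[K]$, which takes $\bigO(|K|)$ time. I then order the children $K_{j,m}$ by the position of their attaching point $q_j$ in that DFS order, and I concatenate the already-built child paths, stored as linked lists, in $\bigO(1)$ per child. Every point belongs to exactly one $K$ at each level, so a naive count gives $\bigO(n)$ work per level and up to $\bigO(n)$ levels, for $\bigO(n^2)$ in total. That already suffices for general $d$.

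The main obstacle is the case $d=2$, where the naive level-by-level accounting can reach $\Theta(n^2)$ when the EMST edge lengths span many scales. To handle it, I would charge the work at a node only to the tree edges that are newly merged there, not to all points of the component. Skipping levels where nothing changes, the DFS on $K$ restricted to the edges of $T[K]$ of length in $(\sigma/2,\sigma]$ between subcomponents touches each edge of $T$ once over the whole recursion. With linked-list concatenation, this gives $\bigO(n)$ total work beyond the initial sorting. The running time is therefore dominated by the $\bigO(n\log n)$ Delaunay and MST computation.
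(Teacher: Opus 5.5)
Your $\bigO(n^2)$ bound for general $d$ is fine. Prim's algorithm gives the EMST in $\bigO(n^2)$. Re-traversing $T[K]$ for every inner subcomponent then costs $\bigO(n)$ per nonempty power-of-two scale, and there are at most $n-1$ such scales.

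The $\bigO(n\log n)$ bound for $d=2$, however, has a genuine gap. To order the children of an inner node $K$ of the auxiliary tree, the construction in \cref{mat3.2} needs the relative positions of the attaching points $q_j$ in the DFS order of \cref{lem:easypath} on all of $T[K]$, started at $p_K$. That order is determined by the short edges inside $K$, all of length at most $\sigma/2$. Your charging step traverses ``the edges of $T[K]$ of length in $(\sigma/2,\sigma]$'', but $T[K]$ has no such edges. A traversal of the long edges between subcomponents (the auxiliary tree $S$) only tells you at which $q_j$ each child hangs, not how the $q_j$ are ordered within $K$. Storing the subcomponents' orders as linked lists does not help, since comparing two positions in a list takes time linear in its length. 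So, as written, you still pay $\Theta(|K|)$ per inner component and scale, which is exactly the $\Theta(n^2)$ behaviour you set out to avoid.

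The missing ingredient, which the paper's proof uses, is to compute one DFS numbering of $T$ in $\bigO(n)$ time at the start and reuse these ranks at every scale. The paper stores post-order ranks of a single traversal of $T$. To make this consistent, root $T$ at a point $r$. At every scale, choose the root of the auxiliary tree to be the subcomponent containing the vertex closest to $r$, and take that vertex as its representative. Then every $p_K$ is the vertex of $K$ closest to $r$. Since $T[K]$ is a connected subtree of $T$, the global DFS (pre-)order restricted to $K$ is a valid DFS order of $T[K]$ starting at $p_K$.

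Ordering the children of $K$ then reduces to sorting their attaching points by precomputed rank. This costs $\bigO(c\log c)$ for $c$ children, hence $\bigO(|B_i|\log|B_i|)$ for the edges of a power-of-two bucket $B_i$, and $\bigO(n\log n)$ overall. With this, your bottom-up linked-list merging goes through, and the EMST computation dominates the running time.
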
\begin{proof}
    To construct the above path algorithmically we first compute $T=\text{EMST}(\P)$ this can be done in time $\bigO(n\log n)$ in $\mathbb{R}^2$ otherwise in $\bigO(n^2)$. Then $T$ is traversed starting at an arbitrary point and for each point we store when it was visited in a post-order traversal. We call this number the rank of the point. The stored information is used to combine the components from the leaves to the root as well as the construction of the paths $q_1,\dots ,q_t$ on the root components of the subtrees. This is done in $\bigO(n)$ time. Then the edges of $T$ are sorted into buckets by their length. An edge $e$ is sorted into bucket $B_i$ if $\lceil\log(\|e\|)\rceil=i$.  This $i$ is called the index of the bucket. Let $\mathcal{B}$ be the set of all buckets. For every point $p$ a path is created only containing $p$. We call such a path a component-path. Now we iterate over the buckets, starting with the bucket with the lowest index in increasing order. Every edge in the bucket with index $i$ connects two component-paths. For every component-path we store the smallest rank of a point in the path and use it to identify leaves in the tree induced by the edges of the bucket between the component-paths. Starting from the leaves, the component-paths are merged. For every edge incident to a point of a component-path the children component-paths are connected by edges according to the rank of the endpoints of the edges inside the parent component-path. All of this can be done in time $\bigO(\vert B_i\vert  \log \vert B_i\vert )$. As $\vert\bigcup \mathcal{B} \vert =n$ the total time for merging all component-paths for all buckets is 
    \[\sum_{B_i\in\mathcal{B}}\bigO(\vert B_i\vert\log \vert B_i\vert)=\bigO(n\log n).\qedhere\]
\end{proof}

\section{Definitions and Related Work of Graph Parameters}

\subsection{Undirected Graph Parameters}

\paragraph*{Tree-width and path-width}
Tree-width and path-width as a measure how ``tree-like'' (resp. path-like) have already been defined in the 1980s \cite{DBLP:journals/jct/RobertsonS83}. 
An undirected graph $G=(V_G,E_G)$ has tree-width (path-width) at most $k$, if there is a pair $T, \mathcal{X}$ where $T=(V_T, E_T)$ is a tree (path) and $\mathcal{X}=\{ X_v \mid v \in V_T \}$ is a set of vertex sets $X_v \subseteq V_G$ such that 
\begin{itemize}
	\item $\bigcup_{v\in V_T} X_v = V_G$
	\item For every edge $\{a,b\} \in E_G$ there is a vertex $v \in V_T$, such that $a,b \in X_v$
	\item For every vertex $a \in V_G$, the by vertices $\{v \in V_T \mid a \in X_v\}$ induced subgraph $T$ connected
	\item $|X_v| - 1 \leq k$ for all $v \in V_T$.
\end{itemize}

The pair $(T, \mathcal{X})$ is then called a \emph{tree decomposition} (path decomposition) of $G$ and $k$ is the width of  $(T, \mathcal{X})$.

Many problems which are NP-hard in general are computable in polynomial time on  graphs with bounded tree-width, by using a dynamic programming algorithm on the tree decomposition. Some of these algorithms and a general introduction how to construct them can be found in \cite{cygan2015}. Further, every problem which is describable in monadic second order logic with vertex and edge quantification is solvable in polynomial time for graphs with bounded tree-width \cite{CO00}. 

Path-width, requiring a path structure instead of a tree-like structure, is more restricted than tree-width. 

\paragraph*{Branch-width}

The \emph{branch-width} of a graph has been introduced by Robertson and Seymour in 1991 \cite{RobertsonS91}.

A branch-decomposition of an undirected graph $G=(V_G,E_G)$ is a pair $(T, \tau)$, where $T$ is an
unrooted binary tree and $\tau$ is a bijection from the set of leaves of $T$ to $E_G$. The
order of an edge $e$ of $T$ is the number of vertices $u$ of $G$ such that there are
leaves $t_1$, $t_2$ of $T$ in different components of $T-e$, with $\tau( t_1)$, $\tau(t_2)$ both incident with $u$. The width of $(T, \tau)$ is the maximum order of the edges of $T$, and the branch-width $\bw(G)$ of $G$ is the minimum width of all branch-decompositions of $G$.

\paragraph*{Cut-width}

Cut-width is an even more restricted parameter than path-width~\cite{KORACH199397}. 
It is defined in~\cite{Chung85} as follows.

Let $G=(V_G,E_G)$ be an undirected graph. A numbering $\pi$ of $G$ is defined as a bijection $\pi: V_G\rightarrow \{1,\dots, \vert V_G\vert\}$. The width of a numbering $\pi$ is defined as $\max_{i\in\{1,\dots,\vert V_G\vert\}}\vert\{\{u,v\}\in E_G\mid \pi(u)\leq i < \pi(v)\}\vert$. The cut-width $\cw(G)$ of $G$ is defined as the minimum width among all numberings of $G$.

\paragraph*{Clique-width}
The \emph{clique-width} of a graph has been introduced by Courcelle and Olariu in 1994 \cite{CO00}. The class $\CW_k$ of graphs with clique-width at most $k$ consists of labeled graphs $G=(V,E,\lab)$ with labeling function $\lab : V \rightarrow [k]$ which can be constructed as follows: 

\begin{itemize}
\item Creation of a new vertex  with label $a$, denoted by $\bullet_a$, for some $a \in [k]$ is in $\CW_k$. 
\item Disjoint union of two vertex-disjoint labeled graphs $G$ and $H$
denoted by $G\oplus H$. 
\item For $G=(V,E,\lab) \in \CW_k$, inserting an edge from every vertex with label $a$ to every vertex with label $b$, where $a,b \in [k]$, $a\neq b$, denoted by $\alpha_{a,b} := (V, E', \lab)$ 

 is in $\CW_k$.
\item For $G=(V,E,\lab) \in \CW_k$, change label $a$ into label $b$, denoted by $\rho_{a\to b} = (V,E, \lab')$ with $ \lab'(u) =\lab(u)$ if $ \lab(u) \not= a$ and $\lab'(u) = b$ if $\lab(u)=a$ 
for every $u \in V_G$ is in $\CW_k$.
\end{itemize}

The {\em clique-width} of an unlabeled graph $G=(V,E)$ is the smallest integer
$k$, such that there is a mapping  $\lab : V \to [k]$ such that
the labeled graph  $(V,E,\lab)$ is in $\CW_k$.

An expression $X$ built with the operations defined above is called a {\em clique-width $k$-expression}. The tree structure for the parameter of clique-width is given by the tree structure of that expression.

Also for clique-width, many problems which are NP-hard in general are computable in polynomial or even linear time on graphs with bounded clique-width. However, it can be seen as a more general parameter than tree-width, since there are sparse and dense graphs with bounded clique-width. Courcelle and Olariu \cite{CO00} showed that any problem, which is describable in monadic second order logic with only vertex quantifications is computable in linear time on graphs with bounded clique-width. 

\paragraph*{Rank-width}

The rank-width is closely related to clique-width and was introduced by Oum and Seymour \cite{OumS06}. The definition given here is taken from \cite{Oum17}.

For an undirected graph $G=(V_G,E_G)$ and a subset $X\subseteq V_G$ we define $\rho_G(X)$ to be the rank of a $\vert X\vert \times \vert V(G)-X\vert$ 0-1 matrix $A_X$ over the binary field where the entry of $A_X$ on the $i$-th row and $j$-th column is 1 if and only if the $i$-th vertex in $X$ is adjacent to the $j$-th vertex in $V(G)-X$. If $X=\emptyset$ or $X=V_G$, then $\rho_G(X)=0$.

A rank-decomposition of $G$ is a pair $(T,\tau)$ where $T$ is an unrooted binary tree with at least two nodes and $\tau$ is a bijection from the set of leaves of $T$ to $V_G$. For each edge $e$ of $T$, $T-e$ induces a partition $\{A_e,B_e\}$ of the leaves of $T$ and we define the width of $e$ as $\rho(\tau(A_e))$. The width of a rank-decomposition $(T,\tau)$ is the maximum width of edges in $T$. The rank-width $\rw(G)$ of $G$ is the minimum width among all rank-decompositions of $G$.

\paragraph*{Tree-depth}

While the tree-width measures how ``tree-like'' a graph is and the path-width how ``path-like'' a graph is, the tree-depth could be related to how ``star-like'' a graph is. It has been defined in \cite{Nesetril} recursively as follows: 

The tree-depth of a graph $G$ with connected components $G_1,\dots,G_\ell$ can be defined recursively as:
    \[\td(G)=\begin{cases}
        1& \text{if } |V(G)|=1\\
        1 + \min_{v\in V}\td(G-v) & \text{if $G$ is connected and $|V(G)|>1$}\\
        \max_{i\in\{1,\dots,\ell\}}\td(G_i) & \text{otherwise}
    \end{cases}\]
Based on this definition, it is easy to check that graphs of tree-depth $2$ are forests of stars.

\subsection{Directed Graph Parameters}

\paragraph*{Directed tree-width}
The first directed version for tree-width is the directed tree-width, defined in \cite{JRST01} by Johnson et al. and in several other publications \cite{Ree99,JRST01a,DE14,BG18} equivalently. Although the definition seems very different from the undirected version at first sight, several important properties for undirected tree-width on undirected graphs remain fulfilled using directed tree-width on directed graphs and there has been much research regarding this parameter \cite{DBLP:journals/tcs/Bodlaender98, Adl07, GR19, GR18c, Wie19}.

The notion of directed tree-width we use in our proofs is the one introduced by Giannopoulou, Kawarabayashi, Kreutzer, and Kwon in \cite{DBLP:conf/soda/GiannopoulouKKK22}. Kim, Hatzel and Kreutzer showed in \cite{kim2025directedtreewidthclosedtaking} that this definition is equivalent up to a constant factor to all other common definitions of directed tree-width.

A digraph $D$ has directed tree-width $\dtw(D)$ at most $k$ if there is a triple $\mathcal{T}=(T, \mathcal{X}, \mathcal{W})$, where $T$ is a rooted directed tree, $\mathcal{X}: V(T)\rightarrow 2^{V(D)}$ and $\mathcal{W}: E(T)\rightarrow 2^{V(D)}$ such that: 
\begin{itemize}
     \item $\{\mathcal{X}(t) \mid t\in V(T)\}$ is a partition of $V(D)$ into possibly empty sets such that $\mathcal{X}(r)\neq \emptyset$, where $r$ is the root of $T$
     \item for all $e=(s,t)\in E(T)$, there is no closed walk in $D-\mathcal{W}(e)$ containing a vertex of $\mathcal{X}(T_t)$ and a vertex of $V(D)-\mathcal{X}(T_t)$, and
     \item $\max\{|\mathcal{X}(t)\cup \bigcup_{e\sim t} \mathcal{W}(e)|-1 \mid t\in V(T)\} \leq k$ where $e\sim t$ means that $t$ is an endpoint of the edge $e$.
\end{itemize}
Here, $T_t:=T[\{t'\in V(T)\mid \text{$t'$ is reachable from $t$ by a directed path in $T$}\}]$ for every $t\in V(T)$ and $\mathcal{X}(S):=\bigcup_{t\in V(S)}\mathcal{X}(t)$.
The triple $\mathcal{T}=(T, \mathcal{X}, \mathcal{W})$ is called a \emph{directed tree-decomposition} of $D$ and $k$ is the width of $\mathcal{T}$.

\paragraph*{Directed path-width}
The notion of directed path-width was
introduced by Reed, Seymour, and Thomas around 1995 (cf.\  \cite{Bar06}) and relates to directed
tree-width introduced by Johnson, Robertson, Seymour, and Thomas in
\cite{JRST01}.
It is strongly related to several other graph parameters~\cite{DBLP:journals/mst/GurskiR19}. 

A digraph $D=(V_D, E_D)$ has directed path-width $\dpw(D)$ at most $k$, if there is a sequence $\mathcal{X}=(X_1, \ldots, X_r)$ of subsets of $V$, called {\em bags}, such 
that 

\begin{itemize}
\item $X_1 \cup \ldots \cup X_r ~=~ V_D$, 
\item for each $(u,v) \in E_D$ there is a pair $i \leq j$ such that
  $u \in X_i$ and $v \in X_j$
\item for all $i,j,\ell$ with $1 \leq i < j < \ell \leq r$ it holds
  $X_i \cap X_\ell \subseteq X_j$
  \item $|X_i|-1 \leq k$ for $1 \leq i \leq r$
\end{itemize}
The sequence $\mathcal{X}$ is called directed path decomposition of $D$ and $k$ is the width of $\mathcal{X}$.

\paragraph*{DAG-width}
Though directed tree-width is a natural extension of its undirected version, it does not seem equally algorithmically powerful. Thus, other directed variants of tree-width have been introduced, amongst them the DAG-width \cite{BDHK06,Obd06}, which admits FPT algorithms for several problems such as hamiltonicity and disjoint paths \cite{BerwangerDHKO12}.

A digraph $D=(V_D,E_D)$ has DAG-width at most $k$ if there is a pair $(T, \mathcal{X})$ where $T=(V_T, E_T)$
is a  directed acyclic graph (DAG) and $\mathcal{X} = \{X_u \mid X_u \subseteq V_D, u \in V_T\}$ is a
family of subsets of $V_D$ such that:

\begin{itemize}
	\item $\bigcup_{u \in V_T} X_u = V_D$.
	\item For all vertices $u,v,w \in V_T$ with $u \succcurlyeq_T v \succcurlyeq_T w$,
it holds that $X_u \cap X_w \subseteq X_v$.
	\item For all edges $(u,v) \in E_T$ it holds
that $X_u \cap X_v$ guards $X_{\succcurlyeq_v} \setminus X_u$,
where $X_{\succcurlyeq_v}$ is the union of all sets $X_w$, such that there is a path from $v$ to $w$ in $T$.
For any source $u$, $X_{\succcurlyeq_u}$
is guarded by $\emptyset$. \footnote{Let $V' \subseteq V_D$, then a set $W \subseteq V_D$ {\em guards} $V'$ if for all $(u,v) \in E_D$
it holds that if $u \in V'$ then $v \in V' \cup W$.}
    \item $ |X_u| \leq k$ for all $u\in V_T$
\end{itemize}

\section{Tabular Overview of Dilation Bounds for Different Graph Parameters}

\begin{table}[H]
    \centering
\begin{tabular}{c|c|c|c}

\textbf{Parameter} & \textbf{Dilation upper bound} & \textbf{Dilation lower bound} & \textbf{Reference}\\
\hline
tree-width  & $\bigO(n/k^{d/(d-1)})$  & $\Omega(n/k^{d/(d-1)})$ & \cite{DBLP:conf/compgeom/BuchinRS25} \\
path-width & $\bigO(n/k^{d/(d-1)})$  & $\Omega(n/k^{d/(d-1)})$ & \cref{thm:pwupper}\\
branch-width $\geq 2$ & $\bigO(n/k^{d/(d-1)})$   & $\Omega(n/k^{d/(d-1)})$ &\cref{cutBranchwidth}\\ 
cut-width  & $\bigO(n/k^{d/(d-1)})$   & $\Omega(n/k^{d/(d-1)})$ &\cref{cutBranchwidth} \\
band-width & open  &   $\Omega(n/k^{d/(d-1)})$ & - \\
clique-width $\geq 2$ & 1   & 1 & - \\
rank-width & 1 & 1 & - \\
clique-width $\cap$ planar & $\bigO(n/k^{2})$ (in $\mathbb{R}^2$)   & $\Omega(n/k^{2})$ (in $\mathbb{R}^2$)& \cref{cliqueRankwidth}  \\
rank-width $\cap$ planar & $\bigO(n/k^{2})$ (in $\mathbb{R}^2$)   & $\Omega(n/k^{2})$ (in $\mathbb{R}^2$) & \cref{cliqueRankwidth} \\
tree-depth & - & $1/\varepsilon$ for $\varepsilon<1$ & \cref{thm-treedepth} \\
directed tree-width  & $\bigO(n/k^{d/(d-1)})$  & $\Omega(n/k^{d/(d-1)})$ & \cref{thm:dtw_lower} \\
directed path-width  & $\bigO(n/k^{d/(d-1)})$  & $\Omega(n/k^{d/(d-1)})$ & Cor. \ref{cor-dpw-lower},\ref{cor-dpw-upper} \\
DAG-width  & $\bigO(n/k^{d/(d-1)})$  & $\Omega(n/k^{d/(d-1)})$ & Cor. \ref{cor-dagw-upper},\ref{cor-dagw-lower} \\

\end{tabular}
    \caption{Overview over dilation bounds for (directed) spanners bounded in different parameters where $n$ is the number of points in $\mathbb{R}^d$ and $k$ bounds the width-parameter.}
    \label{tab:placeholder}
\end{table}

\end{document}